\newtheorem{theorem}{Theorem}%[section]
\newtheorem{lemma}{Lemma}
\newtheorem{observation}{Observation}%[section]
\newtheorem{corollary}{Corollary}
\newenvironment{customlem}[1]
{\innercustomlem}
{\endinnercustomlem}
\newenvironment{customthm}[1]
{\innercustomthm}
{\endinnercustomthm}
\newtheoremstyle{claim}{5pt}{5pt}{}{}{\bf}{:}{.5em}{\thmname{#1}\thmnumber{ #2}\thmnote{ (#3)}}
\theoremstyle{claim}
\newtheorem{claim-alt}{Claim}
\newtheoremstyle{freethm}{3pt}{3pt}{}{}{\bfseries}{}{.5em}{\thmname{#1}\thmnumber{ #2}\thmnote{ (#3)}.\\}
\theoremstyle{freethm}
\newtheorem{construct}{Construction}%[chapter]
\newtheorem{remark}{Remark}
\theoremstyle{claim}
\newenvironment{customclaim}[1]
{\innercustomclaim}
{\endinnercustomclaim}
\newcommand{\openpack}{\textsc{Open Packing}}
\newcommand{\tdset}{\textsc{Total Dominating Set}}
\newcommand{\maxopenpack}{\textsc{Max-Open Packing}}
\newcommand{\mintdset}{\textsc{Min-Total Dominating Set}}
\newcommand{\indset}{\textsc{Independent Set}}
\newcommand{\maxindset}{\textsc{Max-Independent Set}}
\tikzset{
	block1/.style={circle,inner sep=1.25pt, minimum size=0.75ex, draw =black,fill=white},
	box1/.style={rectangle, minimum size=0.75ex,align=center, draw =black,fill=white}
}
\begin{document}
\title{Open Packing in Graphs: Bounds and Complexity}
%
%\titlerunning{Abbreviated paper title}
% If the paper title is too long for the running head, you can set
% an abbreviated paper title here
%
\author{Shalu M.\ A. %\orcidID{0000-0002-4399-0791}
	\and Kirubakaran V.\ K. %\orcidID{0000-0002-3057-457X}
	}
	\date{\small Indian Institute of Information Technology, Design and Manufacturing,\\ \small Kancheepuram, Chennai, India.}
%
%\authorrunning{Shalu M.\ A.\ \and Kirubakaran V.\ K.}
% First names are abbreviated in the running head.
% If there are more than two authors, 'et al.' is used.
%\institute{Indian Institute of Information Technology, Design and Manufacturing,\\ Kancheepuram, India.\\
	%\email{\{shalu,mat19d002\}@iiitdm.ac.in}}
%
\maketitle              % typeset the header of the contribution
\begin{abstract}
Given a graph $G(V,E)$, a vertex subset $S$ of $G$ is called an open packing in $G$ if no pair of distinct vertices in $S$ have a common neighbour in $G$. The size of a largest open packing in $G$ is called the open packing number of $G$ and is denoted by $\rho^o(G)$. It would be interesting to note that the open packing number is a lower bound for the total domination number in graphs with no isolated vertices [Henning and Slater, 1999]. Given a graph $G$ and a positive integer $k$, the decision problem \openpack\ tests whether $G$ has an open packing of size at least $k$. The optimization problem \maxopenpack\ takes a graph $G$ as input and finds the open packing number of $G$.\\
It is known that \openpack\ is NP-complete on split graphs (i.e., the class of $\{2K_2,C_4,C_5\}$-free graphs) [Ramos et al., 2014]. In this work, we complete the study on the complexity (P vs NPC) of \openpack\ on $H$-free graphs for every graph $H$ with at least three vertices by proving that \openpack\ is (i) NP-complete on $K_{1,3}$-free graphs and (ii) polynomial time solvable on $(P_4\cup rK_1)$-free graphs for every $r\geq 1$. In the course of proving (ii), we show that for every $t\in \{2,3,4\}$ and $r\geq 1$, if $G$ is a $(P_t\cup rK_1)$-free graph, then $\rho^o(G)$ is bounded above by a linear function of $r$. In addition, we find near-optimal upper bounds for the total domination number in the class of $(P_t\cup rK_1)$-free graphs for every $t\in \{3,4\}$ and $r\geq 1$. Moreover, we show that \openpack\ parameterized by solution size is W[1]-complete on $K_{1,3}$-free graphs, and \maxopenpack\ is hard to approximate within a factor of $\displaystyle (n^{\frac{1}{2}-\delta})$ for any $\delta>0$ on $K_{1,3}$-free graphs unless P = NP. Further, we prove that \openpack\ is (a) NP-complete on $K_{1,4}$-free split graphs and (b) polynomial time solvable on $K_{1,3}$-free split graphs. We prove a similar dichotomy result on split graphs with degree restrictions on the vertices in the independent set of the clique-independent set partition of the split graphs.\\
 
 \noindent\textit{Keywords}: Total dominating set, Open packing, $H$-free graphs, NP-complete, W[1]-complete, Approximation hardness.
\end{abstract}
\section{Introduction}
In a graph $G(V,E)$, a vertex subset $D$ of $G$ is called a {\it total dominating set} in $G$ if every vertex in $V(G)$ is adjacent to some vertex in $D$. In other words, $V(G)=\cup_{u\in D}N_G(u)$, where $N_G(u)$ denotes the set of vertices in $V(G)$ that are adjacent to $u$ in $G$. This implies that if $D$ is a total dominating set in $G$, then $|D\cap N_G(u)|\geq 1$ for every vertex $u\in V(G)$. Note that by the definition of a total dominating set, it is evident that a graph $G$ admits a total dominating set if and only if $G$ has no isolated vertices. The cardinality of a smallest total dominating set in $G$ is called the {\it total domination number}, $\gamma_t(G)$, of $G$. The total domination problem is useful in facility location problems, monitoring computer networks, and electronic communications~\cite{fundamentalsofdomination1998}. In this article, we study the total dominating set problem and its dual problem, called open packing in graphs \cite{Henning1998}. A vertex subset $S$ of a graph $G$ is called an {\it open packing} in $G$ if, for every pair of distinct vertices $x,y\in S$, $N_G(x)\cap N_G(y)$ is empty. The {\it open packing number} of $G$, denoted by $\rho^o(G)$, is the size of a largest open packing in $G$. Note that if $S$ is an open packing in $G$, then $|S\cap N_G(u)|\leq 1$ for every $u\in V(G)$. This validates the primal-dual relation between the total dominating set and the open packing, and thus $\rho^o(G)\leq \gamma_t(G)$. The decision and the optimisation versions of the problems of total dominating set and open packing are as follows. 
\begin{center}
	\begin{tabular}{|m{8.75cm}| }
		%\vspace{0.5cm}
		\hline \\[-9pt]
		\tdset\ \\[2pt]
		%\vspace{0.5cm}
		\hline \\[-9pt]
		\textbf {Instance\hspace{.08cm}: }A graph $G$ and a positive integer $k\leq |V(G)|$.\\
		\textbf {Question: }Does $G$ has a total dominating set of size\\ \hspace{1.93cm}  at most $k$? \\
		\hline
	\end{tabular}
	\hspace{1.5mm}
	\begin{tabular}{|m{4.9cm}| }
		%\vspace{0.5cm}
		\hline \\[-9pt]
		\mintdset\ \\[2pt]
		%\vspace{0.5cm}
		\hline \\[-9pt]
		\textbf {Instance: }A graph $G$.\\
		\textbf {Goal \hspace{.5cm}: }Find $\gamma_t(G)$. \\
		~\\
		\hline
	\end{tabular}
	\vspace{3mm}
	\\
	\begin{tabular}{|m{8.75cm}| }
		%\vspace{0.5cm}
		\hline \\[-9pt]
		\openpack\ \\[2pt]
		%\vspace{0.5cm}
		\hline \\[-9pt]
		\textbf {Instance\hspace{.08cm}: }A graph $G$ and a positive integer $k\leq |V(G)|$.\\
		\textbf {Question: }Is there an open packing of size at least $k$ \\ \hspace{1.8cm} in $G$? \\
		\hline
	\end{tabular}
	\hspace{1.5mm}
	\begin{tabular}{|m{4.9cm}| }
		%\vspace{0.5cm}
		\hline \\[-9pt]
		\maxopenpack\ \\[2pt]
		%\vspace{0.5cm}
		\hline \\[-9pt]
		\textbf {Instance: }A graph $G$.\\
		\textbf {Goal \hspace{.5cm}: }Find $\rho^o(G)$. \\
		~\\
		\hline
	\end{tabular}
\end{center} 
Total dominating set is one of the well-studied problems in the literature and an extensive list of results can be found in \cite{parameterized-downey-2013,fundamentalsofdomination1998,hennings2013-td}. In the context of our research, it would be interesting to note that \tdset\ is NP-complete on $K_{1,5}$-free split graphs~\cite{White1995} and polynomial time solvable on $K_{1,4}$-free split graphs~\cite{RENJITH2020246}. Also, \tdset\ is NP-complete on $K_{1,3}$-free graphs~\cite{McRae1995} and an optimal total dominating set of a chordal bipartite graph can be found in polynomial time~\cite{damaschke}. The study on open packing of graphs was initiated by Henning and Slater~\cite{Henning1999}, and Rall~\cite{Rall2005} proved that for every non-trivial tree $T$, $\gamma_t(G)=\rho^o(G)$. In a recent work~\cite{publish} (yet to be published), we extended this result by proving that the total domination number and the open packing number are equal when the underlying graph is a chordal bipartite graph with no isolated vertices. It is also known that \openpack\ is NP-complete (NPC in short) on split graphs (equivalently, the class of $\{2K_2,C_4,C_5\}$-free graphs)~\cite{Igor2014} and planar bipartite graphs of maximum degree three (a subclass of $K_3$-free graphs)~\cite{publish}. In this work, we complete the study of \openpack\ on $H$-free graphs for every graph $H$ with at least three vertices by proving the following theorems.
\begin{theorem}
	Let $H$ be a graph on three vertices. Then, an optimal open packing in $H$-free graphs can be found in polynomial time if and only if $H\ncong K_3$ unless P = NP.
	\label{h-on-three-vertices}
\end{theorem}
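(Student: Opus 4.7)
The graphs on three vertices are $K_3$, $P_3$, $K_1\cup K_2$, and $3K_1=\overline{K_3}$. The ``only if'' direction is already in hand: \openpack\ is NP-complete on planar bipartite graphs of maximum degree three, a subclass of $K_3$-free graphs, so the $K_3$-free case is settled. My plan for the remaining three cases is to exploit a single uniform structural observation: \emph{for any graph $G$ and any open packing $S$ in $G$, the induced subgraph $G[S]$ has maximum degree at most $1$}. Indeed, if $u,v,w\in S$ with $uv,uw\in E(G)$, then $u\in N_G(v)\cap N_G(w)$, contradicting the open packing property. Hence $G[S]$ is always the disjoint union of a (possibly empty) matching and some vertices that are isolated within $G[S]$.

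For $H\cong P_3$, the graph $G$ is a disjoint union of cliques. From any clique of size $\geq 3$, any two chosen vertices share the remaining clique vertices as common neighbors, so at most one vertex per such clique can lie in $S$; both vertices of each $K_2$-component may always be taken, and every isolated vertex contributes. Scanning the components in linear time returns $\rho^o(G)$.

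For $H\cong 3K_1$, let $m$ and $i$ denote the numbers of matching edges and isolated vertices in $G[S]$. Picking one endpoint from each matching edge together with every isolated vertex produces an independent set in $G[S]\subseteq G$ of size $m+i$, while one endpoint from three distinct matching edges would give an independent set of size $3$. Since $\alpha(G)\leq 2$, both $m+i\leq 2$ and $m\leq 2$ hold, forcing $|S|=2m+i\leq 4$. A brute force over subsets of size at most four then solves \maxopenpack\ in $O(n^4)$ time.

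The case $H\cong K_1\cup K_2$ is the main obstacle, and I expect the delicate step of the proof to live here. First I would show that such a $G$ is either edgeless (so $\rho^o(G)=|V(G)|$) or has no isolated vertex, because an isolated vertex together with any edge of $G$ would form an induced $K_1\cup K_2$. In the second scenario I will prove $\rho^o(G)\leq 2$ by splitting on whether $G[S]$ contains an edge. If $uv$ is such an edge and $w\in S\setminus\{u,v\}$, then the max-degree-$1$ structure of $G[S]$ forces $w$ to be non-adjacent to both $u$ and $v$, making $\{u,v,w\}$ an induced $K_1\cup K_2$ in $G$. If $G[S]$ has no edge, then $S$ is independent, so any neighbor $w$ of some $u\in S$ lies outside $S$ and, by the open packing condition, is non-adjacent to every other packing vertex; picking any second $v\in S$ again produces an induced $K_1\cup K_2$ on $\{v,u,w\}$. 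Once $\rho^o(G)\leq 2$ is secured, scanning each edge for a pair of endpoints with no common neighbor completes the algorithm. The remaining pieces are essentially packaging around the degree observation on $G[S]$.
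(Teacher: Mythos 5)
Your proposal is correct, but it reaches the three polynomial-time cases by a genuinely different and more self-contained route than the paper. The paper obtains Theorem~\ref{h-on-three-vertices} as a corollary of its general $(P_t\cup rK_1)$-free machinery: Remark~\ref{rem-p3-free} ($P_3$-free components are cliques), Remark~\ref{rem-k2k1-op-bound} ($(K_2\cup K_1)$-free graphs are $P_4$-free, so Lemma~\ref{p4-td-bound} applies), and Corollary~\ref{op-td-bound-cor} with $r=3$ (connected $3K_1$-free graphs have $\rho^o\leq 2$, itself derived from the inductive bound of Lemma~\ref{p3rk1-op-bound}), all funnelled through the generic bounded enumeration of Lemma~\ref{lem-tc-bound} and summation over components to give Theorem~\ref{thm-p4-ptime}. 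You instead derive everything from the single fact that $G[S]$ has maximum degree at most one (the paper's Observation~\ref{obs-induced-s}): an explicit per-component formula in the $P_3$-free case, the chain $|S|=2m+i\leq m+(m+i)\leq 2\alpha(G)\leq 4$ in the $3K_1$-free case (which correctly handles disconnected graphs such as $2K_2$, where $\rho^o=4$ is attained), and a clean two-case analysis giving $\rho^o\leq 2$ in the $(K_2\cup K_1)$-free case; note that your Case B is also exactly what licenses the final ``scan the edges'' step, since it shows a size-two open packing there cannot be independent. Your argument is more elementary, avoids the induction on $r$ entirely, and even yields a sharper linear-time algorithm for $P_3$-free graphs; what it does not buy is the uniform treatment of $(P_t\cup rK_1)$-free classes for arbitrary $r$ that the paper needs for Theorem~\ref{thm-h-dic}. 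Two trivial points: the $3K_1$-free brute force is $O(n^5)$ rather than $O(n^4)$ once each candidate set is tested via Lemma~\ref{thm-openpacking-testing}, and your hardness source (planar bipartite graphs of maximum degree three) is the same fact the paper invokes in Remark~\ref{rem-triangle}; neither affects correctness.
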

\begin{theorem}
	For $p\geq 4$, let $H$ be a graph on $p$ vertices. Then, \openpack\ is polynomial time solvable on the class of $H$-free graphs if and only if $H\in \{pK_1,(K_2\cup (p-2)K_1),(P_3\cup (p-3)K_1),(P_4\cup(p-4)K_1)\}$ unless P = NP.
	\label{thm-h-dic}
\end{theorem}
\noindent To prove the above theorem, we proved the following results.
\begin{enumerate}
	\item[(i)] \openpack\ is NP-complete on $K_{1,3}$-free graphs.
	\item[(ii)] For $t\in \{2,3,4\}$ and $r\geq 1$, if $G$ is a connected $(P_t\cup rK_1)$-free graph, then
	\begin{itemize}
		\item[(a)] $\rho^o(G)\leq 2r+1$ and $\gamma_t(G)\leq 2r+2$ provided $t=4$.
		\item[(b)] $\rho^o(G)\leq 2r$ and $\gamma_t(G)\leq 2r+1$ provided $t=3$.
		\item[(c)] $\rho^o(G)\leq \max\{r+1,2(r-1)\}$ provided $t=2$.
	\end{itemize} 
\end{enumerate}
We also show that the bounds on $\rho^o(G)$ given in (ii) are tight. Moreover, we use the bounds on the total domination number in (a) and (b) to show that the results equivalent to those of Theorems~\ref{h-on-three-vertices} and \ref{thm-h-dic} hold for \tdset\ in $H$-free graphs as well (see Theorem~\ref{thm-h-dic-tdset}). This implies that \tdset\ and \openpack\ are of the same nature in the view of classical complexity (P vs NPC) in $H$-free graphs for every graph $H$. Table~\ref{table-comparison} provides a comparative study of complexity (P vs NPC) between \tdset\ and \openpack\ in certain classes of graphs. The rest of the results in this article are motivated by Table~\ref{table-comparison} and the question, `Does there exist a graph class where these two problems `differ' from each other in the view of classical complexity (P vs NPC)?'
\begin{table}[h]
	\centering
	\captionsetup{format=hang}
	\caption{Comparison between \tdset\ and \openpack}
	\begin{tabular}{ l c c}
		\hline\\[-7pt]
		Graph Classes~~~~~~~~~~~~~& \tdset\ ~~~& \openpack\ \\
		\hline\\[-6pt]
		Chordal bipartite graphs~~~ & P~\cite{3K1,KRATSCH2000} &P~\cite{publish}\\
		$P_4$-free graphs & P~[Folklore] & P~[Folklore]\\
		Bipartite graphs & NPC~\cite{npc-tdset-bigraphs1983} & NPC~\cite{publish,Shalu17}\\
		Split Graphs & NPC~\cite{CORNEIL1984}& NPC~\cite{Igor2014}\\
		$K_{1,3}$-free Graphs& NPC~\cite{McRae1995} & NPC[*]\\
		\hline
	\end{tabular}
	\label{table-comparison}
\end{table}   \\
\noindent We answered this question in the affirmative (see Table~\ref{table-compar-diff}) with the following list of dichotomy results in subclasses of split graphs.
\begin{enumerate}
	\item \openpack\ is NP-complete on $K_{1,r}$-free split graphs for $r\geq 4$, and is polynomial time solvable for $r\leq 3$.
	\item \openpack\ is NP-complete on $I_r$-split graphs (see Section~\ref{sec:prelim} for definition) for $r\geq 3$ and is polynomial time solvable for $r\leq 2$.
	\item \tdset\ is NP-complete in $I_r$-split graphs for $r\geq 2$ and is polynomial time solvable when $r=1$ (a minor modification in the reduction by Corneil and Perl~\cite{CORNEIL1984} will also prove that \tdset\ is NP-complete for $I_2$-split graphs). 
\end{enumerate}
\begin{table}[h]
	\centering
	\captionsetup{format=hang}
	\caption{Complexity `difference' between \tdset\ and \openpack\ in subclasses of split graphs}
	\begin{tabular}{ l c c}
		\hline\\[-7pt]
		Graph Classes~~~~~~~~~~~~& \tdset\ ~~~& \openpack\ \\
		\hline\\[-6pt]
		$I_2$-Split Graphs & NPC \cite{CORNEIL1984} & P[*]\footnotemark\\
		$K_{1,4}$-free Split Graphs& P~\cite{RENJITH2020246} & NPC[*]\\
		\hline
	\end{tabular}
	\label{table-compar-diff}
\end{table}
\section{Preliminaries}\label{sec:prelim}\footnotetext{[*] denotes the results in this work}
 We follow West~\cite{west} for terminology and notation. The graphs considered in this work are simple and undirected unless specified otherwise. Given a graph $G(V,E)$, let $n$ and $m$ denote the number of vertices and the number of edges in $G$, respectively. Given a vertex $x\in V(G)$, the (open) neighbourhood of $x$ in $G$ is defined as $N_G(x)=\{y\in V(G)\,:\, xy\in E(G)\}$, and let the degree of a vertex $x$ in $G$ be $deg_G(x)=|N_G(x)|$. The closed neighbourhood of a vertex $x$ in $G$ is defined as $N_G[x]=\{x\}\cup N_G(x)$. We use $N(x)$, $N[x]$ and $deg(x)$ instead of $N_G(x)$, $N_G[x]$ and $deg_G(v)$ respectively, when there is no ambiguity on $G$. A vertex $x$ in $G$ is called an isolated vertex in $G$, if $N_G(x)=\emptyset$. Given $U\subseteq V(G)$, the subgraph of $G$ induced by $U$ is denoted as $G[U]$. Given a graph $H$, $G$ is said to be $H$-free if no induced subgraph of $G$ is isomorphic to $H$. For a vertex $x\in V(G)$, let $E_G(x)$ denote the set of all edges incident on $x$, and for an edge $e\in E(G)$, let $V_G(e)$ denote the end vertices of $e$ in $G$. Note that for $u\in V(G)$ and $e\in E(G)$, the edge $e\in E_G(u)$ if and only if $u\in V_G(e)$. For a graph $G$, the line graph $L(G)$ of $G$ is a graph with $V(L(G))=E(G)$ and two elements $e,e'\in V(L(G))$ are adjacent in $L(G)$ if and only if $V_G(e)\cap V_G(e')\neq \emptyset$. Given two (not necessarily disjoint) graphs $H$ and $H'$, the graph union $H\cup H'$ is defined as $V(H\cup H')=V(H)\cup V(H')$ and $E(H\cup H')=E(H)\cup E(H')$. If $V(H)$ and $V(H')$ are not disjoint, then we call the graph union $H\cup H'$ as {\it merger} of the graphs $H$ and $H'$. For $p\in \mathbb{N}\cup \{0\}$ and a graph $H$, the graph $pH$ is defined as the union of $p$ disjoint copies of $H$. Given two disjoint graphs $G$ and $H$, we say that (i) a vertex $x$ of the graph $G$ is {\it replaced} by the graph $H$ if a graph $G'$ is constructed with $V(G')=(V(G)\setminus \{x\})\cup V(H)$ and $E(G')=(E(G)\setminus \{xy:y\in V(G)\text{ and }xy\in E(G)\})\cup E(H)\cup \{wy:w\in V(H), y\in (V(G)\setminus\{x\})\text{ and }xy\in E(G)\}$, and (ii) a vertex $x$ of the graph $G$ is \textit{identified} with a vertex $y$ of the graph $H$, if the vertex $y$ in $H$ is relabelled as $x$ and the graphs $G$ and $H$ are merged.\\
\noindent A vertex subset $C$ of $G$ is called a {\it clique} in $G$ if every pair of distinct vertices in $C$ is adjacent in $G$. A vertex subset $I$ of $G$ is called an {\it independent set} in $G$ if no pair of vertices in $I$ are adjacent in $G$. The size of a largest independent set in $G$ is called the {\it independence number} of $G$, and is denoted by $\alpha(G)$. Let $P_n$, $C_n$ and $K_n$ denote the path, cycle and complete graph on $n$ vertices, respectively. A set $D\subseteq V(G)$ is called a {\it dominating set} in $G$, if every vertex in $V(G)\setminus D$ is adjacent to some vertex in $D$.  An edge subset $F$ of a graph $G$ is called a {\it matching} in $G$, if $V_G(e)\cap V_G(e')=\emptyset$ for every pair of distinct $e,e'\in F$. The cardinality of a largest matching in $G$ is called the matching number, $\alpha'(G)$, of $G$.\\
A graph $G(V,E)$ is said to be a {\it split graph}, if there exists a partition $V(G)=C\cup I$ such that $C$ is a clique and $I$ is an independent set in $G$ and is denoted as $G(C\cup I, E)$. Note that for $r\geq 2$, if a split graph $G$ is $K_{1,r}$-free, then $|N_G(v)\cap I|\leq r-1$ for every $v\in C$. In accordance with this observation, the class of {\it $I_r$-split graphs} is defined for every natural number $r$ as a split graph $G(C\cup I, E)$ with $deg_G(v)=r$ for every $v\in I$.\\
 Given a graph $G$ and a positive integer $k\leq |V(G)|$, the problem \indset\ asks whether $G$ has an independent set of size at least $k$. Given a graph $G$, the goal of the problem \maxindset\ is to find $\alpha(G)$. For $r\in \mathbb{N}$, given a collection of sets $X_1,X_2,\ldots,X_r$ each of cardinality $q$ for some $q\in \mathbb{N}$ and a subset $M$ of $\prod\limits_{i=1}^rX_i$, the $r$-\textsc{Dimensional Matching} problem asks whether there exists $L\subseteq M$ such that (i) $|L|=q$ and (ii) for every pair of $r$-tuple $x=(x_1,x_2,\ldots,x_r)$ and $y=(y_1,y_2,\dots,y_r)$ in $L$, $x_i\neq y_i$ for $1\leq i\leq r$. For $r\in \mathbb{N}$, given a non-empty set $U$, a set $\mathcal{W}$ of $r$-sized subsets of $U$ and a positive integer $k$, the $r$-\textsc{Hitting Set} problem asks whether there exists a subset $X$ of $U$ such that $|X|\leq k$ and $X\cap W\neq \emptyset$ for every $W\in \mathcal{W}$. \\
 Also, we refer the reader to \cite{param-saket,parameterized-downey-2013} for a brief note on parameterized algorithms, intractability and W-hierarchy. 
\section{\boldmath $H$-free Graphs}\label{sec:h-free}
We dedicate this section to prove the dichotomy result on \openpack\ stated in Theorem~\ref{thm-h-dic}. Observation~\ref{obs-h-free} helps us prove the necessary part of Theorem~\ref{thm-h-dic}. 
\begin{observation}
	For $p\geq 4$, let $H$ be a graph on $p$ vertices such that $H\notin\{P_4\cup (p-4)K_1,P_3\cup (p-3)K_1, K_2\cup (p-2)K_1, pK_1\}$. Then, $H$ contains at least one of $K_3,2K_2,C_4,K_{1,3}$, or $C_5$ as an induced subgraph.
	\label{obs-h-free}
\end{observation}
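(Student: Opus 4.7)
The plan is to prove the contrapositive: assume $H$ is simultaneously $\{K_3, 2K_2, C_4, K_{1,3}, C_5\}$-free and deduce that $H$ must be one of $pK_1$, $K_2\cup(p-2)K_1$, $P_3\cup(p-3)K_1$, or $P_4\cup(p-4)K_1$.

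First, I would bound the maximum degree of $H$. If some vertex $v$ had degree at least $3$ with neighbours $x,y,z$, then, since $H$ is $K_3$-free, no two of $x,y,z$ can be adjacent, and $\{v,x,y,z\}$ would induce a $K_{1,3}$, contradicting $K_{1,3}$-freeness. So every vertex of $H$ has degree at most $2$, which means every connected component of $H$ is either a path or a cycle.

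Next I would rule out cycle components. A component $C_n$ with $n=3$ is excluded by $K_3$-freeness, and $n\in\{4,5\}$ by $C_4$- and $C_5$-freeness; for $n\geq 6$, choosing two edges that are at distance $2$ along the cycle (e.g. $v_1v_2$ and $v_4v_5$ in $C_n$) yields an induced $2K_2$. Hence every component of $H$ is a path. Similarly, any path component $P_n$ with $n\geq 5$ contains $P_5$, and the edges $v_1v_2$ and $v_4v_5$ of $P_5$ induce $2K_2$, so each path component has at most $4$ vertices.

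Finally, $2K_2$-freeness forces at most one non-trivial (i.e., edge-containing) path component: two distinct components each with at least one edge would immediately yield an induced $2K_2$. Thus $H$ consists of isolated vertices together with at most one path component of length $1$, $2$, $3$, or $4$, so $H\in\{pK_1,\,K_2\cup(p-2)K_1,\,P_3\cup(p-3)K_1,\,P_4\cup(p-4)K_1\}$, contradicting the hypothesis of the observation.

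There is no real obstacle here; the whole argument is a short structural case analysis, the only mildly non-obvious step being the observation that a bounded-degree $\{K_3,K_{1,3}\}$-free graph must have all vertices of degree at most $2$, and that long cycles and long paths are excluded by $2K_2$-freeness rather than by any direct forbidden cycle.
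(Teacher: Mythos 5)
Your proof is correct and follows essentially the same route as the paper's: degree at most $2$ from $\{K_3,K_{1,3}\}$-freeness, acyclicity from $\{K_3,C_4,C_5,2K_2\}$-freeness, and a single short path component from $2K_2$-freeness. The only difference is that you spell out the $2K_2$ arguments for long cycles and long paths explicitly, which the paper leaves implicit.
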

\begin{proof}
	On the contrary, assume that $H\notin\{P_4\cup (p-4)K_1,P_3\cup (p-3)K_1, K_2\cup (p-2)K_1, pK_1\}$ and $H$ is a $\{K_3,2K_2,C_4,K_{1,3},C_5\}$-free graph. Then, the following statements hold.
	\begin{enumerate}
		\item[(1)] $H$ has exactly one non-trivial component since $H$ is $2K_2$-free and $H\ncong pK_1$.
		\item[(2)] For every vertex $v\in V(H)$, $deg_H(v)\leq2$ since $H$ is $\{K_3,K_{1,3}\}$-free.
		\item[(3)] $H$ is acyclic since $H$ is $\{K_3,2K_2,C_4,C_5\}$-free.
	\end{enumerate}
\noindent	Note that by (1), (2), and (3), it is clear that the non-trivial component $H'$ of $H$ is a path and since $H$ is $2K_2$-free, $H'\in \{K_2,P_3,P_4\}$. This implies that $H\in \{P_4\cup (p-4)K_1,P_3\cup (p-3)K_1, K_2\cup (p-2)K_1\}$, a contradiction.
\end{proof}
\begin{remark}
	\openpack\ is NP-complete on (i) $2K_2$-free graphs, (ii) $C_4$-free graphs, (iii) $C_5$-free graphs, and (iv) triangle-free graphs because of the facts that \openpack\ is NP-complete on (a) split graphs (i.e., the class of $\{2K_2,C_4,C_5\}$-free graphs)~\cite{Igor2014} and (b) bipartite graphs (a subclass of $K_3$-free graphs) \cite{publish,Shalu17} and (c) \openpack\ is in NP in every subclass of simple graphs.	\label{rem-triangle}
\end{remark}
\noindent Next, we prove the following results on $K_{1,3}$-free graphs: (i) \openpack\ is NP-complete, (ii) \maxopenpack\ is hard to approximate within a factor of $(N^{(\frac{1}{2}-\delta)})$ for any $\delta>0$ unless P = NP (where $N$ denotes the number of vertices in a $K_{1,3}$-free graph), and (iii) \openpack\ parameterized by solution size is W[1]-complete.
\subsection{\boldmath $K_{1,3}$-free Graphs}\label{sec:k_1,3}
The following construction gives a polynomial time reduction from \indset\ on simple graphs to \openpack\ on $K_{1,3}$-free graphs, where the former problem is known to be NP-complete~\cite{karp}.
\begin{construct}
	\emph{Input:} A simple graph $G$ with $V(G)=\{u_1,u_2,\ldots,u_n\}$.\\
	\emph{Output:} A $K_{1,3}$-free graph $G'$.\\
	\emph{Guarantee:} $G$ has an independent set of size $k$ if and only if $G'$ has an open packing of size $k$.\\
	\emph{Procedure:}~\\
	\begin{tabular}{l >{\RaggedRight}p{13.25cm}}
		Step 1: & Replace each edge $e=uu'$ in $G$ with a three vertex path $ueu'$ in $G'$.\\
		Step 2: &For every $u\in V(G)$, make $E_G(u)$ a clique in $G'$.\\
		Step 3: &  For every vertex $u_i\in V(G)$ with exactly one edge, say $e$ incident on it in $G$, introduce a vertex $v_i$, and two edges $u_iv_i$ and $v_ie$ in $G'$.
	\end{tabular}
	\label{construct-op-id-k_1,3}
\end{construct}
\noindent An example of Construction~\ref{construct-op-id-k_1,3} is given in Fig.~\ref{fig:k1,3-op-id}. Also, note that Step 2 of Construction~\ref{construct-op-id-k_1,3} can be viewed as the merger of the line graph of $G$ and the graph obtained in Step 1. Further, $V(G')=V(G)\cup E(G)\cup A$, where $A=\{v_i\,:\, 1\leq i\leq n\text{ and } deg_G(u_i)=1\}$ and $E(G')=E_1\cup E_2\cup E_3\cup E_4$, where $E_1=\{ue\,:\,  u\in V(G), e\in E(G)\text{ and } e\in E_G(u)\}$, $E_2= E(L(G))$, $E_3=\{v_iu_i\,:\, 1\leq i\leq n\text{ and } deg_G(u_i)=1\}$ and $E_4= \{v_ie\,:\, 1\leq i\leq n \text{ and } E_G(u_i)=\{e\} \}$. Hence, $|V(G')|\leq  2n+m$ and $|E(G')|=\sum_{i=1}^4|E_i|\leq 2m+\binom{m}{2}+2n$. Thus, the graph $G'$ can be constructed in quadratic time in the input size. 
\begin{figure}
	\centering
	\begin{tikzpicture}[scale=.85]
		\draw (2,1)node[block1,label=right:$u_2$]{}--(2,3.5)node(u3)[block1,label=right:$u_3$]{}--(0,5)node(u4)[block1,label=below:$u_4$]{}--(-2,3.5)node(u5)[block1,label=left:$u_5$]{}--(-2,1)node(u1)[block1,label=left:$u_1$]{}--cycle;
		\draw (u4)--(0,6.5)node(u6)[block1,label=left:$u_6$]{};
		\draw (0.25,5.65)node[label=left:$e_7$]{};
		\draw (u3)--(u5);
		\draw (0,-.5)node{(a)\,$G$};
		\draw (0,1.25)node{$e_1$};
		\draw (1.7,2.25)node{$e_2$};
		\draw (-1.7,2.25)node{$e_6$};
		\draw (1.5,4.25)node{$e_4$};
		\draw (-1.5,4.25)node{$e_5$};
		\draw (0,3)node{$e_3$};
		
		\draw (6,0.75)node(v1)[block1,label=left:$u_1$]{}--(8,0.75)node(e1)[block1,label=below:$e_1$]{}--(10,0.75)node[block1,label=right:$u_2$]{}--(10,2.15)node(e2)[block1,label=left:$e_2$]{}--(10,3.5)node(v3)[block1,label=right:$u_3$]{}--(9.1,4.25)node(e4)[block1,label=right:$e_4$]{}--(8,5.25)node(v4)[block1,label=below:$u_4$]{}--(6.9,4.25)node(e5)[block1,label=left:$e_5$]{}--(6,3.5)node(v5)[block1,label=left:$u_5$]{}--(6,2.1)node(e6)[block1,label=right:$e_6$]{}--cycle;
		\draw (v5)--(8,3.5)node(e3)[block1,label=below:$e_3$]{}--(v3);
		\draw (v4)--(8,6)node(e7)[block1,label=above:$e_7$]{}--(6.9,7)node(v6)[block1,label=left:$u_6$]{}--(9.1,7)node(w6)[block1,label=right:$v_6$]{}--(e7)--(e5);
		\draw (e7)--(e4);
		\draw (e3)--(e4)--(e5)--(e3)--(e6)--(e1)--(e2)--(e3);
		\draw (e2)--(e4);
		\draw (e5)--(e6);
		\draw (8,-.5)node{(b)\,$G'$};
	\end{tikzpicture}
	\captionsetup{format=hang}
	\caption{(a) a simple graph $G$, (b) $K_{1,3}$-free graph $G'$ obtained from $G$ using Construction~\ref{construct-op-id-k_1,3}.}
	\label{fig:k1,3-op-id}
\end{figure}
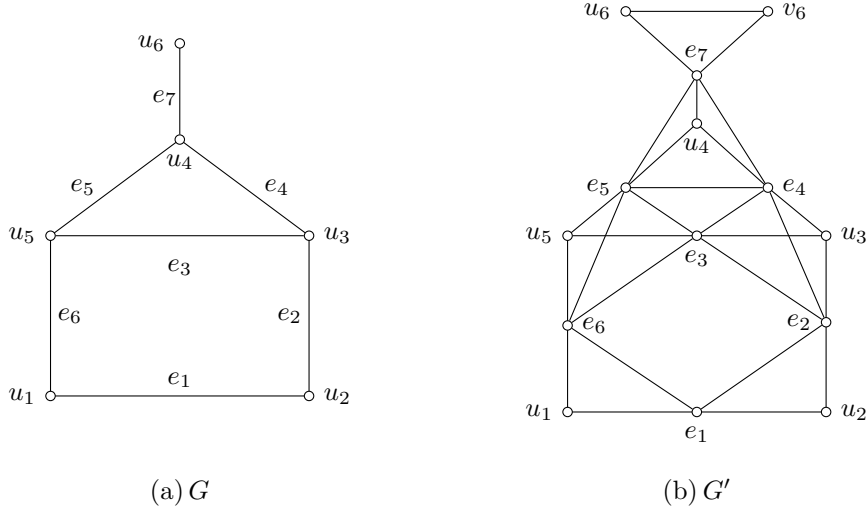\\
\noindent We can observe that in the graph $G'$, $\text{for every vertex }u_i \in V(G)$,
\begin{equation}
N_{G'}(u_i)=\begin{cases}
	E_G(u_i)\cup \{v_i\}~~~\text{ if }|E_G(u_i)|=deg_G(u_i)=1\\
	E_G(u_i)\text{\qquad\quad~~ otherwise}
\end{cases}
\label{equ-k-1,3-a}
\end{equation} 
\begin{equation}
 N_{G'}(v_i)=\{u_i,e\} \text{ if }E_G(u_i)=\{e\} \text{ i.e., }deg_G(u_i)=1
 \label{equ-k-1,3-b}
\end{equation} The above equations imply that, by Steps 2 and 3 of Construction~\ref{construct-op-id-k_1,3}, $N_{G'}(u_i)$ and $N_{G'}(v_i)$ are cliques in $G'$. Also, by Construction~\ref{construct-op-id-k_1,3}, for every $e\in E(G)$ with $V_G(e)=\{u,u'\}$, \begin{equation}
N_{G'}[e]=N_{G'}[u]\cup N_{G'}[u'] \text{ and so } N_{G'}(e)=(N_{G'}[u]\cup N_{G'}[u'])\setminus\{e\}
\label{equ-k-1,3-c}
\end{equation} This implies that $N_{G'}(e)$ is a union of two cliques in $G'$. We know that a graph $H$ is $K_{1,3}$-free if and only if $N_H(x)$ is a union of at most two cliques for all $x\in V(H)$. So, Equations~(\ref{equ-k-1,3-a}), (\ref{equ-k-1,3-b}), and (\ref{equ-k-1,3-c}) imply that $G'$ is $K_{1,3}$-free.\\ \noindent Next, we prove Claims~\ref{claim-s-iff-k_1,3}-\ref{claim-op-to-ind-k-1,3}, which are helpful in completing the proof of guarantee of Construction~\ref{construct-op-id-k_1,3}. 
\begin{claim-alt}
A set $S\subseteq V(G)$ is an independent set in the input graph $G$ of Construction~\ref{construct-op-id-k_1,3} if and only if $S$ is an open packing in the output graph $G'$.\\
	Note that by Equation~(\ref{equ-k-1,3-a}), for $i\neq j$, $e=u_iu_j$ in $G$ if and only if $N_{G'}(u_i)\cap N_{G'}(u_j)=\{e\}$, and hence $u_iu_j$ is not an edge in $G$ if and only if $N_{G'}(u_i)\cap N_{G'}(u_j)=\emptyset$. This completes the proof of Claim~\ref{claim-s-iff-k_1,3}. 
	\label{claim-s-iff-k_1,3}
 \end{claim-alt}
 \begin{observation}
 	Let $H$ be a graph, and let $U=\{x,y,z\}$ be a vertex subset of $H$ that induces a triangle in $H$. Then, for any open packing $S$ in $H$, $|S\cap U|\leq 1$. 
 	\label{obs-added}
 \end{observation}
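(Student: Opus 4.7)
The plan is a short proof by contradiction exploiting the fact that any two vertices of a triangle share the third as a common neighbour. Suppose for contradiction that $|S \cap U| \geq 2$, so there exist two distinct vertices, say $x,y \in S \cap U$. Since $U$ induces a triangle in $H$, we have $xz, yz \in E(H)$, so $z \in N_H(x) \cap N_H(y)$. Thus $N_H(x) \cap N_H(y) \neq \emptyset$, contradicting the hypothesis that $S$ is an open packing. Hence $|S \cap U| \leq 1$.

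There is no real obstacle here, since the statement follows immediately from the definition of an open packing once we observe that the third triangle vertex is a common neighbour of any two chosen triangle vertices. I would present the argument in two or three sentences and not bother with a case analysis (the role of $z$ is symmetric, so picking any two of the three triangle vertices as the presumed pair in $S$ is sufficient).
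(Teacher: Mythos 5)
Your proof is correct and is essentially the same argument the paper gives: the paper disposes of this observation with the single remark that any two vertices of a triangle have the third as a common neighbour, which is exactly your contradiction. Nothing further is needed.
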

 \noindent The above observation is held by the fact that any two vertices in a triangle have a common neighbour (third vertex).
 \begin{claim-alt}
	Let $u_i\in V(G)$ such that $deg_G(u_i)=1$, and let $S$ be an open packing in $G'$ with $v_i\in S$. Then, (i) $u_i\notin S$, (ii) $S'=(S\setminus\{v_i\})\cup \{u_i\}$ is an open packing in $G'$, and (iii) $|S|=|S'|$.\\
	Since $deg_G(u_i)=1$, there exists an edge $e\in E(G)$ such that $E_G(u_i)=\{e\}$. Then, $\{u_i,v_i,e\}$ induces a triangle in $G'$ with $N_{G'}(u_i)=\{e,v_i\}$ and $N_{G'}(v_i)=\{e,u_i\}$. Therefore, by Observation~\ref{obs-added}, for any open packing $S_1$ in $G'$, $|S_1\cap \{u_i,v_i,e\}|\leq 1$. Thus, $v_i\in S$ implies that $u_i,e\notin S$. This proves (i) of Claim~\ref{claim-abc}. Next, we prove (ii). Since $N_{G'}(u_i)=\{e,v_i\}$ and $N_{G'}(v_i)=\{e,u_i\}$, for any vertex $x\in S\setminus \{v_i\}$, $x$ and $v_i$ have no common neighbour in $G'$, which implies that $x$ and $u_i$ have no common neighbour in $G'$. This, together with the fact that $S\setminus\{v_i\}$ is an open packing in $G'$, implies that $S'=(S\setminus\{v_i\})\cup \{u_i\}$ is an open packing in $G'$. Note that (iii) follows from (i).\label{claim-abc}
 \end{claim-alt}
 \begin{claim-alt}
 	Let $e$ be an edge incident on a vertex $u_i$ in $G$. Then, for any open packing $S$ in $G'$, $|S\cap \{e,u_i\}|\leq 1$.\\
 	If $deg_G(u_i)=1$, then $\{u_i,v_i,e\}$ induces a triangle in $G'$ and so $|S\cap \{e,u_i\}|\leq 1$ by Observation~\ref{obs-added}. Also, if $deg_G(u_i)\geq 2$, then $\{u_i,e,f\}$ induces a triangle in $G'$ for some $f\in (E_G(u_i)\setminus \{e\})$. Again by Observation~\ref{obs-added}, $|S\cap \{u_i,e\}|\leq 1$. \label{claim-added-2}
 \end{claim-alt}
 \begin{claim-alt}
	Let $S$ be an open packing in $G'$ with $e=u_\ell u_j\in S\cap E(G)$ for some $1\leq \ell<j\leq n$. Then, (i) $u_\ell,u_j\notin S$, (ii) $S_\ell=(S\setminus\{e\})\cup \{u_\ell\}$ and $S_{j}=(S\setminus\{e\})\cup \{u_j\}$ are open packings in $G'$, and (iii) $|S|=|S_\ell|=|S_{j}|$.\\
	Since $e\in S$, $u_\ell,u_j\notin S$ because of Claim~\ref{claim-added-2}. This proves (i). Also, if $v_\ell \in V(G')$, then $v_\ell \notin S$ because of Observation~\ref{obs-added} and the fact that $\{v_\ell, e,u_\ell\}$ induces a triangle in $G'$. Similarly, if $v_j\in V(G')$, then $v_j\notin S$. Next, we prove that $S_\ell =(S\setminus \{e\})\cup \{u_\ell\}$ is an open packing in $G'$. Note that $S\setminus \{e\}$ is an open packing in $G'$. So, if $S_\ell$ is not an open packing in $G$, then there exists a vertex $z\in S\setminus\{e\}$ such that $N_{G'}(z)\cap N_{G'}(u_\ell)\neq \emptyset$. Let $y \in N_{G'}(z)\cap N_{G'}(u_\ell)$. Note that $zu_\ell \notin E(G')$, else $u_\ell \in N_{G'}(z)\cap N_{G'}(e)$, which would be a contradiction to $e,z \in S$ (an open packing in $G'$). Similarly, $zu_j\notin E(G')$. If $y=e$, then since $e=y\in N_{G'}(z)$, we have $z\in (N_{G'}(e)\setminus (N_{G'}[u_j]\cup N_{G'}[u_\ell]))$, a contradiction to Equation~(\ref{equ-k-1,3-c}) (see Fig.~\ref{fig-extras-k1,3}). Therefore, $y\neq e$. This implies that $ye\in E(G')$, since $y,e \in N_{G'}(u_\ell)$, which is a clique in $G'$. Then, $y\in N_{G'}(z)\cap N_{G'}(e)$, a contradiction to $S$ being an open packing in $G'$. Therefore, $S_\ell =(S\setminus \{e\})\cup \{u_\ell\}$ is an open packing in $G'$. A similar argument holds for $S_j$. Also, note that (iii) follows from (i).
	\label{claim-eu}
 \end{claim-alt}
 \begin{figure}
 	\centering
 	\begin{tikzpicture}
 		\draw (0,0)node[block1,label=right:$u_\ell$]{}--(0,2)node[block1,label=right:$u_j$]{};
 		\draw (0,1)node[label=left:$e$]{};
 		\draw (4,0)node(ul)[block1,label=below:$u_\ell$]{}--(4,1)node(e)[block1,label=left:$e$]{}--(5,2)node(uj)[block1,label=above:$u_j$]{};
 		\draw (ul)--(6,0)node(y)[block1,label=below:$y$]{}--(6,1)node(z)[block1,label=right:$z$]{};
 		\draw[dashed] (e)--(z)--(ul);
 		\draw[dashed] (uj)--(z);
 		\draw[thick] (e)--(y);
 		\draw (0,-1)node{(a)};
 		\draw (5,-1)node{(b)};
 	\end{tikzpicture}
 	\captionsetup{format=hang}
 	\caption{(a) an edge $e=u_\ell u_j$ of the input graph $G$ of Construction~\ref{construct-op-id-k_1,3} and (b) the three vertex path $u_\ell eu_j$ produced by Construction~\ref{construct-op-id-k_1,3} along with two vertices $y,z$ of $G'$ such that $y\in N_{G'}(z)\cap N_{G'}(u_\ell)$ as in the contrary assumption of Claim~\ref{claim-eu}. In figure, dashed lines represent the edges not in $G'$ as observed in the proof of Claim~\ref{claim-eu}.}
 	\label{fig-extras-k1,3}
 \end{figure}
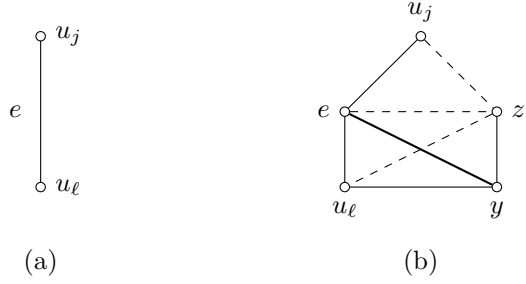
\begin{claim-alt}
	Let $S$ be an open packing of size $k$ in $G'$. Then, there exists an independent set $S^*$ of size $k$ in $G$.\\
	Note that $V(G')=V(G)\cup E(G)\cup A$. We construct $S^*$ from $S$ in two phases. First, we construct an open packing $S_1$ in $G'$ such that $S_1\subseteq V(G)\cup E(G)$ and $|S_1|=|S|$. If $S\cap A=\emptyset$, then $S_1:=S$. Else $S\cap A\neq \emptyset$ and let $S\cap A=\{v_{i_1},v_{i_2},\ldots,v_{i_p}\}$. Then, $S_1:=(S\setminus \{v_{i_1},v_{i_2},\ldots,v_{i_p}\})\cup \{u_{i_1},u_{i_2},\ldots,u_{i_p}\}$. By repeated application of Claim~\ref{claim-abc}, $S_1$ is an open packing in $G'$ such that $|S_1|=|S|$ and $S_1\subseteq V(G)\cup E(G)$. Next, we construct an open packing $S^*$ of $G'$ from $S_1$ such that $|S^*|=|S_1|$ and $S^*\subseteq V(G)$. If $S_1\cap E(G)=\emptyset$, then $S^*:=S_1$. If not, let $S_1\cap E(G)=\{e_1,e_2,\ldots,e_q\}$, and let $u_{\ell_i}$ be an end vertex of $e_i$ in $G$ for $1\leq i\leq q$. Note that for $1\leq i<j\leq q$, $u_{\ell_i}\neq u_{\ell_j}$; otherwise, $\{u_{\ell_i},e_i,e_j\}$ induces a triangle (by Equation~(\ref{equ-k-1,3-a})) in $G'$ with $e_i,e_j\in S_1$ (an open packing in $G'$), a contradiction to Observation~\ref{obs-added}. Thus, $S^*:=(S_1\setminus \{e_1,e_2,\ldots,e_q\})\cup \{u_{\ell_1},u_{\ell_2},\ldots,u_{\ell_q}\}$. By repeated application of Claim~\ref{claim-eu}, $S^*$ is an open packing in $G'$ with $|S^*|=|S_1|=|S|=k$ and $S^*\subseteq V(G)$. By Claim~\ref{claim-s-iff-k_1,3}, $S^*$ is an independent set in $G$. This proves Claim~\ref{claim-op-to-ind-k-1,3}. \label{claim-op-to-ind-k-1,3}
\end{claim-alt}
\noindent Claims~\ref{claim-s-iff-k_1,3} and \ref{claim-op-to-ind-k-1,3} prove the guarantee of Construction~\ref{construct-op-id-k_1,3}. Thus, we have the following theorem which is an implication of Construction~\ref{construct-op-id-k_1,3} and the fact that \indset\ is NP-complete on simple graphs~\cite{karp}.
\begin{theorem}
	\openpack\ is NP-complete on $K_{1,3}$-free graphs.
	\label{thm-k-1,3-npc}
\end{theorem}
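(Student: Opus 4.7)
My plan is to prove Theorem~\ref{thm-k-1,3-npc} by showing membership in NP and establishing NP-hardness via a polynomial-time reduction from \indset, which is NP-complete on simple graphs by Karp~\cite{karp}. Membership is immediate: given a candidate set $S\subseteq V(G')$, one can verify in time $O(|V(G')|^3)$ that $N_{G'}(x)\cap N_{G'}(y)=\emptyset$ for every distinct pair $x,y\in S$, and $K_{1,3}$-free graphs are a subclass of simple graphs where \openpack\ is clearly in NP.

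For the hardness reduction, I would invoke Construction~\ref{construct-op-id-k_1,3}, which transforms an arbitrary simple graph $G$ on $n$ vertices and $m$ edges into $G'$ with $|V(G')|\leq 2n+m$ and $|E(G')|=O(m^2+n)$, so the reduction runs in quadratic time. The key structural properties of $G'$ are that (i) each $N_{G'}(u_i)$ is a single clique by Step~2 (together with the pendant from Step~3 when $\deg_G(u_i)=1$), (ii) each $N_{G'}(v_i)$ is an edge, and (iii) for every $e\in E(G)$ with endpoints $u,u'$ in $G$, $N_{G'}(e)$ is the union of the two cliques $N_{G'}[u]$ and $N_{G'}[u']$ minus $e$. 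Since a graph is $K_{1,3}$-free iff every open neighborhood is a union of at most two cliques, $G'$ is $K_{1,3}$-free — this is the content already captured in Equations~(\ref{equ-k-1,3-a})--(\ref{equ-k-1,3-c}).

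For correctness, the forward direction is the easy one: Claim~\ref{claim-s-iff-k_1,3} shows that an independent set $S$ of $G$ is, verbatim, an open packing in $G'$ of the same cardinality, because two vertices $u_i,u_j\in V(G)$ can share a neighbour in $G'$ only via an edge $e=u_iu_j\in E(G)$. The reverse direction is the delicate half and decomposes into two normalization steps, executed in Claim~\ref{claim-op-to-ind-k-1,3}: given an arbitrary open packing $S$ in $G'$ of size $k$, first replace each $v_i\in S\cap A$ by $u_i$ using Claim~\ref{claim-abc}, obtaining an open packing $S_1\subseteq V(G)\cup E(G)$ of size $k$; then replace each edge-vertex $e\in S_1\cap E(G)$ by one of its endpoints in $G$ using Claim~\ref{claim-eu}, where distinctness of the chosen endpoints across different selected edges is forced by the triangle structure guaranteed by Step~2. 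The resulting $S^\ast\subseteq V(G)$ has size $k$, and Claim~\ref{claim-s-iff-k_1,3} applied in reverse shows $S^\ast$ is independent in $G$.

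The hardest step to design, were one starting from scratch, is the endpoint-swap in Claim~\ref{claim-eu}: one must argue that turning an open packing containing an edge-vertex $e=u_\ell u_j$ into one containing the original vertex $u_\ell$ introduces no common-neighbour conflicts. The potential conflict vertex $y\in N_{G'}(z)\cap N_{G'}(u_\ell)$ is ruled out by two case splits — $y=e$ contradicts Equation~(\ref{equ-k-1,3-c}), and $y\neq e$ forces $ye\in E(G')$ because both lie in the clique $N_{G'}(u_\ell)=E_G(u_\ell)$, contradicting that $S$ was an open packing. Assembling these pieces together with the fact that \indset\ is NP-complete completes the proof.
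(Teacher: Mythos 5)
Your proposal is correct and follows essentially the same route as the paper: membership in NP plus a reduction from \indset\ via Construction~\ref{construct-op-id-k_1,3}, with $K_{1,3}$-freeness certified by Equations~(\ref{equ-k-1,3-a})--(\ref{equ-k-1,3-c}) and correctness established through the two-phase normalization of Claims~\ref{claim-abc}--\ref{claim-op-to-ind-k-1,3}. The case analysis you highlight for the endpoint swap (ruling out $y=e$ via Equation~(\ref{equ-k-1,3-c}) and $y\neq e$ via the clique $N_{G'}(u_\ell)$) is exactly the argument in Claim~\ref{claim-eu}.
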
	
\noindent  We use the theorem below to derive an in-approximation result of \maxopenpack\ for $K_{1,3}$-free graphs.
\begin{theorem}[\cite{hastard}]
	\maxindset\ cannot be \hypertarget{subapprx}{approximated} within a factor of $n^{(1-\epsilon)}$ for any $\epsilon>0$, in general graphs unless P = NP.
	\label{hastard-ind-inapprx}
\end{theorem}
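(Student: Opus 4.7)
The plan is to follow the classical PCP-based approach to inapproximability of the independence number. The strategy is to combine the PCP theorem with a verifier-to-graph transformation so that the soundness-completeness gap of the verifier becomes the hardness-of-approximation ratio for $\maxindset$.

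First, I would invoke the PCP theorem, which asserts that every NP language admits a probabilistically checkable proof with $O(\log n)$ random bits and $O(1)$ queries. The FGLSS construction (vertices = accepting local configurations of the verifier, edges = pairs that disagree on a common query) converts such a verifier into a graph $G$ whose independence number tracks the verifier's maximum acceptance probability: on YES instances one has $\alpha(G) \approx 2^r$ with $r$ the number of random bits, while on NO instances $\alpha(G) \le s\cdot 2^r$, where $s$ is the soundness. This already yields a constant-factor inapproximability for $\maxindset$.

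To push the ratio to $n^{1-\epsilon}$ for every $\epsilon>0$, I would amplify the gap in two stages. First, extract a two-prover one-round game from the PCP and apply Raz's parallel repetition theorem, driving the soundness to an arbitrarily small constant while preserving (perfect) completeness. Second, compose the outer verifier with an inner long-code test, whose soundness is controlled by a Fourier-analytic argument on the Boolean hypercube. The resulting FGLSS graph on $N$ vertices then exhibits a gap of at least $N^{1-\epsilon}$ between its YES and NO values of $\alpha$. A derandomization step along the lines of Zuckerman's disperser construction converts the randomized hardness so obtained into an NP-hardness statement, yielding exactly the quoted bound.

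The main obstacle is the tight soundness analysis of the long-code test, together with the careful balancing between the number of repetitions, the length of the long code, and the size of the final graph so that the inapproximability ratio is pushed all the way to $n^{1-\epsilon}$ for arbitrary $\epsilon>0$. The surrounding machinery --- the PCP theorem, the FGLSS transformation, parallel repetition, and derandomization --- is by now standard, but this technical core is where the bulk of the work lies and is precisely the contribution being cited here.
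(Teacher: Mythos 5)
This theorem is not proved in the paper at all; it is imported verbatim from the cited reference (H\r{a}stad's inapproximability of Max Clique), so there is no in-paper argument to compare yours against. Your sketch accurately recounts the proof architecture of the cited result --- PCP theorem, FGLSS graph, Raz parallel repetition, the long-code inner verifier with its Fourier-analytic soundness analysis, and Zuckerman's disperser-based derandomization (which is indeed what upgrades the hypothesis from $NP\not\subseteq ZPP$ to $P\neq NP$ as stated here) --- so it is a correct roadmap of the literature, though of course only the surrounding paper treats it as a black box and you should too for the purposes of this work.
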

\begin{theorem}
	\maxopenpack\ is hard to approximate within a factor of $N^{\frac{1}{2}-\delta}$ for any $\delta>0$ in $K_{1,3}$-free graphs unless P = NP, where $N$ denotes the number of vertices in a $K_{1,3}$-free graph.
	\label{thm-k_1,3-inapprx}
\end{theorem}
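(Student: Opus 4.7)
The plan is to derive the inapproximability bound by reducing \maxindset{} on general graphs to \maxopenpack{} on $K_{1,3}$-free graphs via the reduction already set up in Construction~\ref{construct-op-id-k_1,3}, and then invoking H\aa{}stad's Theorem~\ref{hastard-ind-inapprx}. The key observation is that the reduction is not only correctness-preserving but also \emph{gap-preserving}: by Claim~\ref{claim-s-iff-k_1,3} and Claim~\ref{claim-op-to-ind-k-1,3}, $\rho^o(G') = \alpha(G)$, and moreover any open packing of size $k$ in $G'$ can be converted in polynomial time to an independent set of size $k$ in $G$ (this conversion is exactly the two-phase argument in the proof of Claim~\ref{claim-op-to-ind-k-1,3}).

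Suppose, for contradiction, that for some $\delta \in (0,\tfrac{1}{2})$ there is a polynomial-time algorithm $\mathcal{A}$ that approximates \maxopenpack{} on $K_{1,3}$-free graphs within a factor of $N^{\frac{1}{2}-\delta}$, where $N = |V(G')|$. Given an arbitrary simple graph $G$ on $n$ vertices and $m \le \binom{n}{2}$ edges, I would apply Construction~\ref{construct-op-id-k_1,3} to obtain a $K_{1,3}$-free graph $G'$ with $|V(G')| \le 2n+m \le n^2$ for all sufficiently large $n$. Running $\mathcal{A}$ on $G'$ yields an open packing $S$ of $G'$ with
\[
|S| \;\ge\; \frac{\rho^o(G')}{N^{\frac{1}{2}-\delta}} \;\ge\; \frac{\alpha(G)}{(n^2)^{\frac{1}{2}-\delta}} \;=\; \frac{\alpha(G)}{n^{1-2\delta}}.
\]
Applying the polynomial-time transformation of Claim~\ref{claim-op-to-ind-k-1,3} to $S$ produces an independent set $S^{*}$ of $G$ with $|S^{*}| = |S| \ge \alpha(G)/n^{1-2\delta}$. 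Hence $\mathcal{A}$ induces a polynomial-time $n^{1-2\delta}$-approximation for \maxindset{} on general graphs.

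Setting $\epsilon := 2\delta > 0$, this contradicts Theorem~\ref{hastard-ind-inapprx} (assuming $\mathrm{P}\neq\mathrm{NP}$), completing the argument for all $\delta \in (0,\tfrac{1}{2})$; the cases $\delta \ge \tfrac{1}{2}$ are trivial since the claimed ratio is then at most $1$. The only subtle point—and what I would spell out carefully—is the bookkeeping of the two quantities $n$ and $N$: the construction inflates the instance size only polynomially (indeed quadratically), so the exponent $\tfrac{1}{2}$ in the $K_{1,3}$-free inapproximability exactly matches the exponent $1$ in H\aa{}stad's bound for the general case. No additional combinatorial argument beyond Construction~\ref{construct-op-id-k_1,3} and the claims already proved is needed.
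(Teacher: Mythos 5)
Your proposal is correct and follows essentially the same route as the paper's own proof: assume a $N^{\frac{1}{2}-\delta}$-approximation on the $K_{1,3}$-free instance $G'$ produced by Construction~\ref{construct-op-id-k_1,3}, use $\rho^o(G')=\alpha(G)$ together with $N\leq n^2$ to extract an $n^{1-2\delta}$-approximation for \maxindset, and contradict Theorem~\ref{hastard-ind-inapprx} with $\epsilon=2\delta$. Your added remarks on the explicit recovery of an independent set via Claim~\ref{claim-op-to-ind-k-1,3} and on the trivial case $\delta\geq\tfrac{1}{2}$ are harmless refinements of the same argument.
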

\begin{proof}
	On the contrary, assume that \maxopenpack\ admits a $\displaystyle N^{\left(\frac{1}{2}-\delta\right)}$-factor approximation algorithm for some $\delta>0$ in $K_{1,3}$-free graphs. Then, an open packing $S$ of the graph $G'$ constructed in Construction~\ref{construct-op-id-k_1,3} with $\displaystyle |S|> \frac{\rho^o(G)}{|V(G')|^{\left(\frac{1}{2}-\delta\right)}}$ can be found in polynomial time. Hence, an independent set $U$ of the input graph $G$ in Construction~\ref{construct-op-id-k_1,3} with $|U|=|S|$ can be found in polynomial time. Note that $N=|V(G')|\leq 2n+m\leq 2n+\frac{n^2}{2}\leq n^2$ for $n\geq 4$. Then,
	\begin{align*}
		|S|&>\frac{\rho^o(G')}{N^{\left(\frac{1}{2}-\delta\right)}}\hspace{2cm}\\
		|S|&> \frac{\rho^o(G')}{(n^2)^{\left(\frac{1}{2}-\delta\right)}} \hspace{5mm} \text{for }n\geq 4\\
		|S|&>\frac{\alpha(G)}{n^{1-2\delta}}\\
		|U|&>\frac{\alpha(G)}{n^{1-2\delta}}
	\end{align*}
	Since $\rho^o(G')=\alpha(G)$ by the guarantee of Construction~\ref{construct-op-id-k_1,3}.\\
	Let $\epsilon=2\delta$.
	Then, $\epsilon>0$ and for $n\geq 4$, we have
	%\begin{linenomath*}
	\begin{align*}
		|U|&>\frac{\alpha(G)}{n^{1-\epsilon}}
	\end{align*}
	%\end{linenomath*}
	The above inequality implies that \maxindset\ has a $(n^{1-\epsilon})$-factor approximation algorithm for some $\epsilon>0$, which contradicts Theorem~\ref{hastard-ind-inapprx}.
\end{proof} 
\noindent Next, we give a parameterized intractability result for \openpack\ on $K_{1,3}$-free split graphs using Construction~\ref{construct-op-id-k_1,3} and the following result from the literature.
\begin{lemma}[\cite{Rall2005}]
	Given a graph $G$, let the {\it neighbourhood graph} $G^{[o]}$ of $G$ be a simple graph with $V(G^{[o]})=V(G)$ and $E(G^{[o]})=\{xy\,:\, x,y\in V(G), x\neq y \text{ and } N_G(x)\cap N_G(y)\neq \emptyset\}$. Then, a vertex subset $S$ is an open packing in $G$ if and only if $S$ is an independent set in $G^{[o]}$.
	\label{op-in-g-iff-ind-in-go}
\end{lemma}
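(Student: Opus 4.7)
The plan is to prove both directions by directly unfolding the two definitions at play, namely the defining property of an open packing in $G$ and the defining property of the edge set $E(G^{[o]})$. Since the statement is essentially a restatement of one definition in terms of the other, I expect no genuine obstacle; the main task is simply to make the logical equivalence transparent.

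First I would establish the forward direction. Assume $S \subseteq V(G)$ is an open packing in $G$, and take any two distinct vertices $x, y \in S$. By the definition of an open packing, $N_G(x) \cap N_G(y) = \emptyset$. But the construction of $G^{[o]}$ stipulates that $xy \in E(G^{[o]})$ holds \emph{precisely} when $N_G(x) \cap N_G(y) \neq \emptyset$. Contrapositively, the empty intersection yields $xy \notin E(G^{[o]})$. Since the pair $x,y$ was arbitrary, $S$ contains no edge of $G^{[o]}$, hence $S$ is independent in $G^{[o]}$.

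For the converse direction I would simply reverse this chain of implications. Assume $S$ is independent in $G^{[o]}$. Then for every pair of distinct $x, y \in S$ we have $xy \notin E(G^{[o]})$, which by the very definition of $E(G^{[o]})$ forces $N_G(x) \cap N_G(y) = \emptyset$. Since this holds for every such pair, $S$ is an open packing in $G$ by definition.

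Since both implications rest on the identical biconditional ``$xy \in E(G^{[o]}) \iff N_G(x) \cap N_G(y) \neq \emptyset$,'' the only thing I need to verify with any care is that the vertex sets $V(G) = V(G^{[o]})$ coincide (so that ``$S \subseteq V(G)$'' and ``$S \subseteq V(G^{[o]})$'' are the same statement), which is immediate from the definition of $G^{[o]}$. There is no combinatorial content to uncover beyond this, so I do not anticipate any step requiring additional argument; the proof is essentially a two-line observation written symmetrically in both directions.
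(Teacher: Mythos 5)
Your proof is correct: both directions are immediate from the biconditional defining $E(G^{[o]})$, and this is the only argument available since the lemma is purely a restatement of the open packing condition in the language of independence. The paper itself gives no proof, citing Rall (2005) instead, so there is nothing to compare against; your two-line unfolding is exactly what a written-out proof would look like.
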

\begin{theorem}
	\openpack\ parameterized by solution size is W[1]-complete on $K_{1,3}$-free graphs. \label{thm-k_1,3-w1}
\end{theorem}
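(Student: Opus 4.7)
The plan is to establish W[1]-completeness by proving both W[1]-membership and W[1]-hardness, each of which follows by invoking results already available in the excerpt.

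For W[1]-membership, I would use Lemma~\ref{op-in-g-iff-ind-in-go}. Given an instance $(G,k)$ of \openpack, the neighbourhood graph $G^{[o]}$ is obtained in polynomial time by joining two distinct vertices whenever they share a neighbour in $G$; this only requires computing the intersections $N_G(x)\cap N_G(y)$ for each vertex pair. By the lemma, $(G,k)$ is a yes-instance of \openpack\ if and only if $(G^{[o]},k)$ is a yes-instance of \indset. Since the parameter is preserved and \indset\ parameterized by solution size is in W[1], this polynomial-time parameterized reduction places \openpack\ in W[1].

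For W[1]-hardness, I would reuse Construction~\ref{construct-op-id-k_1,3}. The construction maps an arbitrary simple graph $G$ (an \indset\ instance) in polynomial time to a $K_{1,3}$-free graph $G'$, and by its guarantee, $G$ has an independent set of size $k$ if and only if $G'$ has an open packing of size $k$. Crucially, the parameter $k$ is preserved verbatim, so this is already a valid parameterized reduction. Since \indset\ is W[1]-hard on general graphs (a classical result in the W-hierarchy references \cite{parameterized-downey-2013,param-saket}), this transfers W[1]-hardness to \openpack\ on $K_{1,3}$-free graphs.

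Combining the two parts yields Theorem~\ref{thm-k_1,3-w1}. There is no substantial obstacle here: the construction and the lemma have already done all of the nontrivial work. The only subtlety worth remarking on explicitly is that Construction~\ref{construct-op-id-k_1,3} preserves the parameter (not merely bounds $k'$ by some function of $k$), which makes it immediately valid as a parameterized reduction; and that the polynomial-time computability of $G^{[o]}$ from $G$ suffices to lift an FPT algorithm for \indset\ in W[1] to one for \openpack.
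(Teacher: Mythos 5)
Your proof is correct and follows exactly the paper's argument: membership in W[1] via the neighbourhood-graph reduction of Lemma~\ref{op-in-g-iff-ind-in-go} to \indset, and W[1]-hardness via the parameter-preserving reduction of Construction~\ref{construct-op-id-k_1,3} from \indset. No substantive difference from the paper's proof.
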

\begin{proof} Note that \indset\ parameterized by solution size is W[1]-complete~\cite{W1-completeness-1995}. Lemma~\ref{op-in-g-iff-ind-in-go} implies that every instance $(G,k)$ of \openpack\ on $K_{1,3}$-free graphs can be (FPT) reduced into an instance $(G^{[o]},k)$ of \indset. Hence, \openpack\ parameterized by solution size on $K_{1,3}$-free graphs is in W[1]. The guarantee of Construction~\ref{construct-op-id-k_1,3} implies that every instance $(G,k)$ of \indset\ can be (FPT) reduced into an instance $(G',k)$ of \openpack\ on $K_{1,3}$-free graphs. Thus, \openpack\ parameterized by solution size is W[1]-hard on $K_{1,3}$-free graphs. 
\end{proof}
\subsection{\boldmath Subclasses of $(P_4\cup rK_1)$-free Graphs}\label{sec:sup-rk1}
In this section, we prove bounds on the total domination number and the open packing number in the classes of (i) $(P_4\cup rK_1)$-free graphs, (ii) $(P_3\cup rK_1)$-free graphs, (iii) $(K_2\cup rK_1)$-free graphs, and (iv) $rK_1$-free graphs. These bounds eventually solve \tdset\ and \openpack\ in polynomial time in the graph classes mentioned above. The following lemmas are the tools used to connect the bounds with polynomial time algorithms.
\begin{lemma}[Folklore]
	Given a graph $G$ and a vertex subset $D$ of $G$, it can be tested whether $D$ is a total dominating set of $G$ or not in $O(n+m)$ time.
	\label{td-testing}
\end{lemma}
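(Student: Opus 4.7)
The plan is to give a direct algorithmic verification argument, since the statement is essentially a straightforward scan over the graph. Recall that $D \subseteq V(G)$ is a total dominating set if and only if $N_G(v) \cap D \neq \emptyset$ for every $v \in V(G)$, equivalently $V(G) = \bigcup_{d \in D} N_G(d)$. So the task reduces to verifying this covering condition in linear time.

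First, I would maintain a boolean array $\texttt{dom}[\cdot]$ indexed by $V(G)$, all entries initialized to $\texttt{false}$, which takes $O(n)$ time. I would also preprocess $D$ into a membership-testing structure (a boolean array of size $n$), again in $O(n)$ time. Next, I would iterate through each vertex $d \in D$ and, using the adjacency list of $d$, set $\texttt{dom}[w] \leftarrow \texttt{true}$ for every $w \in N_G(d)$. Since the work done at vertex $d$ is $O(\deg_G(d))$, the total cost of this marking phase is $O\bigl(\sum_{d \in D} \deg_G(d)\bigr) = O(m)$. Finally, I would scan the array $\texttt{dom}[\cdot]$ once and accept if and only if $\texttt{dom}[v] = \texttt{true}$ for all $v \in V(G)$; this is an additional $O(n)$ step. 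Summing these three phases yields $O(n+m)$.

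The correctness is immediate from the definition: $\texttt{dom}[v]$ becomes $\texttt{true}$ exactly when some $d \in D$ adjacent to $v$ is processed, i.e., exactly when $N_G(v) \cap D \neq \emptyset$, so the acceptance criterion is equivalent to $D$ being a total dominating set of $G$. There is no real obstacle here; the only minor point to be careful about is ensuring that isolated vertices (if any) are correctly flagged as not dominated, which is automatic since their $\texttt{dom}[\cdot]$ entry is never toggled. Hence the algorithm runs in $O(n+m)$ time and correctly decides whether $D$ is a total dominating set of $G$.
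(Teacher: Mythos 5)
Your proof is correct: the mark-and-scan argument establishes both correctness (via $\texttt{dom}[v]=\texttt{true}$ iff $N_G(v)\cap D\neq\emptyset$) and the $O(n+m)$ bound. The paper states this lemma as folklore and gives no proof at all, and your argument is exactly the standard one the paper implicitly relies on, so there is nothing to reconcile.
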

\begin{lemma}
	Given a graph class $\mathcal{G}$, if there exists $\ell \in \mathbb{N}$ such that $\gamma_t(G)\leq \ell$ for every $G\in \mathcal{G}$, then an optimal total dominating set of $G$ can be found in $O(n^\ell (n+m))$ time for every $G\in \mathcal{G}$.
	\label{lem-td-opt-ell}
\end{lemma}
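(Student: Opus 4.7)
The plan is to use an exhaustive search over small-sized vertex subsets, exploiting the promised constant-size upper bound $\ell$ on $\gamma_t(G)$, and to verify each candidate via the linear-time testing procedure from Lemma~\ref{td-testing}.

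First, I would observe that because $\gamma_t(G) \leq \ell$, at least one total dominating set of cardinality at most $\ell$ exists in $G$. So the algorithm only needs to look among the subsets of $V(G)$ of size $i$ for $i = 1, 2, \ldots, \ell$. Concretely, I would enumerate all such subsets in non-decreasing order of size, apply Lemma~\ref{td-testing} to each candidate $D$ to check in $O(n+m)$ time whether $D$ is a total dominating set of $G$, and return the first $D$ that passes the test. Correctness follows from the fact that the enumeration exhausts every vertex subset of size at most $\ell$ and the smallest total dominating set must appear among them.

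For the running time analysis, I would note that the number of subsets of $V(G)$ of size exactly $i$ is $\binom{n}{i} \leq n^i$, so the total number of candidates examined is at most $\sum_{i=1}^{\ell} n^i = O(n^\ell)$ (treating $\ell$ as a fixed constant). Each candidate is processed in $O(n+m)$ time by Lemma~\ref{td-testing}, giving a total runtime of $O(n^\ell(n+m))$, as claimed.

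There is no real obstacle here; the statement is essentially a brute-force corollary of Lemma~\ref{td-testing} combined with the hypothesis on $\gamma_t$. The only minor subtleties are (i) making clear that $\ell$ is a fixed constant (independent of the input $G$, since it depends only on the class $\mathcal{G}$), which is what allows the sum $\sum_{i=1}^{\ell}\binom{n}{i}$ to be absorbed into $O(n^\ell)$, and (ii) noting that the algorithm can safely stop at the first successful candidate if it enumerates in order of increasing size, although returning the minimum-size successful candidate over the entire enumeration works equally well within the stated time bound.
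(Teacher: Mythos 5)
Your proposal is correct and follows essentially the same route as the paper: enumerate all vertex subsets of size at most $\ell$, test each candidate with Lemma~\ref{td-testing} in $O(n+m)$ time, and extract a minimum-size total dominating set, for a total of $O(n^\ell(n+m))$ time. The only cosmetic difference is that you return the first success when enumerating by increasing size, whereas the paper collects all total dominating sets of size at most $\ell$ and then takes the minimum; both are equivalent here.
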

\begin{proof}
	Let $G\in \mathcal{G}$, and let $\mathcal{D}_\ell$ be the set of all total dominating sets of size at most $\ell$ in $G$. Also, let $D$ be an optimal total dominating set in $G$. Then, since $|D|=\gamma_t(G)\leq \ell$, $D\in \mathcal{D}_\ell$. Also, $\mathcal{D}_\ell\subseteq \mathcal{V}_\ell =\{X\subseteq V(G)\,:\, |X|\leq \ell\}$. Note that the collection $\mathcal{V}_\ell$ can be found in $O(n^\ell)$ time. Also, for every $X\in \mathcal{V}_\ell$, it can be tested whether $X\in \mathcal{D}_\ell$ in $O(n+m)$ time using Lemma~\ref{td-testing}. This, together with the fact that $|\mathcal{V}_\ell|=O(n^\ell)$, implies that the collection $\mathcal{D}_\ell$ can be found in $O(n^\ell (n+m))$ time. Since every $D'\in \mathcal{D}_\ell$ is a total dominating set, $\gamma_t(G)\leq \min\{|D'|\,:\,D'\in \mathcal{D}_\ell\}$. Further, since $D\in \mathcal{D}_\ell$, $\gamma_t(G)=|D|\geq \min\{|D'|\,:\,D'\in \mathcal{D}_\ell\}$. So, $\gamma_t(G)= \min\{|D'|\,:\,D'\in \mathcal{D}_\ell\}$. Since $\mathcal{D}_\ell$ can be generated in $O(n^\ell (n+m))$ time, so is $\gamma_t(G)$ and an optimal total dominating set of $G$. 
\end{proof}
\noindent The following lemma is helpful in proving a result similar to the above lemma for open packing.
\begin{lemma}[\cite{publish}]
	Given a graph $G$ and a vertex subset $S$ of $G$, it can be tested in $O(n)$ time whether $S$ is an open packing in $G$ or not. \label{thm-openpacking-testing}
\end{lemma}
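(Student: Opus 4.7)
The plan is to reduce testing to the equivalent condition that every vertex $u \in V(G)$ satisfies $|N_G(u) \cap S| \leq 1$. This equivalence is immediate from the definition: two distinct vertices $x, y \in S$ share a common neighbour if and only if some vertex of $G$ has at least two neighbours in $S$. So the test reduces to bounding, for each vertex, its number of neighbours in $S$.

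The natural algorithm would encode $S$ in an $n$-length Boolean array, initialise a counter $c(v) = 0$ for every $v \in V(G)$ in $O(n)$ time, and then scan: for each $s \in S$, walk down its adjacency list, and for each neighbour $v$ encountered, increment $c(v)$; the instant some $c(v)$ hits $2$ we return NO, and if the whole scan completes we return YES. Correctness is transparent, but the running time is $O\!\left(n + \sum_{s \in S}\deg_G(s)\right)$, which is $\Theta(n+m)$ in the worst case and therefore too slow.

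The crucial step to get down to $O(n)$ is a pigeonhole-style early-termination rule. By double counting,
\[
\sum_{s \in S}\deg_G(s) \;=\; \sum_{v \in V(G)}|N_G(v) \cap S|,
\]
so whenever $S$ is an open packing, each summand on the right is at most $1$ and the total is at most $n$. Contrapositively, if a running tally $T$ of adjacency-list entries read so far ever exceeds $n$, we may abort and declare $S$ not an open packing. Augmenting the scan with a check ``if $T > n$ return NO'' therefore preserves correctness while ensuring that the scan itself reads at most $n$ entries before halting. Combined with the $O(n)$ setup of the marking array and the counter array, this gives the claimed $O(n)$ bound.

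The only real obstacle is recognising and justifying this early-termination criterion: without it one only sees $\Theta(n+m)$, and the gain comes entirely from the structural cap $\sum_{s \in S}\deg_G(s) \leq n$ forced on any open packing. Once that observation is in hand, the rest of the proof is bookkeeping and can be written out in a few lines.
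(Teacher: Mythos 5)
Your proposal is correct and is essentially the paper's own argument: the paper's Algorithm~\ref{testing-algo-for-op} marks each neighbour of each $s\in S$ and aborts the instant it revisits a marked vertex, which caps the number of adjacency-list entries read at $n+1$ because all visited neighbours except possibly the last are distinct. One small inaccuracy: your ``natural algorithm'' is already $O(n)$, not $\Theta(n+m)$, since returning NO the instant some counter $c(v)$ reaches $2$ means every completed read raises a counter from $0$ to $1$, so at most $n$ reads can occur before the terminating one; the additional tally check $T>n$ is sound but redundant, and the real source of the speedup is the immediate abort on detecting a second neighbour, exactly as in the paper.
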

\noindent Proof of Lemma~\ref{thm-openpacking-testing} is given in Appendix~\ref{app-sec:op-test}.
\begin{lemma}
	Given a graph class $\mathcal{G}$, if there exists $k\in \mathbb{N}$ such that $\rho^o(G)\leq k$ for every $G\in \mathcal{G}$, then (i) $G$ contains at most $O(n^k)$ open packings, and (ii) all open packings in $G$ can be computed in $O(n^{k+1})$ time for every $G\in \mathcal{G}$. So, $\rho^o(G)$ can be computed in $O(n^{k+1})$ time.
	\label{lem-tc-bound}
\end{lemma}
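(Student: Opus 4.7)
The plan is to mirror the proof of Lemma~\ref{lem-td-opt-ell} almost verbatim, substituting open packing for total dominating set and, crucially, using the faster membership test from Lemma~\ref{thm-openpacking-testing}. Fix $G\in\mathcal{G}$ and let $\mathcal{S}_k$ denote the collection of all open packings of $G$. By hypothesis every $S\in\mathcal{S}_k$ satisfies $|S|\le k$, so $\mathcal{S}_k\subseteq \mathcal{V}_k=\{X\subseteq V(G): |X|\le k\}$.

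For part (i), I would bound the cardinality directly: $|\mathcal{S}_k|\le |\mathcal{V}_k|=\sum_{i=0}^{k}\binom{n}{i}=O(n^k)$, which is the claimed count. For part (ii), I would enumerate $\mathcal{V}_k$ in $O(n^k)$ time by generating all subsets of size at most $k$, and for each candidate $X\in\mathcal{V}_k$ invoke Lemma~\ref{thm-openpacking-testing} to decide in $O(n)$ time whether $X$ is an open packing of $G$. Collecting those $X$ that pass the test produces exactly $\mathcal{S}_k$ in total time $O(n^k\cdot n)=O(n^{k+1})$. Finally, since $\rho^o(G)=\max\{|S|:S\in\mathcal{S}_k\}$, a single scan over the produced list (of length $O(n^k)$) yields $\rho^o(G)$ within the same $O(n^{k+1})$ budget, so an optimal open packing can also be extracted in $O(n^{k+1})$ time.

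There is essentially no genuine obstacle here, since the argument is brute force guided by the assumed bound. The one point I would emphasise is why we get $O(n^{k+1})$ rather than the weaker $O(n^k(n+m))$ one obtains in the total-domination analogue of Lemma~\ref{lem-td-opt-ell}: Lemma~\ref{thm-openpacking-testing} provides an $O(n)$-time membership test (independent of $m$), and this is what allows multiplication by $n$ instead of by $n+m$ in the final running time. Beyond this minor sharpening, the proof is a direct enumeration-and-test template.
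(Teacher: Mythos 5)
Your proposal matches the paper's proof essentially verbatim: both enumerate the $O(n^k)$ vertex subsets of size at most $k$, filter them with the $O(n)$-time open packing test of Lemma~\ref{thm-openpacking-testing}, and extract $\rho^o(G)$ as the maximum cardinality over the resulting collection. The argument is correct and complete.
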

\begin{proof}
	Let $G\in \mathcal{G}$. Let $\mathcal{S}$ be the set of all open packings in $G$. We prove that the set $\mathcal{S}$ can be found in $O(n^{k+1})$ time. Let $\mathcal{V}_k=\{X\subseteq V(G)\,:\, |X|\leq k\}$. Then, since $\rho^o(G)\leq k$, $\mathcal{S}\subseteq \mathcal{V}_k$. Note that the set $\mathcal{V}_k$ is of size $O(n^k)$ and can be found in $O(n^k)$ time. Since $\mathcal{S}\subseteq \mathcal{V}_k$, $|\mathcal{S}|=O(n^k)$. Also, for every vertex subset $X\in \mathcal{V}_k$, it can be tested in $O(n)$ time whether $X\in \mathcal{S}$ using Lemma~\ref{thm-openpacking-testing}. Thus, the set $\mathcal{S}$ can be found in $O(n^{k+1})$ time. Note that $\rho^o(G)=\max\{|S|\,:\, S\in \mathcal{S}\}$. Since the set $\mathcal{S}$ can be found in $O(n^{k+1})$ time, $\rho^o(G)$ can be computed in $O(n^{k+1})$ time. 
\end{proof}
\noindent Next, we study the bounds on the total domination number and the open packing number in the class of $(P_t\cup rK_1)$-free graphs for $t\in \{1,2,3,4\}$ and every non-negative integer $r\geq 3-t$.
\subsubsection*{\boldmath $(P_4\cup rK_1)$-free Graphs}
In this section, we prove that for $r\geq 1$,  $\gamma_t(G)\leq 2r+2$, and $\rho^o(G)\leq 2r+1$ in a connected ($P_4\cup rK_1$)-free graph $G$ (the bound on the open packing number of $(P_4\cup rK_1)$-free graphs along with Lemma~\ref{lem-tc-bound} will solve the sufficiency part of Theorem~\ref{thm-h-dic}). Further, we show that the bound on the open packing number in these graph classes is tight (see Remark~\ref{rem-tight-p4-rk1}) and that the bound on the total domination number is tight for the class of $(P_4\cup K_1)$-free graphs (see Remark~\ref{rem-td-p4-k1-tight}). The following known lemma solves the problems for $P_4$-free graphs.
\begin{lemma}[Folklore]
	Let $G$ be a connected $P_4$-free graph. Then, $\rho^o(G)\leq \gamma_t(G)= 2$.
	\label{p4-td-bound}
\end{lemma}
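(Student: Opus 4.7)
The plan is to invoke the classical Seinsche characterization of connected $P_4$-free graphs: any connected cograph on at least two vertices can be written as a join $G = G_1 + G_2$ of two nonempty induced subgraphs, meaning every vertex of $V(G_1)$ is adjacent in $G$ to every vertex of $V(G_2)$. Equivalently, the complement of such a graph is disconnected. I would either cite this as folklore or derive it in one line by picking any two nonadjacent vertices in the complement and noting that otherwise a $P_4$ appears as an induced subgraph via a shortest path in $\overline{G}$.

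Once the join decomposition $G = G_1 + G_2$ is in hand, the total domination bound is almost immediate. Since $G$ is connected with at least two vertices, it has no isolated vertex, so $\gamma_t(G) \geq 2$ (a singleton set cannot totally dominate itself). For the upper bound, pick arbitrary $a \in V(G_1)$ and $b \in V(G_2)$. By the join, $ab \in E(G)$, so $a$ and $b$ totally dominate each other. Any remaining vertex $w \in V(G) \setminus \{a,b\}$ lies either in $V(G_1) \setminus \{a\}$, in which case $wb \in E(G)$ by the join, or in $V(G_2) \setminus \{b\}$, in which case $wa \in E(G)$. Hence $\{a,b\}$ is a total dominating set of $G$, so $\gamma_t(G) \leq 2$, giving $\gamma_t(G) = 2$.

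Finally, the primal-dual inequality $\rho^o(G) \leq \gamma_t(G)$ recorded in the introduction yields $\rho^o(G) \leq 2$, which is exactly the statement of the lemma. There is no substantial obstacle here; the only ingredient beyond elementary arguments is the join decomposition of connected cographs, which is standard and consistent with the lemma being attributed to folklore. A minor housekeeping point worth confirming is that $|V(G)| \geq 2$ implicitly, since otherwise $\gamma_t(G)$ is not defined; this is consistent with the hypothesis that $G$ is connected and that $\gamma_t(G) = 2$ is the claimed value.
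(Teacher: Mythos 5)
The paper states this lemma as folklore and supplies no proof of its own, so there is no in-paper argument to compare against; your proof is correct and is the standard one. The key step, that a connected $P_4$-free graph on at least two vertices decomposes as a join $G = G_1 + G_2$ (equivalently, has disconnected complement), is exactly Seinsche's characterization of cographs; from there the pair $\{a,b\}$ with $a\in V(G_1)$, $b\in V(G_2)$ is a total dominating set, $\gamma_t(G)\geq 2$ holds for every graph admitting a total dominating set, and $\rho^o(G)\leq\gamma_t(G)$ is the primal--dual inequality already recorded in the paper's introduction. Two small points of hygiene: your one-line derivation of the join decomposition is slightly loose as phrased (the clean statement to invoke is that a graph and its complement cannot both be connected on at least two vertices without an induced $P_4$, which follows from $P_4$ being self-complementary), and your observation that $|V(G)|\geq 2$ must be assumed is consistent with the paper's convention that total domination is only defined for graphs without isolated vertices. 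Neither affects correctness.
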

\noindent The lemma below shows that for $r\geq 1$, the total domination number of a connected $(P_4\cup rK_1)$-free graph is bounded above by a function of $r$.
\begin{lemma}
	For $r\geq 1$, if $G$ is a connected ($P_4\cup rK_1$)-free graph, then $\gamma_t(G)\leq 2r+2$.	\label{p4rk1-td-bound}
\end{lemma}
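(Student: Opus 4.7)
My plan is to avoid induction on $r$ entirely and instead use a single maximality argument on the induced copies of $P_4\cup sK_1$ that $G$ admits. First, I will handle the trivial branch: if $G$ is $P_4$-free, then Lemma~\ref{p4-td-bound} already gives $\gamma_t(G)\leq 2\leq 2r+2$, so from then on I may assume $G$ contains some induced $P_4$.

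Next, I will let $s$ be the largest non-negative integer such that $G$ contains an induced copy of $P_4\cup sK_1$. Such an $s$ exists ($s=0$ works), and the hypothesis that $G$ is $(P_4\cup rK_1)$-free forces $s\leq r-1$. Fixing such an induced copy with path $v_1v_2v_3v_4$ and isolated vertices $u_1,\ldots,u_s$, I set $X=\{v_1,v_2,v_3,v_4,u_1,\ldots,u_s\}$. The key step is the following maximality observation: every vertex $y\in V(G)\setminus X$ must be adjacent to some vertex of $X$, for otherwise the induced subgraph on $X\cup\{y\}$ would be $P_4\cup(s+1)K_1$, contradicting the choice of $s$. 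Hence $V(G)=N[X]$.

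Since $G$ is connected and has at least $4+s$ vertices (and no isolated vertices, as $\gamma_t(G)$ is defined), each $u_i$ has a neighbour $w_i$ in $G$. I will then propose the total dominating set
\[
D \;=\; \{v_1,v_2,v_3,v_4\}\cup\{u_1,\ldots,u_s\}\cup\{w_1,\ldots,w_s\},
\]
of size at most $4+2s\leq 4+2(r-1)=2r+2$. To verify that $D$ totally dominates $G$: each $v_i$ is adjacent to its path-neighbour $v_j\in D$; each $u_i$ is adjacent to $w_i\in D$; each $w_i$ is adjacent to $u_i\in D$; and every remaining vertex $y$ has a neighbour in $X\subseteq D$ by the identity $V(G)=N[X]$ established above. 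This gives $\gamma_t(G)\leq|D|\leq 2r+2$, as required.

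I do not anticipate any serious obstacle; the only delicate point is the boundary case $s=0$, where $D$ collapses to $\{v_1,v_2,v_3,v_4\}$ and the size bound $4\leq 2r+2$ still holds for all $r\geq 1$. The trickiest conceptual step is recognising that choosing $s$ maximally (rather than inducting on $r$) is what automatically makes $X$ a dominating set of $G$, and that exactly one private neighbour $w_i$ per $u_i$ suffices to convert domination into total domination.
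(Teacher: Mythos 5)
Your proof is correct and is essentially the paper's argument: the paper proceeds by induction on $r$, and its inductive step --- take an induced $P_4\cup(r-1)K_1$ if one exists, observe that $(P_4\cup rK_1)$-freeness forces it to dominate $G$, and add one neighbour $w_j$ per isolated vertex to upgrade domination to total domination --- is exactly what your maximal choice of $s$ yields once the induction is unrolled. Replacing the induction with the extremal choice of $s$ is a clean presentational simplification, but the decomposition and the constructed total dominating set are the same.
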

\begin{proof}
	We prove this lemma by induction on $r\geq 0$. Note that Lemma~\ref{p4-td-bound} shows that the bound holds for the case when $r=0$. So, assume that for $r\geq 1$, $\gamma_t(G')\leq 2(r-1)+2$ if $G'$ is a connected ($P_4\cup (r-1)K_1$)-free graph. Next, we show that for every connected ($P_4\cup rK_1$)-free graph $G$, $\gamma_t(G)\leq 2r+2$. Let $G$ be a connected ($P_4\cup rK_1$)-free graph. Suppose that $G$ is ($P_4\cup (r-1)K_1$)-free, then by induction assumption $\gamma_t(G)\leq 2(r-1)+2<2r+2$. So, assume that $G$ contains a $P_4\cup (r-1)K_1$ as an induced subgraph. Let $D=\{x_1,x_2,x_3,x_4,y_1,y_2,\ldots,y_{r-1}\}$ be a vertex subset of $G$ such that $x_ix_{i+1}\in E(G)$ for $i=1,2,3$ and $G[D]\cong P_4\cup (r-1)K_1$ (see Fig.~\ref{fig:draw-p4-r-1k1}). Since $G$ is ($P_4\cup rK_1$)-free, every vertex in $V(G)\setminus D$ is adjacent to some vertex in $D$. Also, since $G$ is connected, for every $j=1,2,\ldots,(r-1)$, there exists a vertex $w_j\in V(G)$ such that $y_jw_j\in E(G)$. Also, the vertices $x_1,x_2,x_3$ and $x_4$ are adjacent to a vertex in $\{x_2,x_3\}\subseteq D$. Thus, every vertex in $V(G)=D\cup (V(G)\setminus D)$ is adjacent to some vertex in $\{w_j\,:\, 1\leq j\leq r-1\}\cup D$. So, $\{w_j\,:\, 1\leq j\leq r-1\}\cup D$ is a total dominating set in $G$, and hence $\gamma_t(G)\leq r-1 +|D|\leq 2(r-1)+4=2r+2$. 
\end{proof}
\begin{figure}
	\centering
	\begin{tikzpicture}[scale=0.95]
		\draw (0,0)node[block1,label=below:$x_1$]{}--(1,0)node[block1,label=below:$x_2$]{}--(2,0)node[block1,label=below:$x_3$]{}--(3,0)node[block1,label=below:$x_4$]{};
		\draw (4,0)node(y1)[block1,label=below:$y_1$]{};
		\draw (5,0)node(y2)[block1,label=below:$y_2$]{};
		\draw[thick] (5.75,0)node{$\cdots$};
		\draw (6.5,0)node(yr-1)[block1,label=below:$y_{r-1}$]{};
		\draw (3.25,-.10)ellipse[x radius= 4, y radius=.8];
		\draw (3.25,3)ellipse[x radius=4, y radius=.8];
		\draw (7.75,0)node{$D$};
		\draw (8.25,3)node{$V(G)\setminus D$};
		\draw (4,2.85)node[block1,label=above:$w_1$]{}--(y1);
		\draw (5,2.85)node[block1,label=above:$w_2$]{}--(y2);
		\draw[thick] (5.75,2.85)node{$\cdots$};
		\draw (6.5,2.85)node[block1,label=above:$w_{r-1}$]{}--(yr-1);
	\end{tikzpicture}
	\captionsetup{format=hang}
	\caption{A connected ($P_4\cup rK_1$)-free graph $G$ with $D=\{x_1,x_2,x_3,x_4,y_1,$ $y_2,\ldots,y_{r-1}\}$ $\subseteq V(G)$ such that $G[D]\cong (P_4\cup (r-1)K_1)$. It is possible that $w_i=w_j$ for $i\neq j$.}
	\label{fig:draw-p4-r-1k1}
\end{figure}
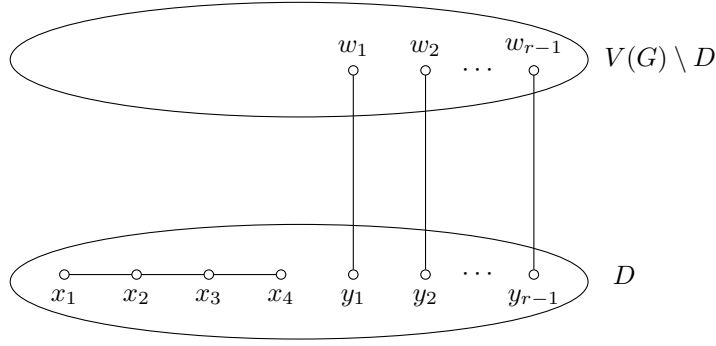
\begin{remark}
	 For $r=1$, the bound given in Lemma~\ref{p4rk1-td-bound} is tight, and a cycle on six vertices ($C_6$) is a connected $(P_4\cup K_1)$-free graph that satisfies this bound. Also, the graphs obtained by replacing every vertex of $C_6$ with a complete graph will remain as a ($P_4\cup K_1$)-free graph with total domination number four. For every $r>1$, there exists a $(P_4\cup rK_1)$-free graph $G_r$ with $\gamma_t(G_r)=2r+1$ (see Remark~\ref{rem-tight-p4-rk1} and Fig.~\ref{fig:tight-p4rk1-op}). Therefore, the bound given in Lemma~\ref{p4rk1-td-bound} is a near-optimal upper bound.
	\label{rem-td-p4-k1-tight}
\end{remark}

\noindent The following corollary follows from Lemmas~\ref{p4-td-bound} and \ref{p4rk1-td-bound}, and the fact that for every $r\geq 1$, every $(K_2\cup rK_1)$-free graph is a ($P_4\cup (r-1)K_1$)-free graph.
\begin{corollary}
	For every $r\geq 1$, the total domination number of a connected $(K_2\cup rK_1)$-free graph is bounded above by $2r$.
	\label{k2-td-cor} 
\end{corollary}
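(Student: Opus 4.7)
The plan is to leverage the two lemmas immediately preceding the corollary by showing that $(K_2 \cup rK_1)$-free is a \emph{stronger} restriction than $(P_4 \cup (r-1)K_1)$-free. Once that inclusion is in hand, Lemmas~\ref{p4-td-bound} and \ref{p4rk1-td-bound} furnish the bound with essentially no extra work, the only subtlety being a small case split based on whether $r=1$ or $r \geq 2$.

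First I would establish the key structural observation: for every $r \geq 1$, every $(K_2 \cup rK_1)$-free graph is also $(P_4 \cup (r-1)K_1)$-free. To see this, suppose $G$ contains an induced copy of $P_4 \cup (r-1)K_1$ with the $P_4$ on vertices $a_1 a_2 a_3 a_4$ and the remaining $r-1$ isolated vertices $b_1, \ldots, b_{r-1}$. Then the vertex set $\{a_1, a_2, a_4, b_1, \ldots, b_{r-1}\}$ induces a subgraph isomorphic to $K_2 \cup rK_1$ in $G$ (the edge $a_1a_2$, together with $a_4$ which is non-adjacent to both $a_1$ and $a_2$ in the induced $P_4$, and the $r-1$ isolated vertices). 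This contradicts $(K_2 \cup rK_1)$-freeness, completing the implication.

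With the containment in place, I would split into two cases. If $r = 1$, then $(K_2 \cup K_1)$-free implies $P_4$-free (since $P_4$ itself contains $K_2 \cup K_1$ as an induced subgraph), so Lemma~\ref{p4-td-bound} gives $\gamma_t(G) \leq 2 = 2r$. If $r \geq 2$, then by the observation above $G$ is $(P_4 \cup (r-1)K_1)$-free with $r-1 \geq 1$, and Lemma~\ref{p4rk1-td-bound} yields $\gamma_t(G) \leq 2(r-1) + 2 = 2r$. Both cases meet the desired bound.

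I do not expect any genuine obstacle here: the corollary is a short deduction from the lemmas already proved. The only thing to be careful about is the boundary case $r=1$, where the parameter $r-1$ in Lemma~\ref{p4rk1-td-bound} drops to $0$ and the lemma technically handles it (via the induction base $P_4$-free), but it is cleaner to invoke Lemma~\ref{p4-td-bound} directly for that case.
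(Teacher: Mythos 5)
Your proposal is correct and follows exactly the paper's route: the paper derives the corollary from Lemmas~\ref{p4-td-bound} and \ref{p4rk1-td-bound} together with the same observation that every $(K_2\cup rK_1)$-free graph is $(P_4\cup(r-1)K_1)$-free. You have merely spelled out the details (the induced $K_2\cup rK_1$ inside $P_4\cup(r-1)K_1$ via $\{a_1,a_2,a_4,b_1,\ldots,b_{r-1}\}$ and the $r=1$ versus $r\geq 2$ split) that the paper leaves implicit.
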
\noindent The following observation that follows from the definition of open packing is helpful in proving Lemma~\ref{p4rk1-op-bound}.
\begin{observation}[\cite{Raja2020}]
	Let $S$ be an open packing in a graph $G$. Then, $G[S]$ is $\{P_3,K_3\}$-free, i.e., $G[S]$ is a union of $K_2$'s and $K_1$'s.
	\label{obs-induced-s}
\end{observation}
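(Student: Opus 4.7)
The plan is to derive a contradiction directly from the definition of an open packing. Suppose, for contradiction, that $G[S]$ contains an induced copy of either $P_3$ or $K_3$. In either case, there exist three distinct vertices $x,y,z \in S$ such that $xy, yz \in E(G)$ (for $P_3$ this is immediate; for $K_3$ take any labelling of the three triangle vertices). Then $y \in N_G(x) \cap N_G(z)$, contradicting the assumption that $S$ is an open packing, since $x$ and $z$ are distinct elements of $S$ sharing the common neighbour $y$.

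Once $G[S]$ is shown to be $\{P_3,K_3\}$-free, I would conclude the structural part as follows. Since $G[S]$ contains no $P_3$, no vertex of $G[S]$ can have two distinct neighbours in $G[S]$; hence every vertex of $G[S]$ has degree at most one in $G[S]$. Equivalently, $G[S]$ is a disjoint union of isolated vertices and isolated edges, i.e., a union of copies of $K_1$ and $K_2$. (The additional $K_3$-freeness is then automatic, but it is convenient to state both exclusions in the observation since the forbidden patterns $P_3$ and $K_3$ are exactly the connected graphs on three vertices.)

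There is no real obstacle here: the argument is a two-line unfolding of the definition of open packing, using only the fact that an internal or triangle-central vertex witnesses a shared neighbour. The only thing worth being careful about is handling $P_3$ and $K_3$ uniformly, which is achieved by the single observation that both contain a path $xyz$ of length two with all three vertices in $S$, so that the middle vertex lies in $N_G(x) \cap N_G(z)$.
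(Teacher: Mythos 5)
Your argument is correct and matches what the paper intends: the paper states this observation without proof (citing it and noting that it ``follows from the definition of open packing''), and your unfolding --- a common middle vertex $y$ of a path $xyz$ inside $S$ lies in $N_G(x)\cap N_G(z)$, hence degree at most one in $G[S]$ --- is exactly that intended one-line argument. Nothing further is needed.
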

\noindent The following lemma proves the bound on the open packing number in the class of $(P_4\cup rK_1)$-free graphs. 
\begin{lemma}
	For $r\geq 1$, if $G$ is a connected ($P_4\cup rK_1$)-free graph, then $\rho^o(G)\leq 2r+1$. \label{p4rk1-op-bound}
\end{lemma}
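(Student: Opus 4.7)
The plan is to prove the contrapositive: assume for contradiction that some open packing $S$ in $G$ has $|S|\geq 2r+2$. If $G$ is $P_4$-free, Lemma~\ref{p4-td-bound} gives $\rho^o(G)\leq 2\leq 2r+1$, a contradiction; so I may assume $G$ contains an induced $P_4$ on $\{x_1,x_2,x_3,x_4\}$. Let $A=V(G)\setminus N[\{x_1,\ldots,x_4\}]$. Since $G$ is $(P_4\cup rK_1)$-free, $\alpha(G[A])\leq r-1$. The open packing property $|S\cap N(x_i)|\leq 1$ together with the identity $N[\{x_1,\ldots,x_4\}]=\bigcup_{i=1}^{4}N(x_i)$ (which holds because each $x_j$ is adjacent to some other $x_k$) gives $|S\cap N[\{x_1,\ldots,x_4\}]|\leq 4$. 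By Observation~\ref{obs-induced-s}, $G[S\cap A]$ is a disjoint union of $K_1$'s and $K_2$'s, so choosing one vertex per $K_2$ together with all $K_1$'s yields an independent set in $A$; hence $|S\cap A|\leq 2\alpha(G[A])\leq 2(r-1)$. Together these bounds give $|S|\leq 2r+2$.

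To obtain the strict improvement $|S|\leq 2r+1$, I would then suppose $|S|=2r+2$ and derive a contradiction. In this extremal case both bounds are tight, so there exist distinct vertices $a_1,a_2,a_3,a_4$ with $\{a_i\}=S\cap N(x_i)$, and $S\cap A$ consists of exactly $r-1$ disjoint edges $\{u_j,v_j\}$. I would then establish two structural facts: (i)~each $a_i$ is adjacent only to $x_i$ among $\{x_1,\ldots,x_4\}$, since otherwise $a_i\in S\cap N(x_k)=\{a_k\}$ for some $k\neq i$, contradicting distinctness --- in particular, this forces $a_1$ and $a_4$ to be outer vertices; and (ii)~no vertex of $S\cap A$ is adjacent to any $a_i$, because otherwise a vertex of $S\cap A$ would have its unique $G[S]$-neighbour outside $A$ and thus be isolated in $G[S\cap A]$, contradicting $|S\cap A|=2(r-1)$.

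The last and most delicate step is to exhibit a second induced $P_4'$ in $G$ so that $V(G)\setminus N[P_4']$ contains an independent set of size $r$, contradicting the $(P_4\cup rK_1)$-freeness of $G$. The choice of $P_4'$ depends on the structure of $G[S\cap N[\{x_1,\ldots,x_4\}]]$, which as a $4$-vertex disjoint union of $K_1$'s and $K_2$'s is one of $4K_1$, $K_2\cup 2K_1$, or $2K_2$. In the generic configuration I would take $P_4'=a_1x_1x_2x_3$, which is induced by~(i); then~(i) and~(ii) place $a_4$ in $V(G)\setminus N[P_4']$ whenever $a_4$ is not the pair-mate of $a_1$ in $G[S]$, so $\{a_4,u_1,\ldots,u_{r-1}\}$ is the required size-$r$ independent set. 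The residual sub-cases, in which $a_1$ and $a_4$ share a $K_2$ of $G[S\cap N[P_4]]$, are handled by switching the auxiliary path: use $P_4'=a_2x_2x_3a_3$ when $a_2,a_3$ are singletons (so $a_1$ lies in $V(G)\setminus N[P_4']$), $P_4'=a_2x_2x_1a_1$ when the pairs are $\{a_1,a_4\},\{a_2,a_3\}$ (so $x_4$ lies in $V(G)\setminus N[P_4']$), and $P_4'=a_1a_2x_2x_3$ when the pairs are $\{a_1,a_2\},\{a_3,a_4\}$ (so $a_4$ lies in $V(G)\setminus N[P_4']$). The main obstacle is precisely this case analysis on the pairing pattern of the $a_i$'s, since the natural choice $P_4'=a_1x_1x_2x_3$ breaks down exactly when $a_4$ is pulled into $N[P_4']$ by being the pair-mate of $a_1$; the remaining configurations each require a path $P_4'$ that uses at least one additional $a_i$ in place of an $x$ from the original $P_4$.
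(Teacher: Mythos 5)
Your proof is correct, but it takes a genuinely different route from the paper's. The paper argues by induction on $r$: it locates an induced $P_4\cup(r-1)K_1$ on a set $D$ (falling back on the inductive hypothesis if none exists), observes that $D$ dominates $G$, and bounds $|S|\leq|S\cap\bigcup_i N(x_i)|+\sum_j|S\cap N[y_j]|\leq 4+2(r-1)$; equality is then excluded by just two cases on whether $z_1z_4\in E(G)$, using the path $x_1z_1z_4x_4$ together with the singleton $z_2$ in the adjacent case. You avoid induction entirely: from a single induced $P_4$ you pass to the antineighbourhood $A$, note $\alpha(G[A])\leq r-1$ directly from $(P_4\cup rK_1)$-freeness, and extract the bound $|S\cap A|\leq 2\alpha(G[A])$ from Observation~\ref{obs-induced-s}, so the $r-1$ ``blocking'' pairs emerge from the extremal structure of $S$ itself rather than from a pre-chosen $(r-1)K_1$. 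What your version buys is a cleaner global structure (no induction, no dichotomy on whether $G$ contains $P_4\cup(r-1)K_1$); what it costs is a heavier endgame, since your auxiliary independent set $\{u_1,\ldots,u_{r-1}\}$ must be supplemented by a vertex chosen according to the pairing pattern of $a_1,\ldots,a_4$. Two small remarks on that endgame: your third residual sub-case (pairs $\{a_1,a_2\},\{a_3,a_4\}$) is vacuous, since there $a_4$ is not the pair-mate of $a_1$ and your generic choice $P_4'=a_1x_1x_2x_3$ already applies; and your first two residual sub-cases could be merged by borrowing the paper's path $x_1a_1a_4x_4$ with the singleton $a_2$ (which is non-adjacent to $a_1,a_4$ by Observation~\ref{obs-induced-s} and distinct from $x_1,x_4$ for the same reason), which handles $a_1a_4\in E(G)$ uniformly. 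All the individual steps you state do check out, including the implicit facts that $a_1,a_4\notin\{x_1,\ldots,x_4\}$ and that equality in $|S\cap A|\leq 2(r-1)$ forces $G[S\cap A]$ to be a perfect matching on $r-1$ edges.
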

\begin{proof}
	We use induction on $r\geq 1$ to prove this lemma as well. Firstly, we prove the bound for the case $r=1$. Let $G$ be a connected $(P_4\cup K_1)$-free graph. Then, $\rho^o(G)\leq 2< 2r+1=3$ if $G$ is $P_4$-free by Lemma~\ref{p4-td-bound}. So, assume that $G$ contains a $P_4$. Let $D=\{x_1,x_2,x_3,x_4\}$ be a vertex subset of $G$ that induces a $P_4$ in $G$ with $x_ix_{i+1}\in E(G)$ for $i=1,2,3$. Then, since $G$ is $(P_4\cup K_1)$-free, every vertex in $V(G)\setminus D$ is adjacent to some vertex in $\{x_1,x_2,x_3,x_4\}$. Also, every vertex in $\{x_1,x_2,x_3,x_4\}$ is adjacent to a vertex in $\{x_2,x_3\}$. So, $V(G)=\bigcup\limits_{i=1}^4N(x_i)$. Thus, for any open packing $S$ in $G$, $S=S\cap V(G)=\bigcup\limits_{i=1}^4(S\cap N(x_i))$. Since $S$ is an open packing, $|S\cap N(x_i)|\leq 1$ for $i=1,2,3,4$. Thus, $|S|\leq \sum\limits_{i=1}^4|S\cap N(x_i)|\leq 4$. So, to prove the claim, it is enough to disprove the possibility of an open packing of size four in $G$. On the contrary, assume that there exists an open packing $S$ of size four in $G$. Then, by the above arguments, we can conclude that $|S\cap N(x_i)|=1$ for every $i=1,2,3,4$. Let $S\cap N(x_i)=\{z_i\}$. So, $S=\{z_1,z_2,z_3,z_4\}$. This, together with the assumption that $|S|=4$, implies $z_i\neq z_j$ (and so $x_iz_j\notin E(G)$) for $i,j\in \{1,2,3,4\}$ and $i\neq j$. Note that $z_1x_1\in E(G)$ (i.e., $z_1\neq x_1$) whereas $x_1x_3,x_1x_4\notin E(G)$, so $z_1\notin \{x_1,x_3,x_4\}$. Further, $z_1\neq x_2$ since $x_2x_3\in E(G)$ and $z_1x_3\notin E(G)$. Similarly, it can be proved that $z_4\notin \{x_1,x_2,x_3,x_4\}$. Therefore, $z_1z_4\in E(G)$, else $\{z_1,x_1,x_2,x_3\}\cup \{z_4\}$ induces a $(P_4\cup K_1)$ in $G$, a contradiction. Since $z_1z_4\in E(G)$, we have $z_1z_2,z_2z_4\notin E(G)$ because of Observation~\ref{obs-induced-s}. This, together with the fact that $z_1x_1,z_4x_4\in E(G)$, implies that $z_2\notin \{ x_1,x_4\}$. Further, since $x_1z_2,z_2x_4\notin E(G)$, the vertex subset $\{x_1,z_1,z_4,x_4\}\cup \{z_2\}$ induces a $(P_4\cup K_1)$ in $G$, a contradiction. So $\rho^o(G)\leq 3$. Next, we prove the case $r\geq 2$.\\
	\iffalse
	\begin{figure}
		\centering
		\begin{tikzpicture}
			\draw (0,0)node(x1)[block1,label=below:$x_1$]{}--+(1,0)node(x2)[block1,label=below:$x_2$]{}--+(2,0)node(x3)[block1,label=below:$x_3$]{}--+(3,0)node(x4)[block1,label=below:$x_4$]{}--+(3,1)node(z4)[block1,label=right:$z_4$]{};
			\draw (x3)--+(0,1)node(z3)[block1,label=above:$z_3$]{};
			\draw (x2)--+(0,1)node(z2)[block1,label=above:$z_2$]{};
			\draw (x1)--+(0,1)node(z1)[block1,label=above:$z_1$]{};
			\draw (z1)..controls(1.25,2)and(1.75,2)..(z4);
			\draw (1.5,-1)node{(a)};
			
				\draw (6,0)node(x1)[block1,label=below:$x_1$]{}--+(1,0)node(x2)[block1,label=below:$x_2$]{}--+(2,0)node(x3)[block1,label=below:$x_3$]{}--+(3,0)node(x4)[block1,label=below:$x_4$]{}--+(3,1)node(z4)[block1,label=right:$z_4$]{};
			\draw (x3)--+(0,1)node(z3)[block1,label=above:$z_3$]{};
			\draw (x2)--+(0,1)node(z2)[block1,label=above:$z_2$]{};
			\draw (x1)--+(0,1)node(z1)[block1,label=above:$z_1$]{};
			\draw[dashed] (z1)..controls(7.25,2)and(7.75,2)..(z4);
			\draw (7.5,-1)node{(b)};
		\end{tikzpicture}
		\captionsetup{format=hang}
		\caption{The part of a $(P_4\cup K_1)$-free graph $G$ as in the proof of Lemma~\ref{p4rk1-op-bound}, where the vertices $\{x_1,x_2,x_3,x_4\}$ induces a $P_4$ and $S=\{z_1,z_2,z_3,z_4\}$ is an open packing such that (a) $z_1z_4\in E(G)$ and (b) $z_1z_4\notin E(G)$. Note that when (b) holds, $z_2$ (resp. $z_3$) can be same as $x_1$ (resp. $x_4$).} 
		\label{fig-p4rk1-proof}
	\end{figure}
	\fi
	\textit{Induction Assumption}: Assume that for every connected $(P_4\cup (r-1)K_1)$-free graph $G'$, $\rho^o(G')\leq 2(r-1)+1$.\\
	Let $G$ be a $(P_4\cup rK_1)$-free graph. If $G$ does not contain any vertex subset that induces a $(P_4\cup (r-1)K_1)$, then $G$ is $(P_4\cup (r-1)K_1)$-free, and so by induction assumption, $\rho^o(G)\leq 2(r-1)+1< 2r+1$. So, assume that $G$ has a vertex subset $D=\{x_1,x_2,x_3,x_4,y_1,y_2,\ldots,y_{r-1}\}$ that induces a $(P_4\cup (r-1)K_1)$ in $G$ with $x_ix_{i+1}\in E(G)$ for $i=1,2,3$. Since $G$ is $(P_4\cup rK_1)$-free, every vertex in $V(G)\setminus D$ is adjacent to some vertex in $D$, i.e., $D$ is a dominating set in $G$. So, $V(G)=N[x_1]\cup N[x_2]\cup N[x_3]\cup N[x_4]\cup N[y_1]\cup N[y_2]\cup \ldots \cup N[y_{r-1}]$. Hence, for any open packing $S$ in $G$, $S=\bigcup\limits_{u\in D}(S\cap N[u])$. Note that $|S\cap N[u]|\leq |S\cap N(u)|+|S\cap \{u\}|\leq 2$. Also, since $x_1,x_2,x_3,x_4\in N(x_2)\cup N(x_3)$, $\bigcup\limits_{i=1}^4N[x_i]=\bigcup\limits_{i=1}^4N(x_i)$. So, $|S\cap (\bigcup\limits_{i=1}^4 N[x_i])|= |S\cap (\bigcup\limits_{i=1}^4 N(x_i))|=|\bigcup\limits_{i=1}^4(S\cap N(x_i))|\leq 4$. Therefore, $|S|=|S\cap V(G)|=|S\cap (N[x_1]\cup N[x_2]\cup N[x_3]\cup N[x_4]\cup N[y_1]\cup N[y_2]\cup \ldots \cup N[y_{r-1}])|=|\bigcup\limits_{i=1}^4(S\cap N(x_i))|+|\bigcup\limits_{j=1}^{r-1}(S\cap N[y_j])|\leq 4+2(r-1)=2r+2$. Thus, to prove the claim, it is enough to rule out the possibility of $|S|=2r+2$. On the contrary, assume that there exists an open packing $S$ of size $2r+2$ in $G$. Then, by the above arguments, we can conclude that $|S\cap N[y_j]|=2$ for $j=1,2,\ldots,r-1$ and $|S\cap (\bigcup\limits_{i=1}^4 N(x_i))|= 4$. Since $|S\cap N(y_i)|\leq 1$ and $|S\cap N[y_i]|=2$, $y_i\in S$ and $S\cap N(y_i)\neq \emptyset$. Let $S\cap N[y_j]=\{w_j,y_j\}$ for $j=1,2,\ldots,r-1$, and let $S\cap (\bigcup\limits_{i=1}^4 N(x_i))=\{z_1,z_2,z_3,z_4\}$ with $x_iz_i\in E(G)$ (then, since $|S\cap N(x_i)|\leq 1$, we have $S\cap N(x_i)=\{z_i\}$ for every $i=1,2,3,4$). So, $S=\left(\bigcup\limits_{i=1}^4\{z_i\}\right)\cup \left(\bigcup\limits_{j=1}^{r-1}\{y_j\}\right)\cup \left(\bigcup\limits_{j=1}^{r-1}\{w_j\}\right)$. This, together with the fact that $|S|=2r+2$, imply that every vertex in the $(2r+2)$-tuple $(z_1,z_2,z_3,z_4,y_1,y_2,\ldots,y_{r-1},w_1,w_2,\ldots,w_{r-1})$ is distinct. This implies that for $s\in \{1,2,3,4\}$, $z_su\in E(G)$ for some $u\in D$ if and only if $u=x_s$ because for every $u\in D$, (i) $S\cap N(u)=\{z_i\}$ if $u=x_i$ for $i\in\{1,2,3,4\}$ and (ii) $S\cap N(u)=\{w_j\}$ if $u=y_j$ for $j\in \{1,2,\ldots,r-1\}$. We complete our claim in two cases based on whether $z_1z_4\in E(G)$ or not.\\
	\emph{Case 1:} $z_1z_4\in E(G)$.\\
	Since $G[S]$ is $\{P_3,K_3\}$-free by Observation~\ref{obs-induced-s}, $z_1z_2,z_2z_4\notin E(G)$. So, $z_2\neq x_1$ since $z_1x_1\in E(G)$. Then, $\{x_1,z_1,z_4,x_4\}\cup \{z_2\}\cup\{y_1\}\cup\{y_2\}\cup \cdots\cup \{y_{r-1}\}$ induces a $(P_4\cup rK_1)$ in $G$, a contradiction.\\
	\emph{Case 2:} $z_1z_4\notin E(G)$.\\
	Then, $\{z_1,x_1,x_2,x_3\}\cup \{z_4\}\cup\{y_1\}\cup\{y_2\}\cup \cdots\cup \{y_{r-1}\}$ induces a $(P_4\cup rK_1)$ in $G$, a contradiction.\\
	As we have arrived at a contradiction in both cases, $\rho^o(G)\leq 2r+1$.
\end{proof}

\begin{remark}
	The bound given in Lemma~\ref{p4rk1-op-bound} is tight. For example, let $G_r$ be a graph obtained by subdividing exactly $r$ edges of $K_{1,r+1}$ by four vertex paths. Graph $G_3$ is shown in Fig.~\ref{fig:tight-p4rk1-op}. The vertex and edge set of $G_r$ can be defined as $V(G_r)=(\cup_{i=1}^r\{x_i,y_i,z_i\})\cup \{u,v\}$ and $E(G_r)=(\cup_{i=1}^r\{x_iy_i,y_iz_i,z_iu\})\cup \{uv\}$. Then, $G_r$ is ($P_4\cup rK_1$)-free (proof of this statement is given as Claim~\ref{claim-gr-of-pr-rk1} in Appendix~\ref{app-sec:claim_Gr}), and $S_r=(\cup_{i=1}^r\{x_i,y_i\})\cup \{v\}$ is an open packing in $G$ of size $2r+1$. Also, note that the graph obtained by replacing the vertex $u$ in $G_r$ with a complete graph will remain a $(P_4\cup rK_1)$-free graph with open packing number $2r+1$. Further, $\gamma_t(G_r)=\rho^o(G)=2r+1$ with $D=\{u,z_1,z_2,\ldots,z_r,y_1,y_2,\ldots,y_r\}$ as a total dominating set in $G_r$.
	\label{rem-tight-p4-rk1}
\end{remark}
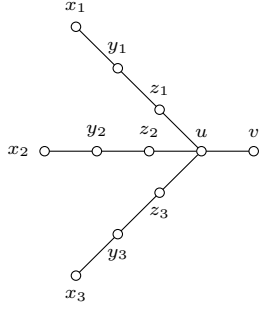
\begin{figure}
	\centering
	\begin{tikzpicture}[scale=0.55]
		\draw (0,0)node[block1,label=above:\scriptsize $x_1$]{}--(1,-1)node[block1,label=above:\scriptsize$y_1$]{}--(2,-2)node[block1,label=above:\scriptsize$z_1$]{}--(3,-3)node(u)[block1,label=above:\scriptsize$u$]{}--(4.25,-3)node[block1,label=above:\scriptsize$v$]{};
		\draw (-.75,-3)node[block1,label=left:\scriptsize$x_2$]{}--(0.5,-3)node[block1,label=above:\scriptsize$y_2$]{}--(1.75,-3)node[block1,label=above:\scriptsize$z_2$]{}--(u);
		\draw (0,-6)node[block1,label=below:\scriptsize$x_3$]{}--(1,-5)node[block1,label=below:\scriptsize$y_3$]{}--(2,-4)node[block1,label=below:\scriptsize$z_3$]{}--(u);
	\end{tikzpicture}
	\captionsetup{format=hang}
	\caption{A ($P_4\cup 3K_1$)-free graph $G_3$ defined in Remark~\ref{rem-tight-p4-rk1} with an open packing $S_3=\{x_1,x_2,x_3,y_1,y_2,y_3,v\}$ of size $7=(2(3)+1)$. Also, $\gamma_t(G_3)=7$ with $D=\{u,z_1,y_1,z_2,y_2,z_3,y_3\}$ as a total dominating set in $G_3$.}
	\label{fig:tight-p4rk1-op}
\end{figure}
\subsubsection*{\boldmath ($P_3\cup rK_1$)-free Graphs}
In this section, we prove that for $r\geq 1$,  $\gamma_t(G)\leq 2r+1$, and $\rho^o(G)\leq 2r$ for every connected ($P_3\cup rK_1$)-free graph $G$. Further, we show that the bound on (i) the open packing number in this graph class is tight and (ii) the total domination number is tight for the class of $(P_3\cup K_1)$-free graphs. 
\begin{remark}
	Every connected $P_3$-free graph is a complete graph, and hence $\rho^o(G)\leq \gamma_t(G)= 2$.
	\label{rem-p3-free}
\end{remark}
\begin{lemma}
	For $r\geq 1$, if $G$ is a connected ($P_3\cup rK_1$)-free graph, then $\gamma_t(G)\leq 2r+1$.	\label{p3rk1-td-bound}
\end{lemma}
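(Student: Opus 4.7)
My plan is to mirror the induction used in Lemma~\ref{p4rk1-td-bound} (the $(P_4\cup rK_1)$-free analogue), since the structural ingredients are essentially the same, only with a $P_3$ playing the role of the $P_4$. Specifically, I would induct on $r\geq 1$, with Remark~\ref{rem-p3-free} providing the necessary seed ($\gamma_t\le 2$ for every connected $P_3$-free graph, which is a complete graph).

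For the base case $r=1$, I would observe that a connected $(P_3\cup K_1)$-free graph $G$ is either $P_3$-free, in which case Remark~\ref{rem-p3-free} already yields $\gamma_t(G)\le 2\le 3$, or else $G$ contains an induced $P_3$ on vertices $x_1x_2x_3$. In the latter case the $(P_3\cup K_1)$-freeness forces every vertex of $V(G)\setminus\{x_1,x_2,x_3\}$ to be adjacent to one of $x_1,x_2,x_3$; combined with the edges $x_1x_2$ and $x_2x_3$ (which ensure each $x_i$ has a neighbour inside the set), this makes $\{x_1,x_2,x_3\}$ a total dominating set of $G$, giving $\gamma_t(G)\le 3=2(1)+1$.

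For the inductive step, I would assume the bound for $r-1$ and take a connected $(P_3\cup rK_1)$-free graph $G$. If $G$ happens to be $(P_3\cup(r-1)K_1)$-free, the induction hypothesis gives $\gamma_t(G)\le 2r-1<2r+1$. Otherwise, I would fix an induced copy $D=\{x_1,x_2,x_3,y_1,\ldots,y_{r-1}\}$ of $P_3\cup(r-1)K_1$ in $G$ (with $x_1x_2,x_2x_3\in E(G)$). The $(P_3\cup rK_1)$-freeness of $G$ forces every vertex of $V(G)\setminus D$ to have a neighbour in $D$, and connectedness of $G$ provides, for each $y_j$, a vertex $w_j\in V(G)\setminus\{y_j\}$ adjacent to $y_j$. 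Then the set
\[
T \;=\; \{x_1,x_2,x_3\}\cup\{y_1,\ldots,y_{r-1}\}\cup\{w_1,\ldots,w_{r-1}\}
\]
is a total dominating set of $G$: vertices outside $D$ are dominated by $D\subseteq T$; the $x_i$'s are totally dominated by each other via the $P_3$-edges; and each $y_j$ is dominated by $w_j\in T$ while each $w_j$ is dominated by $y_j\in T$. Since $|T|\le 3+(r-1)+(r-1)=2r+1$, the bound follows.

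I do not expect a genuine obstacle here — the argument is an almost routine adaptation of Lemma~\ref{p4rk1-td-bound}. The only point requiring a little care is that the $w_j$'s are not required to be distinct (they may coincide with one another or with vertices of $D$), so I would write the bound as $|T|\le 2r+1$ using the multiset size and remark that this only helps. The slight saving of $1$ compared to the $(P_4\cup rK_1)$-free bound comes precisely from replacing the four-vertex path by a three-vertex path in the argument.
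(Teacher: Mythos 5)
Your proposal is correct and follows essentially the same route as the paper's proof: a directly verified base case $r=1$ where the induced $P_3$ itself totally dominates, and an inductive step that augments an induced copy of $P_3\cup(r-1)K_1$ with one neighbour $w_j$ per isolated vertex $y_j$ to obtain a total dominating set of size at most $2r+1$. The point you flag about the $w_j$'s possibly coinciding is handled the same way in the paper (the bound is stated as an inequality on the set's cardinality), so there is nothing further to add.
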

\begin{proof}
	Firstly, we prove the case $r=1$. Let $G$ be a connected ($P_3\cup K_1$)-free graph. If $G$ is $P_3$-free, then $\gamma_t(G)= 2$. So, assume that $G$ contains a $P_3$. Let $D=\{x_1,x_2,x_3\}$ be a subset of $V(G)$ that induces a $P_3$ in $G$ with $x_1x_2,x_2x_3\in E(G)$. Since $G$ is ($P_3\cup K_1$)-free, every vertex in $V(G)\setminus D$ is adjacent to some vertex in $D$. Further, since the vertices in $\{x_1,x_2,x_3\}$ are adjacent to a vertex in $\{x_1,x_2\}\subseteq D$, every vertex in $V(G)=(V(G)\setminus D)\cup D$ is adjacent to some vertex in $D$. Hence, $D$ is a total dominating set in $G$, and $\gamma_t(G)\leq |D|=3=2r+1$. Next, we prove the case $r\geq 2$ by induction on $r$.\\
	\textit{Induction Assumption}: For $r\geq 2$, if $G'$ is a connected ($P_3\cup (r-1)K_1$)-free graph, then $\gamma_t(G')\leq 2(r-1)+1$.\\
	Let $G$ be a ($P_3\cup rK_1$)-free graph. Suppose that $G$ does not contain ($P_3\cup (r-1)K_1$) as an induced subgraph, then $G$ is ($P_3\cup (r-1)K_1$)-free, and so by induction assumption, $\gamma_t(G)\leq 2(r-1)+1<2r+1$. So, assume that $G$ contains $(P_3\cup (r-1)K_1)$ as an induced subgraph. Let $D=\{x_1,x_2,x_3,y_1,y_2,\ldots,y_{r-1}\}$ be a subset of $V(G)$ that induces $(P_3\cup (r-1)K_1)$ in $G$ with $x_1x_2,x_2x_3\in E(G)$. Since $G$ is ($P_3\cup rK_1$)-free, every vertex in $V(G)\setminus D$ is adjacent to a vertex in $D$. Also, since $D$ is connected, for every isolated vertex $y_j$ of $G[D]$, there exists a vertex $w_j\in V(G)$ such that $w_jy_j\in E(G)$. Further, every vertex in  $\{x_1,x_2,x_3\}$ is adjacent to a vertex in $\{x_1,x_2\}\subseteq D$. Thus, every vertex in $V(G)=(V(G)\setminus D)\cup D$ is adjacent to some vertex in $D\cup \{w_j\,:\,1\leq j\leq r-1\}$. Hence, $D\cup \{w_j\,:\,1\leq j\leq r-1\}$ is a total dominating set in $G$. So, $\gamma_t(G)\leq |D\cup \{w_j\,:\,1\leq j\leq r-1\}|=3+(r-1)+(r-1)=2r+1$. 
\end{proof}
\noindent Next, we prove the bound on the open packing number in the class of $(P_3\cup rK_1)$-free graphs.
\begin{lemma}
	For $r\geq 1$, if $G$ is a connected ($P_3\cup rK_1$)-free graph, then $\rho^o(G)\leq 2r$.
	\label{p3rk1-op-bound}
\end{lemma}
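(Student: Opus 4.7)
The plan is to mimic the structure of the proof of Lemma~\ref{p4rk1-op-bound} and proceed by induction on $r\ge 1$. For the base case $r=1$, Remark~\ref{rem-p3-free} handles the $P_3$-free subcase, so I would fix an induced $P_3$ on $\{x_1,x_2,x_3\}$ with $x_1x_2,x_2x_3\in E(G)$; since $G$ is $(P_3\cup K_1)$-free, $V(G)=N(x_1)\cup N(x_2)\cup N(x_3)$, whence any open packing $S$ satisfies $|S|\le 3$. For $r\ge 2$, if $G$ is $(P_3\cup(r-1)K_1)$-free the inductive hypothesis gives $\rho^o(G)\le 2(r-1)<2r$; otherwise I would fix an induced $D=\{x_1,x_2,x_3,y_1,\ldots,y_{r-1}\}$ realising $P_3\cup(r-1)K_1$. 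Then $(P_3\cup rK_1)$-freeness makes $D$ a dominating set, so $V(G)=N(x_1)\cup N(x_2)\cup N(x_3)\cup\bigcup_{j}N[y_j]$, and the bounds $|S\cap N(x_i)|\le 1$, $|S\cap N[y_j]|\le 2$ together give $|S|\le 3+2(r-1)=2r+1$.

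In both scenarios, the remaining task is to rule out the extremal case $|S|=2r+1$ (or $|S|=3$ for $r=1$). When this holds, each inequality is tight: pick $z_i\in S\cap N(x_i)$ for $i\in\{1,2,3\}$, and for each $j$ obtain $y_j\in S$ together with $w_j\in S\cap N(y_j)$, with all $2r+1$ vertices pairwise distinct. A routine verification using that $S$ is an open packing shows $z_ix_j\notin E(G)$ whenever $i\ne j$; in particular $z_1,z_3\notin\{x_1,x_2,x_3\}$, while only $z_2$ may coincide with $x_1$ or $x_3$. By Observation~\ref{obs-induced-s}, $G[S]$ is a disjoint union of $K_1$'s and $K_2$'s, and since $y_jw_j\in E(G)$, the vertex $y_j$ has no other neighbour in $S$; in particular $y_jz_i\notin E(G)$ for all $i,j$.

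The contradiction is produced by exhibiting an induced $P_3\cup rK_1$ using pieces of $\{z_1,z_2,z_3\}\cup D$, splitting on whether $z_2\in\{x_1,x_3\}$. If $z_2\notin\{x_1,x_3\}$, then $\{z_1,x_1,x_2\}$ induces a $P_3$: when $z_1z_3\notin E(G)$, adjoining $z_3$ and the $y_j$'s gives an induced $P_3\cup rK_1$; when $z_1z_3\in E(G)$, adjoining $z_2$ instead works, because the $G[S]$ structure forces $z_2$ to be non-adjacent (in $G$) to both $z_1$ and $z_3$, while $z_1z_3\in E(G)$ turns $\{x_1,z_1,z_3\}$ into a $P_3$. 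If instead $z_2=x_1$ (the case $z_2=x_3$ is symmetric), then $z_1\in N(x_1)=N(z_2)$ forces $z_1z_2\in E(G)$, and because $G[S]$ is $\{P_3,K_3\}$-free, $z_1z_3\notin E(G)$; then $\{x_2,x_3,z_3\}$ induces a $P_3$ with $z_1$ and the $y_j$'s as isolated vertices, giving an induced $P_3\cup rK_1$. For $r=1$ the same case split applies with the $y_j$'s absent.

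The main obstacle is the coincidence case $z_2\in\{x_1,x_3\}$, where $z_2$ lies inside $D$ and the natural ``replace $x_i$ by $z_i$'' trick from Lemma~\ref{p4rk1-op-bound} breaks. Getting around it requires selecting a different $P_3$ (namely $x_2,x_3,z_3$) and exploiting the $G[S]$ structure together with the forced edge $z_1z_2$ to rule out the edge $z_1z_3$.
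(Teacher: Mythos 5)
Your proposal is correct and follows essentially the same route as the paper's proof (given in Appendix~\ref{app-sec:claim_Gr}): induction on $r$, the dominating-set count $|S|\le 3+2(r-1)$, and a case split on whether $z_1z_3\in E(G)$ with the same two witness configurations $\{z_1,x_1,x_2\}\cup\{z_3,y_1,\dots,y_{r-1}\}$ and $\{x_1,z_1,z_3\}\cup\{z_2,y_1,\dots,y_{r-1}\}$. The only divergence is your separate treatment of the coincidence $z_2\in\{x_1,x_3\}$ via the alternative $P_3$ on $\{x_2,x_3,z_3\}$, which is sound but redundant: the paper's configuration for $z_1z_3\notin E(G)$ never uses $z_2$, and when $z_1z_3\in E(G)$ the forced non-adjacencies $z_1z_2,z_2z_3\notin E(G)$ already rule out $z_2\in\{x_1,x_3\}$.
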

\noindent Proof of Lemma~\ref{p3rk1-op-bound} is similar to the proof of Lemma~\ref{p4rk1-op-bound}. However, the complete proof is given in Appendix~\ref{app-sec:claim_Gr}. The corollary below follows from Lemmas~\ref{p3rk1-td-bound} and \ref{p3rk1-op-bound}, and the fact that for $r\geq 3$, every $rK_1$-free graph is a ($P_3\cup (r-2)K_1$)-free graph as well.
\begin{corollary}
	For $r\geq 3$, let $G$ be a connected $rK_1$-free graph. Then, (i) $\gamma_t(G)\leq 2r-3$ and (ii) $\rho^o(G)\leq 2(r-2)$.
	\label{op-td-bound-cor}
\end{corollary}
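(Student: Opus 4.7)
The plan is to verify the inclusion claim asserted just before the corollary and then directly invoke the two lemmas. For the inclusion, I would argue as follows: the two endpoints of an induced $P_3$ are non-adjacent, and hence form an induced $2K_1$. Therefore, any induced copy of $P_3 \cup (r-2)K_1$ in a graph $G$ restricts to an induced copy of $2K_1 \cup (r-2)K_1 = rK_1$ (taking the two endpoints of the $P_3$ together with the $r-2$ isolated vertices, none of which are adjacent to any of the others by definition of the disjoint union). Contrapositively, if $G$ is $rK_1$-free then $G$ is $(P_3 \cup (r-2)K_1)$-free.

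Once this observation is in place, the corollary is immediate. Since $r \geq 3$, the parameter $r' := r - 2$ satisfies $r' \geq 1$, so Lemmas~\ref{p3rk1-td-bound} and \ref{p3rk1-op-bound} apply to the connected $(P_3 \cup r' K_1)$-free graph $G$. These yield $\gamma_t(G) \leq 2r' + 1 = 2(r-2) + 1 = 2r - 3$ and $\rho^o(G) \leq 2r' = 2(r-2)$, which are exactly (i) and (ii).

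I do not anticipate any real obstacle: the only nontrivial step is the short observation that $P_3$ contains $2K_1$ as an induced subgraph, after which the bounds are inherited verbatim from the preceding lemmas with a shift of the parameter by two.
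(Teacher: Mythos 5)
Your proposal is correct and follows exactly the paper's route: the paper derives the corollary from Lemmas~\ref{p3rk1-td-bound} and~\ref{p3rk1-op-bound} together with the stated fact that every $rK_1$-free graph is $(P_3\cup(r-2)K_1)$-free, which you justify by the same observation that the two endpoints of an induced $P_3$ together with the $r-2$ isolated vertices form an induced $rK_1$. The substitution $r'=r-2\geq 1$ and the resulting bounds match the paper verbatim.
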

\begin{remark}
	For $r\geq 3$, the bound on the open packing number of $rK_1$-free graphs given in Corollary~\ref{op-td-bound-cor} (and so the one in Lemma~\ref{p3rk1-op-bound}) is tight. For example, let $H_r$ be a graph obtained by identifying a leaf vertex of a (distinct) $P_3$ with every vertex of a complete graph $K_{r-2}$, then $H_r$ is a $rK_1$-free graph (and so ($P_3\cup (r-2)K_1$)-free graph). An example of $H_r$ for $r=7$ is given in Fig.~\ref{fig:tight-op-rk1}. The vertex and edge set of $H_r$ can be defined as $V(H_r)=\cup_{i=1}^{r-2}\{x_i,y_i,z_i\}$ and $E(H_r)=E_{r1}\cup E_{r2}$, where $E_{r1}=\cup_{i=1}^{r-2}\{x_iy_i,y_iz_i\}$ and $E_{r2}=\{z_iz_j\,:\, 1\leq i<j\leq r-2\}$.  Then, $S_r=\{x_1,x_2,\ldots,x_{r-1},$ $y_1,y_2,\ldots,y_{r-2}\}$ is an open packing (as well as a total dominating set) of size $2(r-2)$ in $H_r$. This also implies that the bound given in Lemma~\ref{p3rk1-td-bound} is a near-optimal upper bound. Also, for the case $r=1$, the bound given in Lemma~\ref{p3rk1-td-bound} is tight, and a cycle on five vertices ($C_5$) is a connected $(P_3\cup K_1)$-free graph that satisfies this bound. Further, the graphs obtained by replacing every vertex of $C_5$ with a complete graph will remain as a ($P_3\cup K_1$)-free graph satisfying this bound.
	\label{rem-rk1-op}
\end{remark}
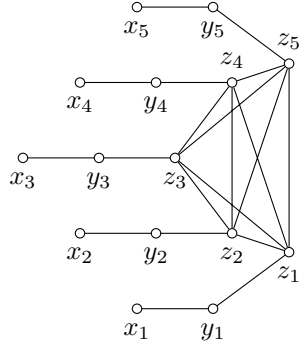
\begin{figure}
	\centering
	\begin{tikzpicture}
		\draw (1.5,0)node[block1,label=below:$x_1$]{}--(2.5,0)node[block1,label=below:$y_1$]{}--(3.5,0.75)node(z1)[block1,label=below:$z_1$]{};
		\draw (0.75,1)node[block1,label=below:$x_2$]{}--(1.75,1)node[block1,label=below:$y_2$]{}--(2.75,1)node(z2)[block1,label=below:$z_2$]{};
		\draw (0,2)node[block1,label=below:$x_3$]{}--(1,2)node[block1,label=below:$y_3$]{}--(2,2)node(z3)[block1,label=below:$z_3$]{};
		\draw (0.75,3)node[block1,label=below:$x_4$]{}--(1.75,3)node[block1,label=below:$y_4$]{}--(2.75,3)node(z4)[block1,label=above:$z_4$]{};
		\draw (1.5,4)node[block1,label=below:$x_5$]{}--(2.5,4)node[block1,label=below:$y_5$]{}--(3.5,3.25)node(z5)[block1,label=above:$z_5$]{};
		\draw (z1)--(z2)--(z3)--(z4)--(z5)--(z1)--(z3)--(z5)--(z2)--(z4)--(z1);
	\end{tikzpicture}
	\captionsetup{format=hang}
	\caption{A connected $7K_1$-free graph $H_7$ with an open packing (as well as a total dominating set) $S_7=\{x_1,x_2,x_3,$ $x_4,x_5,y_1,y_2,y_3,y_4,y_5\}$ of size $2(7-2)=10$. Note that $H_7$ is also a ($K_2\cup 6K_1$)-free graph.}
	\label{fig:tight-op-rk1}
\end{figure}
\subsubsection*{\boldmath $(K_2\cup rK_1)$-free Graphs}
For $r\geq 1$, it is known by Corollary~\ref{k2-td-cor} that $\gamma_t(G)\leq 2r$ for every connected ($K_2\cup rK_1$)-free graph. In this section, we show that the open packing number of a connected ($K_2\cup rK_1$)-free graph is at most $\max\{r+1,2(r-1)\}$. 
\begin{remark}
	Let $G$ be a connected ($K_2\cup K_1$)-free graph. Then, $G$ is also a connected $P_4$-free graph. So, $\rho^o(G)\leq \gamma_t(G)=2(=r+1)$. 
	\label{rem-k2k1-op-bound}
\end{remark}
\begin{lemma}
	For $r\geq 2$, the open packing number of a connected ($K_2\cup rK_1$)-free graph is at most $\max\{r+1,2(r-1)\}$.
	\label{k2-rk1-op}
\end{lemma}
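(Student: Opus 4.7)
The plan is to consider an open packing $S$ in $G$, apply Observation~\ref{obs-induced-s} to write $G[S] \cong aK_2 \cup bK_1$ (so $|S| = 2a + b$), and exploit the reformulation that $S$ is an open packing if and only if every vertex of $G$ has at most one neighbor in $S$.

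First I would obtain the crude bound $|S| \leq 2r$. If $a \geq 1$, pick any $K_2$-edge $\{x_1, y_1\}$ of $G[S]$ and, from the remaining components, take one endpoint per $K_2$ together with every $K_1$-vertex; this gives an independent set $I \subseteq S \setminus \{x_1, y_1\}$ of size $(a-1) + b$ with no edges to $\{x_1, y_1\}$, since distinct components of $G[S]$ are mutually non-adjacent in $G$. Hence $\{x_1, y_1\} \cup I$ induces $K_2 \cup ((a-1)+b)K_1$ in $G$, so by hypothesis $(a-1) + b \leq r - 1$, i.e.\ $a + b \leq r$ and $|S| = a + (a + b) \leq 2r$. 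The residual case $a = 0$ ($S$ independent in $G$) uses connectedness of $G$ to find $s \in S$ with a neighbor $w \in V(G) \setminus S$; then $s$ is the unique $S$-neighbor of $w$, so $\{s, w\} \cup (S \setminus \{s\})$ induces $K_2 \cup (|S|-1)K_1$, forcing $|S| \leq r$.

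Next I would rule out the two extremal shapes of $G[S]$: $|S| = 2r$ forces $G[S] \cong rK_2$, and $|S| = 2r - 1$ forces $G[S] \cong (r-1)K_2 \cup K_1$. For $G[S] \cong rK_2$ (with $r \geq 2$): since $G$ is connected while $G[S]$ is disconnected, some $v \in V(G) \setminus S$ has a (necessarily unique) neighbor in $S$, say $v \sim x_1$. Then $\{x_2, y_2\} \cup \{y_1, v, z_3, \ldots, z_r\}$ with $z_i \in \{x_i, y_i\}$ arbitrary induces $K_2 \cup rK_1$, because $v$ has no other $S$-neighbor and distinct $G[S]$-components are non-adjacent. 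For $G[S] \cong (r-1)K_2 \cup K_1$ (needed only when $r \geq 3$, so at least two $K_2$-components exist), let $y$ denote the $K_1$-vertex of $G[S]$; connectedness forces some $u \in V(G) \setminus S$ to be adjacent to the $K_2$-component $\{x_1, y_1\}$, say $u \sim x_1$, with no other $S$-neighbor. Then $\{x_2, y_2\} \cup \{y_1, u, y, z_3, \ldots, z_{r-1}\}$ induces $K_2 \cup rK_1$; the only non-trivial non-edges to check are $uy$ (which holds because $y \in S$ and $u$'s unique $S$-neighbor is $x_1$) and $yz_i$ (because $y$ is isolated in $G[S]$).

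Combining the two obstructions gives $|S| \leq 2r - 1 = r + 1$ when $r = 2$, and $|S| \leq 2r - 2 = 2(r-1)$ when $r \geq 3$, yielding the stated bound $\max\{r+1, 2(r-1)\}$. I expect the second extremal configuration $(r-1)K_2 \cup K_1$ to be the main obstacle: one must use an internal witness $y$ and an external witness $u$ simultaneously and verify several pairwise non-adjacencies among vertices of quite different types, whereas the first extremal case needs only a single outside vertex. The rest of the proof is driven by the clean ``at most one $S$-neighbor'' reformulation of the open packing condition.
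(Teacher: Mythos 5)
Your proof is correct, and while it shares the paper's starting point --- Observation~\ref{obs-induced-s}, which lets you write $G[S]\cong aK_2\cup bK_1$, and the goal of exhibiting an induced $K_2\cup rK_1$ --- the route differs in two genuine respects. First, the paper dispatches $r=2$ by noting that $(K_2\cup 2K_1)$-free graphs are $(P_4\cup K_1)$-free and invoking Lemma~\ref{p4rk1-op-bound}, whereas you treat $r=2$ uniformly inside the same component-counting framework; your intermediate inequality $a+b\leq r$ (whenever $a\geq 1$) is a clean structural fact the paper never isolates. Second, and more substantively, the paper's main case starts from an isolated vertex $u$ of $G[S]$ (guaranteed because $2r-1$ is odd), takes a shortest path in $G$ from $u$ to a $K_2$-component, and reasons about the penultimate or antepenultimate vertex of that path in two subcases; you instead pin down the exact shape of $G[S]$ at the two extremal sizes ($rK_2$ and $(r-1)K_2\cup K_1$) and obtain the external witness directly as a neighbour, outside $S$, of a $K_2$-component --- such a vertex exists simply because $G$ is connected and distinct components of $G[S]$ are non-adjacent in $G$, and the ``at most one $S$-neighbour'' property supplies all the required non-edges at once. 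Your version avoids the shortest-path machinery and the case split on whether $x_{s-1}=v_1$, at the mild cost of excluding two extremal configurations rather than one (the paper only needs to rule out size exactly $2r-1$, since any larger open packing contains one of that size). Both arguments are sound; yours is somewhat more self-contained.
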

\begin{proof}
	We prove the lemma in two cases. \\
	\emph{Case 1:} $r=2$.\\
	Then, $\max\{r+1,2(r-1)\}=r+1=3$. Since every connected ($K_2\cup 2K_1$)-free graph $G$ is also a connected ($P_4\cup K_1$)-free graph, $\rho^o(G)\leq 2(r-1)+1=3$ by Lemma~\ref{p4rk1-op-bound}. \\
	\emph{Case 2:} $r\geq 3$.\\
	Then, $\max\{r+1,2(r-1)\}=2(r-1)$. We prove that $\rho^o(G)\leq 2(r-1)$. On the contrary, assume that there exists a  connected ($K_2\cup rK_1$)-free graph $G$ with $\rho^o(G)> 2(r-1)$. We know that every open packing of size at least $2(r-1)+1=2r-1$ contains an open packing of size $2r-1$. So, to contradict the assumption, it is enough to eliminate the possibility of an open packing of size $2r-1$ in $G$. Let $S$ be an open packing of size $2r-1$ in $G$. Then, since $G[S]$ is $\{P_3,K_3\}$-free, every component of $G[S]$ contains at most two vertices, and so the number of components in $G[S]$ is at least $\lceil\frac{2r-1}{2}\rceil>r-1$. Also, since $2r-1$ is odd, there exists at least one isolated vertex in $G[S]$ due to the fact that every component of $G[S]$ contains at most two vertices. Let $u$ be the isolated vertex in $G[S]$. Since $G$ is connected, there exists a vertex $v\in V(G)$ such that $uv\in E(G)$. Note that $G[S\setminus\{u\}]$ contains at least $(r-1)$ components. Let $G_1,G_2,\ldots,G_p$ be the components of $G[S\setminus\{u\}]$ for some $p\geq r-1$. We now contradict our assumption in two cases based on the value of $p$.\\
	\emph{Case 2.1:} $p\geq r$.\\
	Let $v_i\in V(G_i)$ for $i=1,2,\ldots,r$. Since each $v_i$ is from a distinct component of $G[S\setminus\{u\}]$, $v_i\neq v_j$ for $i,j\in \{1,2,\ldots,r\}$ and $i\neq j$. Also, since $v_i\in S\setminus \{u\}$, $v_i\neq u$ and $v_iu\notin E(G[S])\subseteq E(G)$ by the fact that $u$ is an isolated vertex in $G[S]$. This implies that $v_i\neq v$ for $i=1,2,\ldots,r$ since $uv\in E(G)$. Further $v_iv\notin E(G)$ since $v\in N(u)$ and $N(u)\cap N(v_i)$ is empty because $S$ is an open packing in $G$ with $u,v_i\in S$ for every $i=1,2,\ldots,r$. This implies that $\{u,v\}\cup \{v_1,v_2,\ldots,v_r\}$ induces a ($K_2\cup rK_1$) in $G$, a contradiction.\\
	\emph{Case 2.2:} $p=r-1$.\\
	Then, $G_1,G_2,\ldots,G_{r-1}$ are the components in $G[S\setminus \{u\}]$. Recall that $|V(G_i)|\leq 2$ for every $1\leq i\leq r-1$. But since $2r-2=|S\setminus\{u\}|=|\bigcup\limits_{i=1}^{r-1}V(G_i)|\leq 2r-2$, every $G_i$ consists of exactly two vertices. Let $V(G_i)=\{u_i,v_i\}$ for $1\leq i\leq r-1$. Since $G$ is connected, there exists a shortest path joining $u_1$ and $u$. Let $x_1x_2\ldots x_s$ with $u=x_1$ and $u_1=x_s$ be a shortest path joining $u_1$ and $u$ in $G$ (see Fig~\ref{addon-fig}). Since $S$ is an open packing in $G$ with $u,v_1\in S$, $N(u)\cap N(v_1)$ is empty. So, $uu_1\notin E(G)$ since $u_1v_1\in E(G)$. Similarly, $N(u)\cap N(u_1)$ is empty. This implies that $s\geq 4$. So, $u\notin \{x_{s-2},x_{s-1}\}$. We further divide this case into two cases based on whether $x_{s-1}=v_1$ or not.\\
	\emph{Case 2.2.1:} $x_{s-1}\neq v_1$.\\
	Then, since $x_{s-1}x_s=x_{s-1}u_1\in E(G)$, i.e., $x_{s-1}\in N(u_1)$ and $S$ is an open packing with $u_1,v_i\in S$, $x_{s-1}\notin N(v_i)$ (i.e., $x_{s-1}v_i\notin E(G)$) for every $i=1,2,\ldots,r-1$. For similar reasons, $x_{s-1}u,x_{s-1}u_2\notin E(G)$. The existence of $u_2$ is guaranteed by the fact that $r-1\geq 2$. Also, note that $uv_i,u_2v_\ell,v_iv_j\notin E(G[S])\subseteq E(G)$ for every $i,j,\ell\in \{1,2,\ldots,r-1\}$ and $i\neq j$, $\ell \neq 2$ (see Fig.~\ref{addon-fig}). Thus, $\{u_2,v_2\}\cup \{x_{s-1}\}\cup\{v_1\}\cup\{v_3\}\cup\{v_4\}\cup\ldots\cup\{v_{r-1}\}\cup\{u\}$ is a ($K_2\cup rK_1$) in $G$, a contradiction.\\
	\emph{Case 2.2.2:} $x_{s-1}=v_1$.\\
	Then, similar to the above case, it can be proved that $x_{s-2}u,x_{s-2}u_i,x_{s-2}v_2\notin E(G)$ for every $i=1,2,\ldots,r-1$. Also, $uu_i,v_2u_\ell,u_iu_j\notin E(G)$ for every $i,j,\ell\in \{1,2,\ldots,r-1\}$ and $i\neq j$, $\ell \neq 2$. So, $\{u_2,v_2\}\cup \{x_{s-2}\}\cup\{u_1\}\cup\{u_3\}\cup\{u_4\}\cup\ldots\cup\{u_{r-1}\}\cup\{u\}$ is a ($K_2\cup rK_1$) in $G$, a contradiction.
\end{proof}
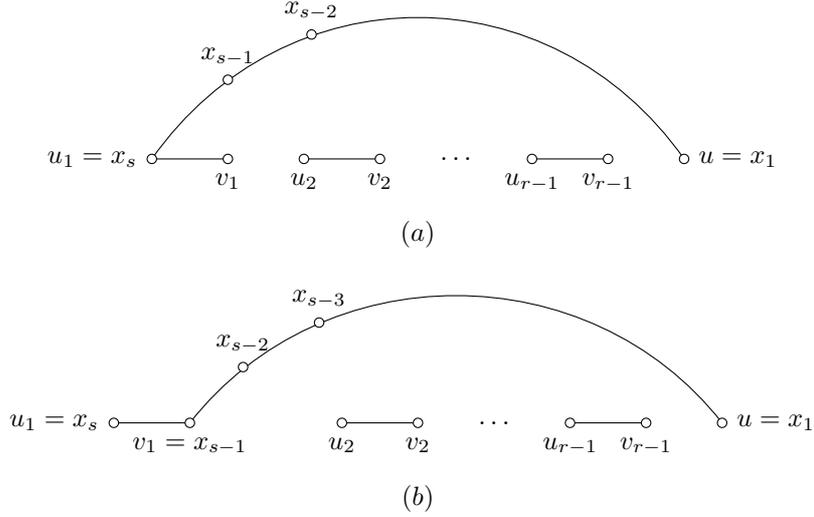
\begin{figure}
	\centering
	\begin{tikzpicture}
		\draw (7,0)node(u)[block1,label=right:${u=x_1}$]{}..controls(5.25,2.5)and(1.75,2.5)..(0,0)node(u1)[block1,label=left:${u_1=x_s}$]{}--(1,0)node(v1)[block1,label=below:$v_1$]{};
		\draw (2,0)node(u2)[block1,label=below:${u_2}$]{}--(3,0)node(v2)[block1,label=below:$v_2$]{};
		\draw (5,0)node(ur)[block1,label=below:${u_{r-1}}$]{}--(6,0)node(vr)[block1,label=below:$v_{r-1}$]{};
		\draw (4,0)node{$\ldots$};
		\draw (1,1.05)node(xs)[block1,label=above:$x_{s-1}$]{};
		\draw (2.1,1.65)node(xs2)[block1,label=above:$x_{s-2}$]{};
		%\draw[dashed,thin] (u)--(xs)--(v1);
		%\draw[dashed,thin]  (u2)--(xs)--(v2);
		%\draw[dashed,thin] (ur)--(xs)--(vr);
		\draw (3.5,-1)node{$(a)$};
		\draw (7.5,-3.5)node(u)[block1,label=right:${u=x_1}$]{}..controls(5.75,-1.25)and(2.25,-1.25)..(0.5,-3.5)node(v1)[block1,label=below:${v_1=x_{s-1}}$]{}--(-0.5,-3.5)node(v1)[block1,label=left:${u_1=x_s}$]{};
		\draw (2.5,-3.5)node(u2)[block1,label=below:${u_2}$]{}--(3.5,-3.5)node(v2)[block1,label=below:$v_2$]{};
		\draw (5.5,-3.5)node(ur)[block1,label=below:${u_{r-1}}$]{}--(6.5,-3.5)node(vr)[block1,label=below:$v_{r-1}$]{};
		\draw (1.2,-2.76)node[block1,label=above:$x_{s-2}$]{};
		\draw (2.2,-2.17)node[block1,label=above:$x_{s-3}$]{};
		\draw (4.5,-3.5)node{$\ldots$};
		\draw (3.5,-4.5)node{$(b)$};
	\end{tikzpicture}
	\captionsetup{format=hang}
	\caption{(a) $G[S]$ along with the path $x_1x_2\ldots x_s$ joining $u=x_1$ and $u_1=x_s$ in $G$ such that $x_{s-1}\neq v_1$ and (b) $G[S]$ along with the path $x_1x_2\ldots x_s$ joining $u=x_1$ and $u_1=x_s$ in $G$ such that $x_{s-1}= v_1$. Note that $\{x_2,x_3,\ldots,x_{s-2}\}\cap \{u_2,v_2,\ldots,u_{r-1},v_{r-1}\}$ need not be empty.}
	\label{addon-fig}
\end{figure}
\begin{remark}
	The bound given in Lemma~\ref{k2-rk1-op} is tight. For $r\geq 3$, let $H_{r+1}$ be the graph defined in Remark~\ref{rem-rk1-op}. Then, $H_{r+1}$ is a ($K_2\cup rK_1$)-free graph, and the set $S_{r+1}$ in Remark~\ref{rem-rk1-op} is an open packing of size $2(r-1)$. 
\end{remark}
\noindent Using the bounds obtained in the subclasses of $(P_4\cup rK_1)$-free graphs studied so far, we show that \tdset\ and \openpack\ are in P in the subclasses of $(P_4\cup rK_1)$-free graphs.
\begin{theorem}
	Given a graph class $\mathcal{G}$, \tdset\ is in P on the class $\mathcal{G}$ if $\mathcal{G}$ is (a) ($P_4\cup rK_1$)-free graphs, (b) $(P_3\cup rK_1)$-free graphs, (c) $(K_2\cup rK_1)$-free graphs, or (d) $rK_1$-free graphs.
	\label{thm-p4rk1-td-ptime}
\end{theorem}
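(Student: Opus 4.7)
The plan is to reduce each of the four cases to Lemma~\ref{lem-td-opt-ell}, which says that a uniform constant upper bound on $\gamma_t$ over a graph class translates into a polynomial-time algorithm for an optimal total dominating set. The bounds I need are already in hand: Lemma~\ref{p4rk1-td-bound} gives $\gamma_t(G)\leq 2r+2$ for case (a), Lemma~\ref{p3rk1-td-bound} gives $\gamma_t(G)\leq 2r+1$ for case (b), Corollary~\ref{k2-td-cor} gives $\gamma_t(G)\leq 2r$ for case (c), and Corollary~\ref{op-td-bound-cor} gives $\gamma_t(G)\leq 2r-3$ for case (d). Since $r$ is a fixed constant defining the class, each bound is a constant.

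The one subtlety is that the cited bounds are stated for \emph{connected} graphs, whereas the class $\mathcal{G}$ may also contain disconnected instances. I would handle this by a routine component decomposition. First, test in $O(n+m)$ time whether $G$ has an isolated vertex; if it does, report that no total dominating set exists. Otherwise, compute the connected components $C_1,C_2,\ldots,C_p$ of $G$ in linear time. Being $H$-free is hereditary, so each $C_i\in \mathcal{G}$, and since none of the $C_i$'s are isolated vertices, the bound from the appropriate lemma applies to each $C_i$ individually.

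Because total domination is additive across components, $\gamma_t(G)=\sum_{i=1}^{p}\gamma_t(C_i)$, and a union of optimal total dominating sets of the $C_i$'s is an optimal total dominating set of $G$. For each $C_i$, apply Lemma~\ref{lem-td-opt-ell} with the constant $\ell$ dictated by the case ($\ell=2r+2$, $2r+1$, $2r$, or $2r-3$, respectively) to obtain an optimal total dominating set of $C_i$ in $O(n_i^{\ell}(n_i+m_i))$ time, where $n_i=|V(C_i)|$ and $m_i=|E(C_i)|$. Summing over $i$ yields a total running time of $O(n^{\ell}(n+m))$, which is polynomial in the size of $G$.

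There is no substantive obstacle here: once the reduction to connected components is noted, every ingredient is already proved. The only mild thing to verify is that the case (d) bound of Corollary~\ref{op-td-bound-cor} applies for $r\geq 3$; for the degenerate values $r\in\{1,2\}$ the class is trivial ($1K_1$-free forces $V(G)=\emptyset$, and $2K_1$-free forces $G$ complete with $\gamma_t(G)\leq 2$), so the same scheme trivially applies. Hence \tdset\ is in P in each of the four classes.
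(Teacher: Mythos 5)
Your proposal is correct and follows essentially the same route as the paper: the paper derives the theorem directly from Lemma~\ref{lem-td-opt-ell} combined with the constant bounds of Lemmas~\ref{p4-td-bound}, \ref{p4rk1-td-bound}, \ref{p3rk1-td-bound} and Corollaries~\ref{k2-td-cor}, \ref{op-td-bound-cor}. Your explicit treatment of disconnected inputs via component decomposition and of the degenerate cases $r\in\{1,2\}$ for $rK_1$-free graphs is a careful filling-in of details the paper leaves implicit, not a different argument.
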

\noindent The above theorem follows from Lemmas~\ref{lem-td-opt-ell}, \ref{p4-td-bound}, \ref{p4rk1-td-bound}, and \ref{p3rk1-td-bound}, as well as Corollaries~\ref{k2-td-cor} and \ref{op-td-bound-cor}. The graph classes $(P_3\cup rK_1)$-free graphs, $(K_2\cup rK_1)$-free graphs, and $rK_1$-free graphs are explicitly stated in Theorem~\ref{thm-p4rk1-td-ptime} for clarity. 
\begin{theorem}
	Given a graph class $\mathcal{G}$, if $\mathcal{G}$ is (a) $(P_4\cup rK_1)$-free graphs ($r\geq 0$), (b) $(P_3\cup rK_1)$-free graphs ($r\geq 0$), (c) $(K_2\cup rK_1)$-free graphs ($r\geq 1$), or (d) $rK_1$-free graphs ($r\geq 3$), then for every fixed $r$ and for every graph $G\in \mathcal{G}$,  (i) there are polynomially many open packings in $G$, (ii) all open packings in $G$ can be found in polynomial time, and hence $\rho^o(G)$ can be found in polynomial time.
	\label{thm-p4-ptime}
\end{theorem}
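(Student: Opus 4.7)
The plan is to reduce the problem to the connected components of $G$ and invoke Lemma~\ref{lem-tc-bound} on each of them. Each of the four classes listed is hereditary, so every connected component $G_i$ of $G \in \mathcal{G}$ again lies in $\mathcal{G}$ and is connected. Applying the relevant result established earlier in this section to each $G_i$ yields a uniform, $r$-dependent constant $k = k(r)$ bounding $\rho^o(G_i)$: in case (a) Lemma~\ref{p4-td-bound} (for $r = 0$) and Lemma~\ref{p4rk1-op-bound} (for $r \ge 1$) give $k \le 2r+1$; in case (b) Remark~\ref{rem-p3-free} and Lemma~\ref{p3rk1-op-bound} give $k \le \max\{2,2r\}$; in case (c) Remark~\ref{rem-k2k1-op-bound} and Lemma~\ref{k2-rk1-op} give $k \le \max\{2,r+1,2(r-1)\}$; and in case (d) Corollary~\ref{op-td-bound-cor} gives $k \le 2(r-2)$.

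With this uniform per-component bound in hand, I would invoke Lemma~\ref{lem-tc-bound} separately on each $G_i$: it furnishes at most $O(n_i^{k})$ open packings of $G_i$, all enumerable in $O(n_i^{k+1})$ time, as well as $\rho^o(G_i)$ within the same time budget. To transfer these facts back to $G$, I would use the following locality property: any two vertices belonging to distinct components of $G$ have no common neighbour in $G$, so a subset $S \subseteq V(G)$ is an open packing of $G$ if and only if each $S \cap V(G_i)$ is an open packing of $G_i$. Consequently $\rho^o(G) = \sum_{i=1}^{c}\rho^o(G_i)$, which can be computed in polynomial time by first performing a linear-time component decomposition and then summing the component-wise values obtained above. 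The collection of all open packings of $G$ is likewise obtained by taking unions across components of the per-component packings, establishing (i) and (ii).

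The main technical obstacle is obtaining the uniform bound on $\rho^o(G_i)$, and that has already been handled in the earlier lemmas of this section; the remaining step of stitching components together is immediate, since the open-packing condition is entirely local to each connected component. A secondary point worth flagging is that cases (a)--(c) contain graphs such as $nK_1$ for which Lemma~\ref{lem-tc-bound} cannot be invoked directly on $G$ (the lemma's hypothesis is a constant bound on $\rho^o$, and $\rho^o(nK_1)=n$); the per-component reduction is precisely what repairs this, because the bound the lemma actually needs is only on each connected graph in the class.
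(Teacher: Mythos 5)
Your proposal follows the paper's route exactly: the paper's entire ``proof'' of this theorem is a one-line citation of Lemmas~\ref{lem-tc-bound}, \ref{p4-td-bound}, \ref{p4rk1-op-bound}, \ref{p3rk1-op-bound}, \ref{k2-rk1-op}, Corollary~\ref{op-td-bound-cor}, Remarks~\ref{rem-p3-free} and \ref{rem-k2k1-op-bound}, together with the additivity $\rho^o(G)=\sum_{i}\rho^o(G_i)$ over components, and your per-component reduction followed by Lemma~\ref{lem-tc-bound} is precisely that argument spelled out. Your remark that the earlier lemmas bound $\rho^o$ only for \emph{connected} members of the class, so that Lemma~\ref{lem-tc-bound} must be invoked component-wise rather than on $G$ directly, is the correct (and necessary) reading of what the paper leaves implicit.

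One step of your write-up does not go through as stated, namely the claim that ``taking unions across components of the per-component packings'' establishes (i) and (ii). The number of open packings of $G$ is the \emph{product} $\prod_{i}|\mathcal{S}_i|$ of the per-component counts, not their sum, and this product is exponential when $G$ has many components: the graph $nK_1$ lies in classes (a), (b) and (c) and every vertex subset of it is an open packing, so it has $2^n$ open packings and assertion (i) is simply false for such $G$. (Only in case (d) is the number of components bounded by $r-1$, so that the product stays polynomial.) This defect is inherited from the theorem statement itself --- the paper's one-line proof also glosses over it and the additivity of $\rho^o$ only justifies the final ``hence $\rho^o(G)$ can be found in polynomial time'' clause, which is all that Theorem~\ref{thm-h-dic} actually uses --- so your argument is no worse than the paper's; but to make (i) and (ii) literally true you would have to restrict them to connected $G$ or restate them per component. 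The computation of $\rho^o(G)$ via component decomposition and summation, as you describe it, is correct.
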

\noindent The above theorem follows from Lemmas~\ref{lem-tc-bound}, \ref{p4-td-bound}, \ref{p4rk1-op-bound}, \ref{p3rk1-op-bound}, \ref{k2-rk1-op}, Corollary~\ref{op-td-bound-cor}, Remarks~\ref{rem-p3-free}, \ref{rem-k2k1-op-bound}, and the fact that for a graph $G$ with components $G_1,G_2,\ldots,G_k$, $\rho^o(G)=\sum_{i=1}^{k}\rho^o(G_i)$. Next, we prove Theorem~\ref{thm-h-dic}.
\begin{customthm}{\ref{thm-h-dic}}
	For $p\geq 4$, let $H$ be a graph on $p$ vertices. Then, \openpack\ is polynomial time solvable on the class of $H$-free graphs if and only if $H\in \{pK_1,(K_2\cup (p-2)K_1),(P_3\cup (p-3)K_1),(P_4\cup(p-4)K_1)\}$ unless P = NP.
\end{customthm}
\begin{proof}
	Let $H$ be a graph on $p$ vertices for some $p\geq 4$ such that \openpack\ is polynomial time solvable on $H$-free graphs. Then, we prove that $H\in\{P_4\cup (p-4)K_1,P_3\cup (p-3)K_1, K_2\cup (p-2)K_1, pK_1\}$ unless P = NP. On the contrary, assume that $H\notin\{P_4\cup (p-4)K_1,P_3\cup (p-3)K_1, K_2\cup (p-2)K_1, pK_1\}$. Then, by Observation~\ref{obs-h-free}, $H$ contains at least one of (i) $K_3$, (ii) $2K_2$, (iii) $C_4$, (iv) $K_{1,3}$, or (v) $C_5$ as its induced subgraph. So, (a) $H'$-free graphs are a subclass of $H$-free graphs for some $H'\in \{K_3,2K_2,C_4,K_{1,3},C_5\}$ and (b) \openpack\ is NP-complete on $H'$-free graphs for every $H'\in \{K_3,2K_2,C_4,K_{1,3},C_5\}$ (by Remark~\ref{rem-triangle} and Theorem~\ref{thm-k-1,3-npc}). Thus, by (a), (b), and the fact that \openpack\ is in NP in every subclass of simple graphs, we can conclude that \openpack\ is NP-complete on $H$-free graphs, a contradiction. This proves the necessary part of Theorem~\ref{thm-h-dic}. \\
	The sufficiency part of Theorem~\ref{thm-h-dic} is held by Theorem~\ref{thm-p4-ptime}.
\end{proof}
 \noindent Note that  Observation~\ref{obs-h-free}, along with Theorem~\ref{thm-p4rk1-td-ptime}, and the fact that \tdset\ is NP-complete for (i) Split graphs ($\{2K_2,C_4,C_5\}$-free graphs)~\cite{CORNEIL1984}, (ii) Bipartite graphs (a subclass of $K_3$-free graphs)~\cite{npc-tdset-bigraphs1983}, and (iii) Line graphs (a subclass of $K_{1,3}$-free graphs)~\cite{McRae1995}, imply the following theorem, similar to Theorem~\ref{thm-h-dic}, for \tdset.
\begin{theorem}
	For $p\geq 4$, let $H$ be a graph on $p$ vertices. Then, \tdset\ is polynomial time solvable on the class of $H$-free graphs if and only if $H\in \{pK_1,(K_2\cup (p-2)K_1),(P_3\cup (p-3)K_1),(P_4\cup(p-4)K_1)\}$ unless P = NP.	\label{thm-h-dic-tdset}
\end{theorem}
\subsection{\boldmath $H$ on three vertices}\label{app-sec:h-on-three-vertices}
 In this section, we give the proof of Theorem~\ref{h-on-three-vertices}.
\begin{customthm}{\ref{h-on-three-vertices}}
	Let $H$ be a graph on three vertices. Then, an optimal open packing on $H$-free graphs can be found in polynomial time if and only if $H\ncong K_3$ unless P = NP.
\end{customthm}
\begin{proof}
	Note that $3K_1,(K_2\cup K_1), P_3$, and $K_3$ are the only non-isomorphic graph on three vertices. By Theorem~\ref{thm-p4-ptime}, an optimal open packing of a $H$-free graph can be found in polynomial time if $H\in \{3K_1,(K_2\cup K_1), P_3\}$ and \openpack\ is NP-complete on $K_3$-free graphs by Remark~\ref{rem-triangle}. This completes the proof of Theorem~\ref{h-on-three-vertices}.
\end{proof}
\noindent Note that \tdset\ is NP-complete for $K_3$-free graphs~\cite{npc-tdset-bigraphs1983}. So, by Theorem~\ref{thm-p4rk1-td-ptime}, a result similar to Theorem~\ref{h-on-three-vertices} for \tdset holds. This note completes our study on the complexity (P vs NPC) of \tdset\ and \openpack\ on $H$-free graphs. Next, we study the complexity of these problems on some subclasses of split graphs.
\section{Split Graphs}\label{sec:split}
Ramos et al.~\cite{Igor2014} proved that \openpack\ is NP-complete on split graphs. In this section, we study the complexity of \openpack\ in (i) $K_{1,r}$-free split graphs for $r\geq 2$ and (ii) $I_r$-split graphs for $r\geq 1$. This study shows the complexity (P vs NPC) difference between \tdset\ and \openpack\ in split graphs (see Table~\ref{table-compar-diff}).
\subsection{\boldmath $K_{1,r}$-free Split Graphs}\label{sec:k_1,4-split} It is known that \tdset\ is NP-complete on $K_{1,5}$-free split graphs~\cite{White1995} and is polynomial time solvable on $K_{1,4}$-free split graphs~\cite{RENJITH2020246}. In this section, we give a dichotomy result for \openpack\ in $K_{1,r}$-free split graphs by proving that the problem is (i) NP-complete for $r\geq 4$ and (ii) polynomial time solvable for $r\leq 3$.\\
We begin this section with a construction to prove that \openpack\ is NP-complete on $K_{1,4}$-free split graphs, which is a reduction from \indset, a well-known NP-complete problem~\cite{karp}.
\begin{construct}
	\emph{Input:} A simple graph $G(V,E)$.\\
	\emph{Output:} A $K_{1,4}$-free split graph $G'(C\cup I, E)$.\\
	\emph{Guarantee:} $G$ has an independent set of size $k$ if and only if $G'$ has an open packing of size $k+1$.\\
	\emph{Procedure:}\\[2pt]
	\begin{tabular}{l >{\RaggedRight}p{13.25cm}}
		Step 1:& Subdivide each edge $e=uu'$ in $G$ into a three vertex path $ueu'$ in $G'$.\\[2pt]
		Step 2:& Introduce three new vertices, $x$, $y$, and $z$, and two edges, $xy$ and $xz$, in $G'$.\\[2pt]
		Step 3:& Make $E(G)\cup \{y,z\}$ a clique in $G'$.
	\end{tabular}
	\label{k-1,4-split-op-construct}
\end{construct}
\noindent An example of Construction~\ref{k-1,4-split-op-construct} is shown in Fig.~\ref{k-1,4-split-op-fig}. The vertex set and the edge set of the graph $G'$ are $V(G')=V(G)\cup E(G)\cup \{x,y,z\}$ and $E(G')=\{ue\,:\,  u\in V(G), e\in E(G)\text{ and } e \in E_G(u)\}\cup \{xy,xz,yz\}\cup \{ye\,:\, e\in E(G)\}\cup \{ze\,:\, e\in E(G)\} \cup \{ee'\,:\, e,e'\in E(G)\text{ and }e\neq e'\}$. Note that $|V(G')|=n+m+3$ and $|E(G')|=2m+3+2m+\binom{m}{2}=O(m^2)$. Hence, the graph $G'$ can be constructed in polynomial time from $G$. Also, by Construction~\ref{k-1,4-split-op-construct}, $V(G')=C\cup I$ is a clique-independent set partition of $G'$ with $C=E(G)\cup \{y,z\}$ and $I=V(G)\cup \{x\}$. Thus, $G'(C\cup I, E)$ is a split graph. 
\begin{figure}
	\centering
	\begin{tikzpicture}[scale=.75]
		{
			\draw (0,-1.5)node[block1,label=below:$u_1$]{}--(2,0.5)node(v2)[block1,label=right:$u_2$]{}--(2,3.5)node(u3)[block1,label=right:$u_3$]{};
			\draw (-2,3.5)node(u5)[block1,label=left:$u_4$]{}--(-2,0.5)node(u6)[block1,label=left:$u_5$]{}--(v2);
			\draw (u3)--(u5);
			\draw (u6)--(0.1,2.1)node(u7)[block1,label=below:$u_6$]{}--(u3);
			\draw (-2,2)node(e4)[label=left:$e_4$]{};
			\draw (0,3.5)node(e3)[label=above:$e_3$]{};
			\draw (2,2)node(e2)[label=right:$e_2$]{};
			\draw (1,-.5)node(e1)[label=below:$e_1$]{};
			\draw (0,.5)node(e5)[label=below:$e_5$]{};
			\draw (-1.1,1.5)node(e6)[label=above:$e_6$]{};
			\draw (1.1,2.9)node(e7)[label=below:$e_7$]{};
			\draw (0,-2.9)node{(a) $G$};}
		{\draw (11,1)node(u2)[block1,label=right:$u_2$]{}--(6.5,.5)node(e1)[block1,label=left:$e_1$]{}--(11,0)node(u1)[block1,label=right:$u_1$]{};}
		{\draw (11,2)node(u3)[block1,label=right:$u_3$]{}--(6.5,1.25)node(e2)[block1,label=left:$e_2$]{}--(u2);}
		{	\draw (6.5,3.5)node(e5)[block1,label=left:$e_5$]{}--(11,4)node(u5)[block1,label=right:$u_5$]{}--(6.5,2.75)node(e4)[block1,label=left:$e_4$]{}--(11,3)node(u4)[block1,label=right:$u_4$]{}--(6.5,2)node(e3)[block1,label=left:$e_3$]{}--(u3);
			\draw (e5)--(u2);
			\draw (u5)--(6.5,4.25)node(e6)[block1,label=left:$e_6$]{}--(11,5)node(u6)[block1,label=right:$u_6$]{}--(6.5,5)node(e7)[block1,label=left:$e_7$]{}--(u3);}
		{\draw (6.5,-.25)node(y)[block1,label=left:$y$]{}--(11,-1)node(x)[block1,label=right:$x$]{}--(6.5,-1)node(z)[block1,label=left:$z$]{};}
		{\draw (6.25,2.1)ellipse[x radius=1.55, y radius=3.6];
			\draw (11.25,2.2)ellipse[x radius=1.55, y radius=3.75];
			\draw (6.25,-2.15)node{\small Clique};
			\draw (12,-2.15)node{\small Independent Set};
			\draw (9,-3)node{(b) $G'$};}
	\end{tikzpicture}
	\captionsetup{format=hang}
	\caption{(a) a graph $G$ and (b) the graph $G'$ constructed from $G$ using Construction~\ref{k-1,4-split-op-construct}}
	\label{k-1,4-split-op-fig}
\end{figure}
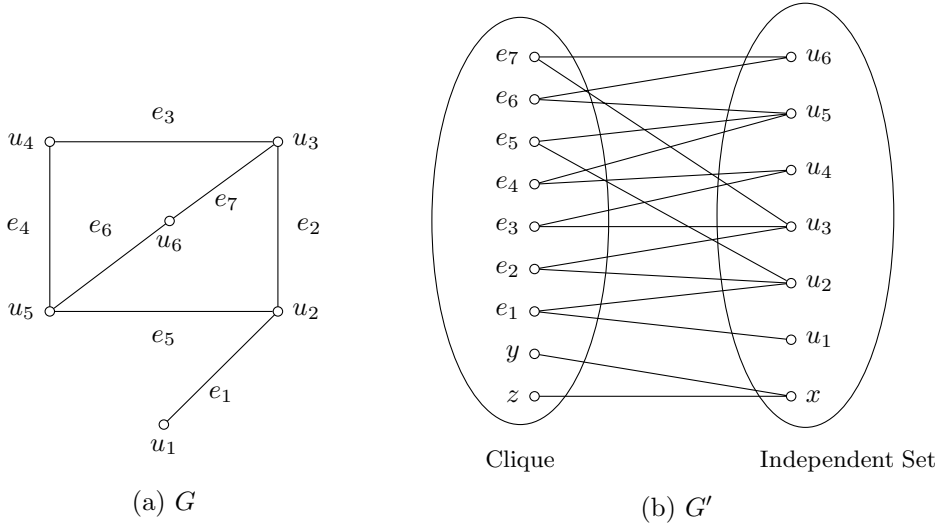
 \begin{claim-alt}
	The graph $G'$ output by Construction~\ref{k-1,4-split-op-construct} is a $K_{1,4}$-free graph.\\
	On the contrary, assume that $\{w,w_1,w_2,w_3,w_4\}$ induces a $K_{1,4}$ in $G'$ with $w$ as its centre. Then, $\{w_1,w_2,w_3,w_4\}$ is an independent set in $G'$. So $w\notin I$, since $N_{G'}(v)\subseteq C$ is a clique for every vertex $v\in I$, whereas $N_{G'}(w)$ contains an independent set on four vertices. This implies that $w\in C$. By Construction~\ref{k-1,4-split-op-construct}, for every vertex $v\in C$, $|N_{G'}(v)\cap I|\leq 2$, and so $w$ has at most two neighbours in $I$. Therefore, $|\{w_1,w_2,w_3,w_4\}\cap C|\geq 2$, a contradiction to $C$ being a clique and $\{w_1,w_2,w_3,w_4\}$ being an independent set in $G'$. Thus, $G'$ is a $K_{1,4}$-free graph.  
	\label{k_1,4-claim}
 \end{claim-alt}
\noindent The following claims help us prove the guarantee of Construction~\ref{k-1,4-split-op-construct}.
 \begin{claim-alt}
	A set $S(\subseteq V(G))$ is an independent set in the input graph $G$ of Construction~\ref{k-1,4-split-op-construct} if and only if $S$ is an open packing in the output graph $G'$.\\
	Proof of Claim~\ref{claim-s-ind-iff-sub} is similar to the proof of Claim~\ref{claim-s-iff-k_1,3}, and so it is moved to Appendix~\ref{app:sec-split}. 
	\label{claim-s-ind-iff-sub}
 \end{claim-alt}
 \begin{claim-alt}
	For every open packing $S$ in the output graph $G'$ of Construction~\ref{k-1,4-split-op-construct}, $|S\cap (C\cup \{x\})|=|S\cap (E(G)\cup \{x,y,z\})|\leq 1$.\\
	Observe that in $G'$, the graph induced by $C\cup \{x\}$ consists of a complete graph ($G'[C]$) on $m+2$ vertices together with a vertex $x$, which is adjacent to exactly two vertices ($y\text{ and }z$) in $C$. So, any two distinct vertices in $C\cup \{x\}$ have a common neighbour. Thus, the claim holds.  
	\label{s-cap-clique-claim}
 \end{claim-alt}
\begin{proof}[Proof of Guarantee of Construction~\ref{k-1,4-split-op-construct}]
	We prove the guarantee of Construction~\ref{k-1,4-split-op-construct} in two cases.\\
	\emph{Case 1}:	 $G$ has an independent set $S$ of size $k$.\\
	Then, by Claim~\ref{claim-s-ind-iff-sub}, $S\subseteq V(G)$ is an open packing in $G'$ of size $k$. To prove that $G'$ has an open packing of size $k+1$, we show that $S\cup \{x\}$ is an open packing in $G'$. Note that by Construction~\ref{k-1,4-split-op-construct}, $N_{G'}(u)\cap N_{G'}(x)\subseteq E(G)\cap \{y,z\}=\emptyset$ for every vertex $u\in V(G)$ (in particular, for every vertex $u\in S$). So, $S\cup \{x\}$ is an open packing of size $k+1$ in $G'$.\\ 
	\emph{Case 2}: $G'$ has an open packing $S$ of size $k+1$.\\
	Then, by Claim~\ref{s-cap-clique-claim}, $|S\cap (E(G)\cup \{x,y,z\})|\leq 1$. So, $S'=S\setminus (E(G)\cup \{x,y,z\})$ is an open packing of size at least $k$ in $G'$. Also, by the construction of $G'$, we have $S'\subseteq V(G)$. Hence, by Claim~\ref{claim-s-ind-iff-sub}, $S'$ is an independent set of size at least $k$ in $G$. \\
	Thus, by Cases 1 and 2, the guarantee of Construction~\ref{k-1,4-split-op-construct} holds.
\end{proof}
\noindent The following theorem is held by Construction~\ref{k-1,4-split-op-construct} and the fact that \indset\ is NP-complete on simple graphs~\cite{karp}.
\begin{theorem}
	\openpack\ is NP-complete on $K_{1,4}$-free split graphs.
	\label{thm-k-1,4-split-npc}
\end{theorem}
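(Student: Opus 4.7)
The plan is to assemble the theorem directly from the machinery that has already been built in the excerpt, treating Construction~\ref{k-1,4-split-op-construct} as a polynomial-time many-one reduction from \indset\ to \openpack\ on $K_{1,4}$-free split graphs. First I would verify membership in NP: given a candidate open packing $S$ in a $K_{1,4}$-free split graph $G'$, correctness can be tested in linear time via Lemma~\ref{thm-openpacking-testing}, so \openpack\ is in NP on this (and in fact any) subclass of simple graphs.

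For NP-hardness, I would reduce from \indset\ on simple graphs, which is NP-complete by Karp~\cite{karp}. Given an input graph $G$ together with a positive integer $k$, apply Construction~\ref{k-1,4-split-op-construct} to obtain $G'$. I would first observe (as already argued in the paragraph following the construction) that $|V(G')|=n+m+3$ and $|E(G')|=O(m^2)$, so $G'$ is produced in polynomial time, and the explicit partition $V(G')=C\cup I$ with $C=E(G)\cup\{y,z\}$ and $I=V(G)\cup\{x\}$ exhibits $G'$ as a split graph. Claim~\ref{k_1,4-claim} then tells us $G'$ is $K_{1,4}$-free, so $(G',k+1)$ is a legitimate instance of \openpack\ on the promised class.

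The correctness of the reduction is exactly the guarantee of Construction~\ref{k-1,4-split-op-construct}, which has just been established: $G$ admits an independent set of size $k$ iff $G'$ admits an open packing of size $k+1$. Combining the polynomial-time construction, the two structural claims about $G'$ (split, $K_{1,4}$-free), the guarantee, and membership in NP yields NP-completeness of \openpack\ on $K_{1,4}$-free split graphs.

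Since every ingredient is already in place, there is no real obstacle to overcome; the "proof" is essentially a two-line citation of Construction~\ref{k-1,4-split-op-construct}, its guarantee, Claim~\ref{k_1,4-claim}, and the NP-completeness of \indset. The only point one might want to spell out explicitly in the write-up is the trivial observation that $(G,k)\mapsto(G',k+1)$ is computable in polynomial time, which follows immediately from the size estimates $|V(G')|,|E(G')|=O(n+m^2)$ already recorded after the construction.
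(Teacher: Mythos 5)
Your proposal is correct and follows exactly the route the paper takes: the paper states the theorem as an immediate consequence of Construction~\ref{k-1,4-split-op-construct} (whose guarantee, split-graph structure, and $K_{1,4}$-freeness via Claim~\ref{k_1,4-claim} are established just before) together with the NP-completeness of \indset. Your write-up merely spells out the membership-in-NP and polynomial-time-computability details that the paper leaves implicit, so there is nothing to correct.
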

\noindent Similar to that of Theorems~\ref{thm-k_1,3-inapprx} and \ref{thm-k_1,3-w1}, it can be proved that on $K_{1,4}$-free split graphs (i) \openpack\ parameterized by solution size is W[1]-complete  and (ii) \maxopenpack\ is hard to approximate within a factor of $N^{(\frac{1}{2}-\epsilon)}$ for any $\epsilon>0$ unless P = NP, where $N$ denotes the number of vertices in a $K_{1,4}$-free split graph (proof of (ii) is given as Theorem~\ref{k-1,4-inapprx} in Appendix~\ref{app:sec-split}).\\
\noindent Next, to complete the dichotomy result of \openpack\ in $K_{1,r}$-free split graphs, we prove that a maximum open packing in $K_{1,3}$-free split graphs can be found in polynomial time, and the following claim and lemma are useful in proving it.
 \begin{claim-alt}
	If $S$ is an open packing in a connected split graph $G(C\cup I, E)$ with $S\cap C\neq \emptyset$, then for every vertex $u\in S\cap C$, $S\subseteq N[u]$, and so $|S|\leq 2$.\\
	 Let $S$ be an open packing in a connected split graph $G(C\cup I, E)$ with $S\cap C\neq \emptyset$. Let $u\in S\cap C$. Then, we show that $S\subseteq N[u]$. On the contrary, assume that there exists a vertex $v\in S\setminus N[u]$. Since $C\subseteq N[u]$ and $v\notin N[u]$, $v\notin C$, i.e., $v\in I$. Since $G$ is connected, $deg(v)\geq 1$, and so there exists $w\in C$ such that $vw\in E(G)$. Since $v\notin N(u)$, $w\neq u$. Further, since $w,u\in C$, $w\in N(u)$. This implies that $w\in N(u)\cap N(v)$, a contradiction to $S$ is an open packing in $G$ and $u,v\in S$. Therefore, $S\subseteq N[u]$. Since $S$ is an open packing in $G$, $|S\cap N(u)|\leq 1$. So, $|S|\leq |S\cap N(u)|+1\leq 2$.	\label{claim-scapc}
 \end{claim-alt}
 \begin{lemma}[\cite{RENJITH2020246}]
 	If $G(C\cup I,E)$ is a $K_{1,3}$ split graph with a vertex $u\in C$ such that $|N(u)\cap I|=2$, then $|I|\leq 3$.\label{renjith-lemma}
 \end{lemma}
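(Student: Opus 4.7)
The natural approach is proof by contradiction. Suppose $|I|\geq 4$; write $N(u)\cap I=\{a,b\}$ and pick two further vertices $c,d\in I\setminus\{a,b\}$. Because $G$ is connected (an assumption implicit in the cited source) and $I$ is independent, each of $c$ and $d$ has a neighbour in $C$; call these $w_c$ and $w_d$. Since $uc,ud\notin E$ we have $w_c,w_d\neq u$, and $uw_c,uw_d\in E$ since $C$ is a clique.

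The first substantive step is to apply the $K_{1,3}$-free hypothesis three times in order to pin down the neighbourhood of $w_c$ (and symmetrically of $w_d$) among $\{a,b\}\cup(I\setminus\{c\})$. First, since $\{u,a,b,w_c\}$ is not an induced $K_{1,3}$ at $u$ and $ab\notin E$, $w_c$ must be adjacent to at least one of $a,b$. Second, if $w_c$ were adjacent to both $a$ and $b$, then $\{w_c,a,b,c\}$ would be a claw at $w_c$ (as $a,b,c\in I$ are pairwise non-adjacent); hence $w_c$ is adjacent to exactly one of $a,b$. Third, for any $v\in I\setminus\{a,b,c\}$, the three vertices $u,c,v$ are pairwise non-adjacent, so $\{w_c,u,c,v\}$ cannot be a claw at $w_c$, forcing $w_cv\notin E$; in particular $w_cd\notin E$. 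This forces $w_c\neq w_d$, and being in $C$, $w_cw_d\in E$.

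A short case analysis on which of $a,b$ is adjacent to each of $w_c,w_d$ should finish the proof. Assume without loss of generality that $w_ca\in E$ and $w_cb\notin E$. If $w_da\notin E$ (so $w_db\in E$), then $\{w_c,a,c,w_d\}$ induces a $K_{1,3}$ centred at $w_c$ because $ac$, $cw_d$ and $aw_d$ are all non-edges, a contradiction. In the remaining `matching' case $w_da\in E$, I would extend the argument to every $w\in C\setminus\{u\}$: the claw argument at $u$ shows each such $w$ is adjacent to $a$ or $b$, and the analogues of steps (ii)--(iii) constrain the $I$-neighbours of $w$; combined with the degree bound $|N(w)\cap I|\leq 2$ (a direct consequence of $K_{1,3}$-freeness), a careful bookkeeping on $C\setminus\{u\}$ then locates a final forbidden claw.

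The main obstacle is precisely this matching case. Once $w_c$ and $w_d$ share their single $\{a,b\}$-neighbour, the direct local claws at $w_c,w_d$ no longer close, and one must use the global structure of $C$, combining the constraints at every vertex of $C$ with the two-per-vertex bound on $I$-degrees, to extract the contradiction. Organising this into a clean argument, rather than an ever-growing case tree, is where I expect to spend the most effort; it is also the step where an implicit hypothesis of the cited work (connectedness, or a bound on $|C|$) seems to be needed to make the lemma go through as stated.
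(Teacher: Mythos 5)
The paper does not prove this lemma at all --- it is imported from \cite{RENJITH2020246} as a black box --- so there is no internal proof to compare against and your attempt has to stand on its own. Up to the end of the mixed case it does: steps (i)--(iii) correctly pin down $N(w_c)\cap I=\{a,c\}$ with $w_c$ adjacent to exactly one of $a,b$, and the claw $\{w_c;a,c,w_d\}$ correctly kills the case $w_ca\in E$, $w_db\in E$. The genuine gap is exactly where you locate it: the matching case $w_ca,w_da\in E$ does not close, and it \emph{cannot} be closed from connectedness alone. Take $C=\{u,w_1,w_2\}$ a triangle and $I=\{a,b,c,d\}$ with the additional edges $ua$, $ub$, $w_1a$, $w_1c$, $w_2a$, $w_2d$. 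This graph is connected and split with the stated partition, it satisfies $|N(u)\cap I|=2$ and $|I|=4$, and it is $K_{1,3}$-free: each vertex of $C$ has neighbourhood consisting of a triangle (its two $C$-neighbours together with $a$) plus one degree-one vertex of $I$, hence of independence number two, while each vertex of $I$ has a clique as its neighbourhood. So the lemma is false as literally stated, and no amount of bookkeeping over $C\setminus\{u\}$ will rescue the matching case without an extra hypothesis.

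The hypothesis you are missing is not connectedness or a bound on $|C|$ but maximality of the clique $C$ in the split partition (equivalently, no vertex of $I$ is adjacent to all of $C$), which is the standing convention in \cite{RENJITH2020246}; in the example above $a$ is adjacent to all of $C$, and moving it into the clique both reduces $|I|$ to $3$ and destroys the hypothesis $|N(u)\cap I|=2$. With maximality your matching case closes in three lines. The claw $\{u;z,a,b\}$ shows every $z\in C$ is adjacent to $a$ or to $b$; by maximality pick $z\in C$ with $za\notin E$, so $zb\in E$ and $z\notin\{u,w_c,w_d\}$ since those three are all adjacent to $a$. If $zc\notin E$, then $\{w_c;a,c,z\}$ is an induced claw, because $w_c$ is adjacent to $a$, to $c$ and (via the clique) to $z$, while $ac$, $az$, $cz$ are all non-edges. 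If $zc\in E$, then $\{z;b,c,w_d\}$ is an induced claw, since $bc\notin E$, $w_db\notin E$ in the matching case, and $w_dc\notin E$ by your step (iii) applied to $w_d$. Either way you reach the contradiction with purely local arguments, so the global analysis you anticipated is unnecessary once the correct hypothesis is in place; it would be worth flagging in the paper that Lemma~\ref{renjith-lemma} (and hence Case~2 of Theorem~\ref{thm-k-1,3-split-poly}) silently assumes this maximality.
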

\begin{theorem}
	\openpack\ is polynomial time solvable on $K_{1,3}$-free split graphs. \label{thm-k-1,3-split-poly}
\end{theorem}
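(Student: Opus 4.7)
The plan is to partition every open packing $S$ of a connected $K_{1,3}$-free split graph $G(C \cup I, E)$ into two structural types and to bound the search space in each, using Claim~\ref{claim-scapc}, Lemma~\ref{renjith-lemma}, and the linear-time open-packing test of Lemma~\ref{thm-openpacking-testing}; the general case then follows by processing connected components and isolated vertices independently (recall that isolated vertices have empty neighborhoods and may always be appended to any open packing).

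First, I would handle open packings $S$ with $S \cap C \neq \emptyset$. Claim~\ref{claim-scapc} forces $|S| \leq 2$, so an optimal packing of this type is obtained by enumerating all singletons and pairs of vertices (at most $O(n^2)$ candidates) and testing each via Lemma~\ref{thm-openpacking-testing}. Next, I would handle open packings $S \subseteq I$. Since $I$ is independent, any common neighbor of two distinct vertices of $S$ must lie in $C$, so $S$ is an open packing if and only if $|N(w) \cap S| \leq 1$ for every $w \in C$. I now split based on $\max_{w \in C} |N(w) \cap I|$, which is at most $2$ because $G$ is $K_{1,3}$-free. If this maximum equals $2$, Lemma~\ref{renjith-lemma} forces $|I| \leq 3$, and I enumerate all at most $8$ subsets of $I$ directly. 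Otherwise $|N(w) \cap I| \leq 1$ for every $w \in C$, and then $I$ itself is an open packing (a common neighbor of two distinct $I$-vertices would have to lie in $C$, contradicting this bound), so the optimum in this case is $S = I$.

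The algorithm returns the larger of the two candidate optima over the whole graph. The running time is polynomial: in every case one enumerates at most $O(n^2)$ candidate sets (in fact $O(1)$ once $|I| \leq 3$), each testable in $O(n)$ time by Lemma~\ref{thm-openpacking-testing}. The conceptual heart of the argument is Lemma~\ref{renjith-lemma}: it converts the purely local condition ``some $u \in C$ has $|N(u) \cap I| = 2$'' into the strong global bound $|I| \leq 3$, and this is exactly what makes the potentially expensive case $S \subseteq I$ collapse to a constant-size enumeration. The remainder is a routine case analysis relying on Claim~\ref{claim-scapc} for the case $S \cap C \neq \emptyset$, and the only fact needing a one-line verification is that under $|N(w) \cap I| \leq 1$ for all $w \in C$, the whole of $I$ really is an open packing, which is immediate from the definition.
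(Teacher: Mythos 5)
Your proof is correct. It rests on the same two pillars as the paper's own argument --- Claim~\ref{claim-scapc} to cap at $2$ any open packing meeting $C$, and Lemma~\ref{renjith-lemma} to collapse $|I|$ to at most $3$ once some clique vertex has two independent-set neighbours --- but you organize the case analysis around the solution rather than the graph: you split on whether $S$ meets $C$ or lies inside $I$, bound the candidate space in each branch, and fall back on enumeration plus the $O(n)$ test of Lemma~\ref{thm-openpacking-testing}. The paper instead splits on $\max_{u\in C}|N(u)\cap I|$ and pins down the exact value $\rho^o(G)\in\{1,2,|I|\}$ through four hand-verified sub-cases (1.1, 1.2, 2.1, 2.2). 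Your route buys simplicity --- no sub-case analysis is needed once the search space is polynomially bounded --- at the cost of an $O(n^3)$ running time and of not yielding a closed-form description of $\rho^o(G)$; the paper's route is longer but gives an essentially linear-time decision procedure and an explicit characterization. One point in your favour: you make the reduction to connected components and the treatment of isolated vertices explicit, which is what licenses the application of Claim~\ref{claim-scapc} (stated only for connected split graphs) and is left implicit in the paper's proof.
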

\begin{proof}
	Let $G(C\cup I,E)$ be a $K_{1,3}$-free split graph. W.L.O.G., assume that $C\neq \emptyset$ and $I\neq \emptyset$. Since $G$ is $K_{1,3}$-free, $|N(u)\cap I|\leq 2$ for every $u\in C$. We show that $\rho^o(G)\in \{1,2,|I|\}$ in two cases based on $|N(u)\cap I|$ for $u \in C$.\\
	\emph{Case 1:} $|N(u)\cap I|\leq 1$ for every $u\in C$.\\
	Let $I=\{x_1,x_2,\ldots,x_k\}$. Let $V_i=N(x_i)\cap C$ for $i=1,2,\ldots,k$ and $V_0=C\setminus(\cup_{i=1}^kV_i)$. Then, $V_i\cap V_j= \emptyset$ for every $i,j\in \{1,2,\ldots,k\}$ and $i\neq j$. If not, let $u'\in V_i\cap V_j$ for some $i,j\in \{1,2,\ldots,k\}$ with $i\neq j$. Then, $x_i,x_j\in N(u')\cap I$, a contradiction to $|N(u)\cap I|\leq 1$ for every $u\in C$. We prove that in this case $\rho^o(G)\in \{2,|I|\}$, and the proof is divided into two subcases.\\
	\emph{Case 1.1:} If $k=1 \text{ and }deg(x_1)=1$, then $\rho^o(G)=2$.\\
	Let $N(x_1)=\{y_1\}$. Then, $N(x_1)\cap N(y_1)=\{y_1\}\cap N(y_1)=\emptyset$. So, $\{x_1,y_1\}$ is an open packing of size two in $G$. Thus, $\rho^o(G)\geq 2$. Next, we show that $\rho^o(G)\leq 2$. On the contrary, assume that there exists an open packing $S$ of size at least three in $G$. Then, by Claim~\ref{claim-scapc}, $S\cap C=\emptyset$, i.e., $S\subseteq I$. This implies that $|S|\leq |I|=1$, a contradiction. Therefore, $\rho^o(G)\leq 2$. As we have already proved that $\rho^o(G)\geq 2$ in this case, we can conclude that $\rho^o(G)=2$.\\
	\emph{Case 1.2:} If (i) $k>1$ or (ii) $k=1 \text{ and }deg(x_1)>1$, then $\rho^o(G)=|I|$.\\
	Note that the assumption in Case 1.2 is the negation of Case 1.1. As we have already proved that $N(x_i)\cap N(x_j)=V_i\cap V_j=\emptyset$ for every $x_i,x_j\in I$, $I$ is an open packing in $G$, and hence $\rho^o(G)\geq |I|=k$. Next, we show that $\rho^o(G)\leq |I|=k$. On the contrary, assume that $\rho^o(G)> k$. Let $S$ be an open packing in $G$ of size $\rho^o(G)$. Then, $|S|>|I|=k$, and so $S\cap C\neq \emptyset$. Therefore, by Claim~\ref{claim-scapc}, $|S|\leq 2$. This implies that $2\geq |S|>|I|\geq 1$. So, $|S|=2$ and $|I|=k=1$. Then, $V(G)=C\cup \{x_1\}$, and by the assumption in Case 1.2, $deg(x_1)\geq 2$. Let $S=\{a,b\}$. Since $G$ is constituted of a complete graph ($G[C]$) together with a vertex $x_1$, which is adjacent to at least two vertices in $C$, for any choice of $a,b \in C\cup \{x_1\}$, $a$ and $b$ will have a common neighbour in $G$, a contradiction to $S$ is an open packing of size two in $G$. Therefore, for every open packing $S$ in $G$, $|S|\leq |I|=k$. This implies that $\rho^o(G)\leq |I|$. Since we have proved that $I$ is an open packing in $G$, $\rho^o(G)= |I|$.\\
	\emph{Case 2:} There exists $u\in C$ such that $|N(u)\cap I|= 2$.\\
	In this case, we prove that $\rho^o(G)\in \{1,2\}. $ Note that $|I|\leq 3$ by Lemma~\ref{renjith-lemma}. Let $u\in C$ such that $|N(u)\cap I|= 2$. Let $v,w\in (N(u)\cap I)$. Then, since $|I|\leq 3$, $|I\setminus \{v,w\}|\leq 1$. This implies that for any open packing $S$ in $G$ with $S\subseteq I$, $|S|\leq |S\cap \{v,w\}|+|S\cap (I\setminus \{v,w\})|\leq 2$ because $N(v)\cap N(w)\neq \emptyset$ with $u\in N(v)\cap N(w)$ (and so $|S\cap \{v,w\}|\leq 1$). Also, by Claim~\ref{claim-scapc}, $|S|\leq 2$ if $S\cap C\neq \emptyset$. Since any open packing $S$ in $G$ falls under one of the above two categories (i.e., either $S\subseteq I$ or $S\cap C\neq \emptyset$), $|S|\leq 2$ for every open packing $S$ in $G$. So, $\rho^o(G)\leq 2$. Next, we complete the proof of Case 2 in two cases.\\
	\emph{Case 2.1:} If $\text{ (i) there exist } x,y\in I \text{ such that } N(x)\cap N(y)=\emptyset \text{ or (ii) there exists }  x\in I \text{ such that}$ $deg(x)=1$, then $\rho^o(G)=2$.\\
	If there exist $x,y\in I$ such that $N(x)\cap N(y)=\emptyset$, then $\{x,y\}$ is an open packing of size two in $G$, and so $\rho^o(G)=2$. Also, if there exists a vertex $x\in I$ such that $deg(x)=1$, then it is evident that $n[x]$ is an open packing of size two. Thus, $\rho^o(G)=2$.\\
	\emph{Case 2.2:} $\text{ If (i) } N(x)\cap N(y)\neq \emptyset \text{ for every }x,y\in I,\text{ and (ii) for every }  x\in I,\text{ } deg(x)>1$, then $\rho^o(G)=1$.\\
	Note that the assumption in Case 2.2 is the negation of Case 2.1. We prove that $\rho^o(G)=1$ in this case. As we have already proved that $\rho^o(G)\leq 2$, it is enough to rule out the case $\rho^o(G)=2$. Since $V(G)\subseteq C\cup I$, where (i) any two vertices in $I$ have a common neighbour, (ii) every vertex $x\in I$ has at least two neighbours in $C$, and (iii) $G[C]$ is a complete graph, it is evident that any two vertices in $V(G)$ have a common neighbour in $G$, and so $\rho^o(G)<2$. Therefore, $\rho^o(G)=1$.\\
	Hence, we have shown every possible value of $\rho^o(G)$. Note that every $K_{1,3}$-free split graph $G$ falls under Case 1 or 2. It can be easily verified in polynomial time whether a $K_{1,3}$-free split graph $G$ falls under Case 1 or Case 2. Also, the conditions given in Cases 1.1, 1.2, 2.1, and 2.2 are polynomial time verifiable. Hence, the open packing number of a $K_{1,3}$-free split graph can be found in polynomial time. \end{proof}

\subsection{\boldmath $I_r$-split graphs}\label{sec:ir-split} We observed that a minor modification in Corneil and Perl's reduction \cite{CORNEIL1984} would show that \tdset\ is NP-complete on $I_2$-split graphs. We extended this by proving a dichotomy that \tdset\ in $I_r$-split graphs is (i) NP-complete for $r\geq 2$ and (ii) polynomial time solvable for $r=1$. In addition, we prove that \openpack\ is (i) NP-complete on $I_r$-split graphs for $r\geq 3$ and (ii) polynomial time solvable on $I_r$-split graphs for $r\leq 2$.
We begin the section with Construction~\ref{ik-td} to prove that \tdset\ is NP-complete on $I_r$-split graphs for $r\geq 2$ through a reduction from $r$-\textsc{Hitting Set} problem, which is known to be NP-complete for $r\geq 2$~\cite{karp}.
\begin{construct}
	\emph{Input:} A set $U$ and a set $\mathcal{W}$ of $r$-sized subsets of $U$ for some $r\geq 2$.\\
	\emph{Output:} An $I_r$-split graph $G(C\cup I, E)$.\\
	\emph{Guarantee:} For $k\leq |U|$, $(U,\mathcal{W})$ has a hitting set of size $k$ if and only if $G$ has a total dominating set of size $k+1$.\\
	\emph{Procedure:}\\
	\begin{tabular}{l >{\RaggedRight}p{13cm}}
		Step 1:& For every element $u\in U$, create a vertex $u$ in $G$. Similarly, for every $W\in \mathcal{W}$, create a vertex $z_W$ in $G$.\\
		Step 2:& Create $r+1$ new vertices $\{x_1,x_2,\ldots,x_r,y\}$ in $G$. \\
		Step 3:& Introduce an edge between every pair of distinct vertices of $U\cup \{x_1,x_2,\ldots,x_r\}$ in $G$. \\
		Step 4:& Introduce an edge between a vertex $u\in U$ and $z_W$ for $W\in \mathcal{W}$ in $G$ if $u\in W$.\\
		Step 5:& Add an edge between $x_i$ and $y$ for every $1\leq i\leq r$.\\
	\end{tabular}
	\label{ik-td}
\end{construct}
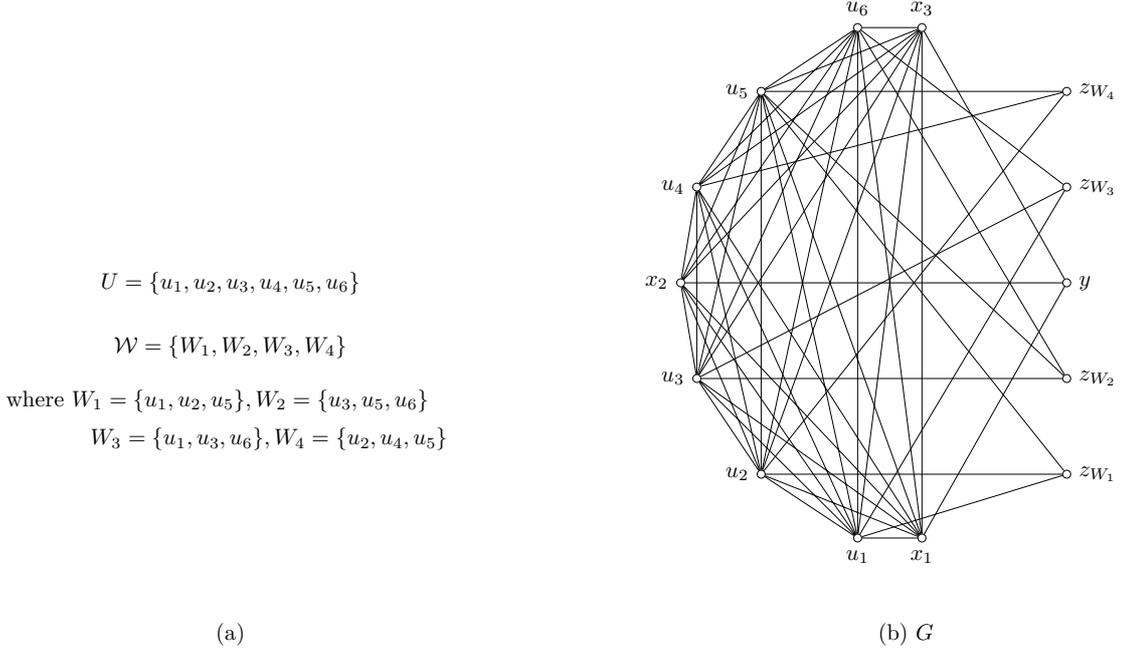
\begin{figure}
	\centering
	\resizebox{\columnwidth}{!}{
		\begin{tikzpicture}[scale=1]
			\draw (-3,0) node{$U=\{u_1,u_2,u_3,u_4,u_5,u_6\}$};
			\draw (-3,-1) node{$\mathcal{W}=\{W_1,W_2,W_3,W_4\}$};
			\draw (-3.2,-1.85)node{where $W_1=\{u_1,u_2,u_5\},W_2=\{u_3,u_5,u_6\}$};
			\draw (-2.4,-2.45)node{$ W_3=\{u_1,u_3,u_6\},W_4=\{u_2,u_4,u_5\}$};
			\draw (6.75,-4)node(u1)[block1,label=below:$u_1$]{};
			\draw (5.25,-3)node(u2)[block1,label=left:$u_2$]{};
			\draw (4.25,-1.5)node(u3)[block1,label=left:$u_3$]{};
			\draw (4.25,1.5)node(u4)[block1,label=left:$u_4$]{};
			\draw (5.25,3)node(u5)[block1,label=left:$u_5$]{};
			\draw (6.75,4)node(u6)[block1,label=above:$u_6$]{};
			\draw (7.75,4)node(x3)[block1,label=above:$x_3$]{};
			\draw (7.75,-4)node(x1)[block1,label=below:$x_1$]{};
			\draw (4,0)node(x2)[block1,label=left:$x_2$]{};
			\draw (u1)--(u2)--(u3)--(u4)--(u5)--(u6)--(u1)--(u3)--(u5)--(u1)--(u4)--(u6)--(u2)--(u5)--(x3)--(x2)--(x1)--(u1)--(x2)--(u2)--(x1)--(u3)--(x2)--(u4)--(x1)--(u5)--(x2)--(u6)--(x1)--(x3)--(u4);
			\draw (u6)--(x3)--(u2);
			\draw (u3)--(x3)--(u1);
			\draw (u2)--(u4);
			\draw (u3)--(u6);
			\draw (10,-3)node(w1)[block1,label=right:$z_{W_1}$]{};
			\draw (10,-1.5)node(w2)[block1,label=right:$z_{W_2}$]{};
			\draw (10,1.5)node(w3)[block1,label=right:$z_{W_3}$]{};
			\draw (10,3)node(w4)[block1,label=right:$z_{W_4}$]{};
			\draw (x1)--(10,0)node(y)[block1,label=right:$y$]{}--(x2);
			\draw (y)--(x3);
			\draw (u1)--(w1)--(u2);
			\draw (u5)--(w1);
			\draw (u3)--(w2)--(u5);
			\draw (u6)--(w2);
			\draw (u1)--(w3)--(u3);
			\draw (u6)--(w3);
			\draw (u4)--(w4)--(u2);
			\draw (u5)--(w4);
			\draw (-3,-5.5)node{(a)};
			\draw (7.5,-5.5)node{(b) $G$};
		\end{tikzpicture}
	}
	%\captionsetup{format=hang}
	\caption{(a) An instance $(U,\mathcal{W})$ of 3-\textsc{Hitting Set} problem and (b) a $I_3$-split graph $G(C\cup I, E)$ produced by Construction~\ref{ik-td}, where $C=U\cup \{x_1,x_2,x_3\}$ is a clique in $G$ and $I=\{z_{W_1},z_{W_2},z_{W_3},z_{W_4},y\}$ is an independent set in $G$ with $deg_G(z_{W_i})=deg_G(y)=3$ for every $i=1,2,3,4$ and $V(G)=C\cup I$. }
	\label{fig-td-hs}
\end{figure}
\noindent An example of Construction~\ref{ik-td} for the case $r=3$ is given in Fig.~\ref{fig-td-hs}. Note that $V(G)=C\cup I$, where $C=U\cup \{x_1,x_2,\ldots,x_r\}$ and $I=\{z_W\,:\,W\in \mathcal{W}\}\cup \{y\}$ and $E(G)=\{uv:u,v\in U\cup \{x_1,x_2,\ldots,x_r\}\}\cup \{uz_W\,:\, u\in U, W\in \mathcal{W} \text{ and }u\in W\}\cup \{x_iy\,:\,1\leq i\leq r\}$. So, $|V(G)|=|U|+|W|+r+1$ and $|E(G)|= \binom{|U|+r}{2}+(r\cdot|W|)+r$. Hence, the graph $G$ can be constructed in polynomial time with respect to $|U|,|\mathcal{W}|$, and $r$. By the construction of $G$, it is clear that $V(G)=C\cup I$ is a clique-independent set partition of the vertex set of $G$. So, $G$ is a split graph. Since the size of every set $W\in \mathcal{W}$ is $r$, $deg_G(z_W)=r$ by the construction of $G$. Further, since $N_G(y)=\{x_1,x_2,\ldots,x_r\}$, $deg_G(y)=r$. This implies that $deg_G(u)=r$ for every $u\in I$. Thus, $G(C\cup I, E)$ is an $I_r$-split graph. \begin{proof}[Proof of Guarantee of Construction~\ref{ik-td}]~\\
	\emph{Case 1:} The instance $(U,\mathcal{W})$ of $r$-\textsc{Hitting Set} has a hitting set of size $k$.\\
	Let $U'=\{u_1,u_2,\ldots,u_k\}$ be a hitting set of size $k$ in $(U,\mathcal{W})$. Then, we show that $\{u_1,u_2,\ldots,u_k,x_1\}$ is a total dominating set of size $k+1$ in $G$. By the construction of $G$, every vertex in $C\cup \{y\}$ is adjacent to $u_1$ or $x_1$. So, to prove the claim, it is enough to show that every vertex in $(I\setminus\{y\})=\{z_W\,:\,W\in \mathcal{W}\}$ is adjacent to some vertex in $\{u_1,u_2,\ldots,u_k,x_1\}$. Let $W\in \mathcal{W}$. Then, since $\{u_1,u_2,\ldots,u_k\}$ is a hitting set of $(U,\mathcal{W})$, there exists $i\in \{1,2,\ldots,k\}$ such that $u_i\in W$. Hence, by the construction of $G$, $z_W$ is adjacent to $u_i$. Since $W$ is an arbitrary set in $\mathcal{W}$, for every $W\in \mathcal{W}$, the corresponding vertex $z_W$ in $G$ is adjacent to some vertex in $\{u_1,u_2,\ldots,u_k\}$. Thus, every vertex in $V(G)=C\cup I$ is adjacent to some vertex in $\{u_1,u_2,\ldots,u_k,x_1\}$, and hence $\{u_1,u_2,\ldots,u_k,x_1\}$ is a total dominating set of size $k+1$ in $G$.\\
	\emph{Case 2:} $G$ has a total dominating set of size $k+1$.\\
	Let $D$ be a total dominating set of size $k+1$ in $G$. Then, $D\cap \{x_1,x_2,\ldots,x_r\}\neq \emptyset$ because $N_G(y)=\{x_1,x_2,\ldots,x_r\}$ and $D\cap N_G(y)\neq \emptyset$. Let $D'=D\cap U$. Since  $D\cap \{x_1,x_2,\ldots,x_r\}\neq \emptyset$ and $U\cap  \{x_1,x_2,\ldots,x_r\}= \emptyset$, $|D'|=|D\cap U|\leq k$. Next, we show that $D'$ is a hitting set of size at most $k$ of $(U,\mathcal{W})$. Let $W\in \mathcal{W}$. Then, since $D$ is a total dominating set, there exists a vertex $u\in D$ such that $uz_W\in E(G)$. Also, since $N_G(z_W)\subseteq U$ by the construction of $G$, $u\in D \cap U=D'$. So, by the construction of $G$, $u\in W$. Since $W$ is an arbitrary set in $\mathcal{W}$, for every $W\in \mathcal{W}$, there exists an element $u\in D'$ such that $u\in W$. Hence, $D'$ is a hitting set of size at most $k$ in $(U,\mathcal{W})$. 
\end{proof}
\noindent The theorem below follows from Construction~\ref{ik-td} and the fact that $r$-\textsc{Hitting Set} is NP-complete for $r\geq 2$~\cite{karp}.
\begin{theorem}
	\tdset\ is NP-complete on $I_r$-split graphs for $r\geq 2$.
\end{theorem}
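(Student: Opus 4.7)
The plan is to derive the theorem by combining a routine NP-membership argument with the reduction already built in Construction~\ref{ik-td}. First I would observe that \tdset\ on $I_r$-split graphs lies in NP: any candidate subset $D\subseteq V(G)$ with $|D|\leq k$ is a polynomial-size certificate, and Lemma~\ref{td-testing} verifies in $O(n+m)$ time whether $D$ is a total dominating set.

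For NP-hardness I would invoke Construction~\ref{ik-td} as a polynomial-time many-one reduction from $r$-\textsc{Hitting Set}, which Karp~\cite{karp} established as NP-complete for every fixed $r\geq 2$. Everything needed for the reduction has already been checked in the paragraphs surrounding the construction: the output graph has $|U|+|\mathcal{W}|+r+1$ vertices and $O((|U|+r)^2+r|\mathcal{W}|)$ edges and is built in time polynomial in $|U|$, $|\mathcal{W}|$ and $r$; the partition $V(G)=C\cup I$ with $C=U\cup\{x_1,\ldots,x_r\}$ a clique and $I=\{z_W:W\in\mathcal{W}\}\cup\{y\}$ an independent set exhibits $G$ as a split graph; and because every $W\in\mathcal{W}$ has size exactly $r$ while $N_G(y)=\{x_1,\ldots,x_r\}$, each vertex of $I$ has degree exactly $r$, certifying that $G$ is an $I_r$-split graph.

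The guarantee of Construction~\ref{ik-td}, already proved in two cases, then supplies the biconditional that $(U,\mathcal{W})$ admits a hitting set of size $k$ if and only if $G$ admits a total dominating set of size $k+1$. The forward direction pads a hitting set with $x_1$, which dominates both $C$ via the clique edges and $y$ via $x_1y$; the reverse direction extracts a hitting set from any total dominating set $D$ by noting that $N_G(y)=\{x_1,\ldots,x_r\}$ forces $D\cap\{x_1,\ldots,x_r\}\neq\emptyset$ (so $|D\cap U|\leq k$) and that $N_G(z_W)\subseteq U$ compels $D\cap U$ to meet every $W\in\mathcal{W}$. Combining NP-membership with this polynomial-time reduction from an NP-complete problem completes the argument; there is no genuine obstacle beyond assembling these pieces, the only caveat being the hypothesis $r\geq 2$, which is precisely the threshold at which $r$-\textsc{Hitting Set} becomes hard.
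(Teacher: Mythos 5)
Your proposal is correct and follows exactly the paper's route: the paper derives this theorem directly from Construction~\ref{ik-td} together with the NP-completeness of $r$-\textsc{Hitting Set} for $r\geq 2$, which is precisely what you do. The only difference is that you spell out the NP-membership step and recapitulate the already-verified properties of the construction, which the paper leaves implicit.
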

\noindent The following theorem completes the dichotomy result for \tdset\ on $I_r$-split graphs.
\begin{theorem}
	Let $G(C\cup I, E)$ be a connected non-trivial $I_1$-split graph. Then, $\gamma_t(G)=\max\{2,|N(I)|\}$, where $N(I)=\bigcup\limits_{u\in I}N(u)$. Hence, an optimal total dominating set of $G$ can be found in linear time.
	\label{thm-td-I_1}
\end{theorem}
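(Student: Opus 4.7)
The plan is to prove the formula $\gamma_t(G) = \max\{2, |N(I)|\}$ by establishing matching upper and lower bounds, then observe that the construction is constructive and linear-time.

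For the lower bound, I would first note that $\gamma_t(G) \geq 2$ holds for every graph admitting a total dominating set, and in particular for our connected non-trivial $G$. Next, the key observation is that every $u \in I$ has a \emph{unique} neighbour $v_u \in C$, because $deg_G(u) = 1$ and $I$ is independent. Consequently, for any total dominating set $D$ of $G$, the unique vertex $v_u$ must lie in $D$ in order to total-dominate $u$. Collecting these forced vertices across all $u \in I$ yields $N(I) \subseteq D$, and hence $|D| \geq |N(I)|$. Combining with $|D| \geq 2$ gives $\gamma_t(G) \geq \max\{2, |N(I)|\}$.

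For the upper bound, I would split into cases according to $|N(I)|$. If $|N(I)| \geq 2$, I claim $D := N(I)$ is a total dominating set. Indeed, every $u \in I$ is adjacent to its unique neighbour $v_u \in N(I) = D$; every $v \in C$ is adjacent to every other vertex of $C$ (since $C$ is a clique), so $v$ has a neighbour in $D \subseteq C$ because $|D| \geq 2$ ensures some element of $D$ differs from $v$. If $|N(I)| \leq 1$, the target size is $2$. When $I = \emptyset$, the graph $G = G[C]$ is a non-trivial clique and any edge furnishes a TDS of size $2$. When $|N(I)| = 1$, say $N(I) = \{v\}$, then if $|C| \geq 2$ pick any $v' \in C \setminus \{v\}$ and verify that $\{v, v'\}$ is a TDS (since $v$ dominates $I$ and $C \setminus \{v\}$, while $v'$ dominates $v$); if $|C| = 1$ then $C = \{v\}$ and for any $u \in I$ the pair $\{v, u\}$ works, as $v$ dominates the remaining vertices of $I$.

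For the algorithmic statement, I would observe that $N(I)$ can be assembled in $O(n+m)$ time by scanning each $u \in I$ and recording its unique neighbour, after which the relevant $D$ is output directly according to whichever case applies. The main obstacle, such as it is, is merely the bookkeeping for the small cases $|N(I)| \in \{0, 1\}$; the bulk case $|N(I)| \geq 2$ is immediate from the clique structure of $C$, and the lower bound is forced by the degree-one constraint on $I$.
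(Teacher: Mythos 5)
Your proposal is correct and follows essentially the same route as the paper: the lower bound is forced because each degree-one vertex of $I$ pins its unique neighbour into every total dominating set (so $N(I)\subseteq D$), and the upper bound is a case analysis on $|N(I)|$ using the clique structure of $C$ (the paper takes $\{u,v\}$ with $v\in I$ in the $|N(I)|=1$ case where you take a second clique vertex, but this is an immaterial variation). No gaps.
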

\begin{proof}
	If $I=\emptyset$, then $G$ is a complete graph, and so $\gamma_t(G)=2$. Thus, assume that $I\neq \emptyset$. Note that by the definition of $I_1$-split graphs, every vertex in $I$ is a pendant vertex in $G$. Since $N(u)\cap D$ is non-empty for every total dominating set $D$ in $G$, $N(u)\subseteq D$ for every pendant vertex $u$ in $G$. This implies that $N(I)\subseteq D$ for every total dominating set $D$ in $G$, and so $|N(I)|\leq \gamma_t(G)$. We prove the theorem in two cases based on $|N(I)|$.\\
	\emph{Case 1:} $|N(I)|=1$.\\
	Let $N(I)=\{u\}$. Since $I$ is an independent set in $G$, $u\in C$. Since $C$ is a clique and $N(I)=\{u\}$, every vertex in $V(G)\setminus\{u\}=(C\cup I)\setminus \{u\}$ is adjacent to $u$. Let $v\in I$. Then, since $N(I)=\{u\}$, $vu\in E(G)$. This implies that every vertex in $V(G)=(V(G)\setminus\{u\})\cup \{u\}$ is adjacent to a vertex in $\{u,v\}$. So, $\{u,v\}$ is a total dominating set in $G$, and hence $\gamma_t(G)\leq 2$. Since $\gamma_t(H)\geq 2$ for every graph $H$ that admits a total dominating set, $\gamma_t(G)=2$.\\
	\emph{Case 2:} $|N(I)|\geq 2$.\\
	Since $I$ is an independent set in $G$, $N(I)\subseteq C$. Let $u,u'\in N(I)$. Then, since $C$ is a clique and $u,u'\in N(I)\subseteq C$, every vertex in $C$ is adjacent to $u$ or $u'$. Note that by the definition of $N(I)$, every vertex in $I$ is adjacent to some vertex in $N(I)$. Thus, every vertex in $V(G)=C\cup I$ is adjacent to a vertex in $N(I)$, and so $N(I)$ is a total dominating set. This implies that $\gamma_t(G)\leq |N(I)|$. Recall that $\gamma_t(G)\geq |N(I)|$. Hence, $\gamma_t(G)=|N(I)|$. \\
	Thus, by Cases 1 and 2, $\gamma_t(G)=\max\{2,|N(I)|\}$. Note that $|N(I)|$ can be found in linear time. So, an optimal total dominating set of $G$ can be found in linear time.
\end{proof}
\noindent Next, we prove the dichotomy result on \openpack\ in $I_r$-split graphs. Firstly, we show that \openpack\ is NP-complete on $I_r$-split graphs for $r\geq 3$ through a reduction from $r$-\textsc{Dimensional Matching} problem, which is known to be NP-complete for $r\geq 3$~\cite{karp}.
\begin{construct}
	\emph{Input:} A collection of sets $X_1,X_2,\ldots,X_r$ such that $|X_i|=q$ for $i=1,2,\ldots, r$, for some $q\in \mathbb{N}$, and a non-empty set $M\subseteq \prod\limits_{i=1}^r X_i$.\\
	\emph{Output:} An $I_r$-split graph $G(C\cup I, E)$.\\
	\emph{Guarantee:} $(X_1,X_2,\ldots,X_r,M)$ is a yes-instance of $r$-\textsc{Dimensional Matching} if and only if $G$ has an open packing of size $q$.\\
	\emph{Procedure:}\\
	\begin{tabular}{l >{\RaggedRight}p{13.25cm}}
		Step 1:& For every $1\leq i\leq r$ and for every $x\in X_i$, create a vertex $z_{(x,i)}$ in $G$. Similarly, for $w\in M$, create a vertex $y_w$ in $G$.\\
		Step 2:& Introduce an edge between every pair of distinct vertices of $C=\cup_{i=1}^r\{z_{(x,i)}\,:\,x\in X_i\}$ in $G$. \\
		Step 3:& Introduce an edge between the vertex $z_{(x,i)}\in C$ and $y_w$ in $G$ if the $i^{th}$ coordinate of $w$ is $x$.
	\end{tabular}
	\label{ir-op}
\end{construct}
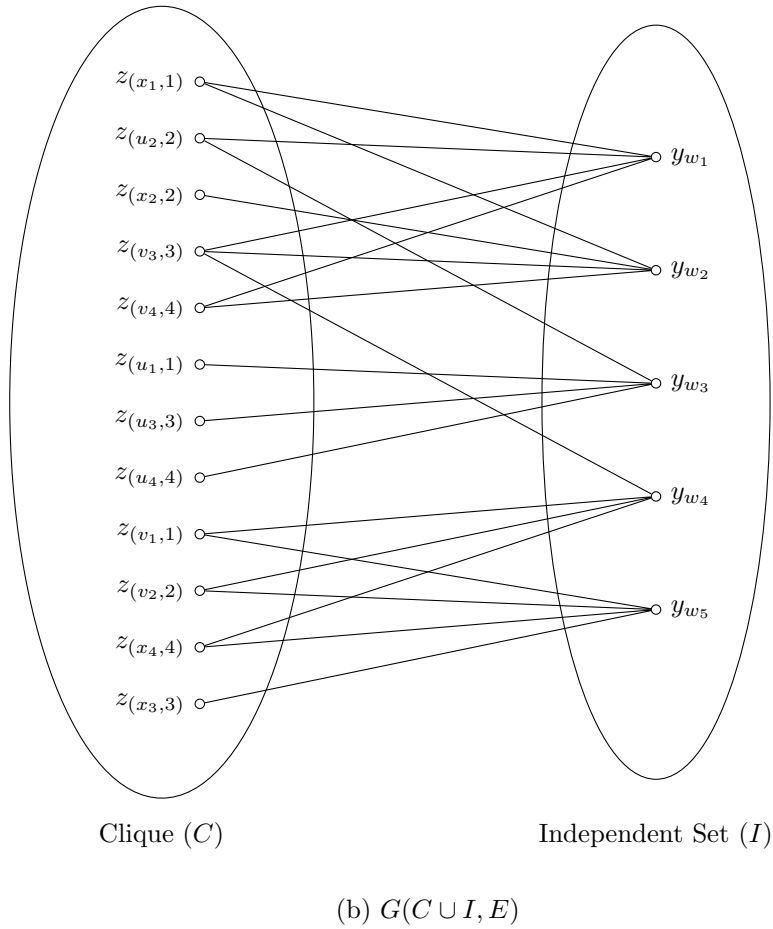
\begin{figure}
	\centering
	\begin{tikzpicture}
		\draw (-5,0) node{$X_1=\{x_1,u_1,v_1\}$};
		\draw (-1.75,0)node{$X_2=\{x_2,u_2,v_2\}$};
		\draw (1.5,0)node{$X_3=\{x_3,u_3,v_3\}$};
		\draw (4.75,0)node{$X_4=\{x_4,u_4,v_4\}$};
		\draw (-0.5,-1) node{$M=\{w_1,w_2,w_3,w_4,w_5\}$};
		\draw (-4.25,-1.85)node{where $w_1=(x_1,u_2,v_3,v_4)$};
		\draw (3.7,-1.85)node{$w_3=(u_1,u_2,u_3,u_4)$};
		\draw (-3.7,-2.45)node{$ w_4=(v_1,v_2,v_3,x_4)$};
		\draw (0,-2.45)node{$w_5=(v_1,v_2,x_3,x_4)$};
		\draw (0,-1.85)node{$w_2=(x_1,x_2,v_3,v_4)$};
		
		\draw (-3,-6)node(x11)[block1,label=left:$z_{(x_1,1)}$]{};
		\draw (-3,-9.75)node(x12)[block1,label=left:$z_{(u_1,1)}$]{};
		\draw (-3,-12)node(x13)[block1,label=left:$z_{(v_1,1)}$]{};
		\draw (-3,-7.5)node(x21)[block1,label=left:$z_{(x_2,2)}$]{};
		\draw (-3,-6.75)node(x22)[block1,label=left:$z_{(u_2,2)}$]{};
		\draw (-3,-12.75)node(x23)[block1,label=left:$z_{(v_2,2)}$]{};
		
		\draw (-3,-14.25)node(x31)[block1,label=left:$z_{(x_3,3)}$]{};
		\draw (-3,-10.5)node(x32)[block1,label=left:$z_{(u_3,3)}$]{};
		\draw (-3,-8.25)node(x33)[block1,label=left:$z_{(v_3,3)}$]{};
		\draw (-3,-13.5)node(x41)[block1,label=left:$z_{(x_4,4)}$]{};
		\draw (-3,-11.25)node(x42)[block1,label=left:$z_{(u_4,4)}$]{};
		\draw (-3,-9)node(x43)[block1,label=left:$z_{(v_4,4)}$]{};
		\draw (-3.5,-10.25)ellipse[x radius=2, y radius=5.25];
		\draw (3,-7)node(y1)[block1,label=right:$y_{w_1}$]{};
		\draw (3,-10)node(y2)[block1,label=right:$y_{w_3}$]{};
		\draw (3,-11.5)node(y3)[block1,label=right:$y_{w_4}$]{};
		\draw (3,-13)node(y4)[block1,label=right:$y_{w_5}$]{};
		\draw (3,-8.5)node(y5)[block1,label=right:$y_{w_2}$]{};
		\draw (3,-10.25)ellipse[x radius=1.5, y radius=5];
		\draw (x11)--(y1)--(x22);
		\draw (x33)--(y1)--(x43);
		
		\draw (x12)--(y2)--(x22);
		\draw (x32)--(y2)--(x42);
		
		\draw (x13)--(y3)--(x23);
		\draw (x33)--(y3)--(x41);
		
		\draw (x13)--(y4)--(x23);
		\draw (x31)--(y4)--(x41);
		
		\draw (x11)--(y5)--(x21);
		\draw (x33)--(y5)--(x43);
		\draw (-3.5,-16)node{Clique ($C$)};
		\draw (3,-16)node{Independent Set ($I$)};
		%\draw (-.5,-8.5)ellipse[x radius=6, y radius=.85];
		%\draw (6,-8.5)node{$I$};
%		%\draw (-.5,-8.5)ellipse[x radius=7, y radius=4.5];
%		
		\draw (0,-4)node{(a) $4$-\textsc{Dimensional Matching}};
		\draw (0,-17)node{(b) $G(C\cup I, E)$};
	\end{tikzpicture}
	\captionsetup{format=hang}
	\caption{(a) An instance $(X_1,X_2,X_3,X_4,M)$ of $4$-\textsc{Dimensional Matching} and (b) An $I_r$-split graph $G(C\cup I,E)$ corresponding to the instance $(X_1,X_2,X_3,X_4,M)$ produced using Construction~\ref{ir-op}. The edges between the vertices of $C$ aren't given in the figure for clarity. Note that for every $X_i=\{z_i,u_i,v_i\}$ in the instance of $4$-\textsc{Dimensional Matching}, three vertices $z_{(x_i,i)},z_{(u_i,i)}$ and $z_{(v_i,i)}$ are created in the graph $G$.}
	\label{fig:ir-op-construct}
\end{figure}
\noindent An example of Construction~\ref{ir-op} is given in Fig.~\ref{fig:ir-op-construct}. The vertex and edge sets of the graph $G$ are  $V(G)=C\cup I$, where $I=\{y_w\,:\,w\in M\}$ and $E(G)=\{xx'\,:\,x,x'\in C\}\cup \{z_{(x,i)}y_w\,:\,x\in X_i, w\in M,\text{ and the }i^{th} \text{ coordinate of }w\text{ is }x \}$. Then, $|V(G)|=(r\cdot q)+|M|$ and $|E(G)|=r\cdot|M|+\binom{r\cdot q}{2}$. Hence, the graph $G$ can be constructed in polynomial time with respect to $r, q,$ and $|M|$. Note that by the construction of $G$, $C$ is a clique and $I$ is an independent set, and hence $V(G)=C\cup I$ is a clique-independent set partition of $V(G)$. Thus, $G(C\cup I,E)$ is a split graph. Further, by the construction of $G$, it is clear that for $y_w\in I$, $N_G(y_w)=\{z_{(x_1,1)},z_{(x_2,2)},\ldots,z_{(x_r,r)}\}$ if $w=(x_{1},x_{2},\ldots,x_{r})$. Hence, $deg_G(y_w)=r$ for every $y_w\in I$. Thus, $G$ is an $I_r$-split graph. Next, we prove the guarantee of Construction~\ref{ir-op}. The following claim is used in the proof of guarantee of Construction~\ref{ir-op}.
 \begin{claim-alt}
	If $S$ is an open packing in the output graph $G$ of Construction~\ref{ir-op} such that $S\cap C$ is non-empty, then $|S|=1$.\\
	 On the contrary, assume that there exists an open packing $S$ in $G$ such that $S\cap C$ is non-empty and $|S|>1$. Let $u\in S\cap C$, and let $z\in S\setminus \{u\}$. Since $r\geq 3$ and $q\geq 1$, $|C|\geq 3$. This implies that $C$ is a clique of size at least three in $G$, and so any two vertices in $C$ have a common neighbour in $G$. This says that $z\notin C$. So, $z\in I$, i.e., $z=y_w$ for some $w\in M$. Since $r\geq 3$ and $deg_G(y_w)=r$ with $N_G(y_w)\subseteq C$, $(z=)y_w\in I$ is adjacent to at least two vertices other than $u$ in $C$. This implies that $N_G(y_w)\cap N_G(u)\neq \emptyset$, a contradiction to $u,z(=y_w)\in S$ and $S$ is an open packing in $G$.  
	\label{claim-info-xcaps}
 \end{claim-alt}
\begin{proof}[Proof of Guarantee of Construction~\ref{ir-op}] We prove the Guarantee of Construction~\ref{ir-op} in two cases.\\
	\emph{Case 1:} $(X_1,X_2,\ldots,X_r,M)$ is a yes-instance of $r$-\textsc{Dimensional Matching}.\\
	W.L.O.G., assume that $\{w_1,w_2,\ldots,w_q\}$ is an $r$-\textsc{Dimensional Matching} of $(X_1,X_2,$ $\ldots,X_r,M)$. Then, for $w_j=(x_1,x_2,\ldots,x_r)$ and $w_p=(u_1,u_2,\ldots,u_r)$ with $p,j\in \{1,2,\ldots,q\}$ and $p\neq j$, $x_i\neq u_i$ for $i\in \{1,2,\ldots,r\}$. Therefore, by the construction of $G$, $z_{(x_t,t)}\neq z_{(u_\ell,\ell)}$ for every $t,\ell\in \{1,2,\ldots,r\}$. This implies that $N_G(y_{w_j})\cap N_G(y_{w_p})=\{z_{(x_1,1)},z_{(x_2,2)},\ldots,z_{(x_r,r)}\}\cap \{z_{(u_1,1)},z_{(u_2,2)},\ldots,z_{(u_r,r)}\}$ $=\emptyset$. So, $\{y_{w_1},y_{w_2},\ldots,y_{w_q}\}$ is an open packing of size $q$ in $G$.\\
	\emph{Case 2:} $G$ has an open packing $S$ of size $q$.\\
	We further divide this case into two cases based on $S\cap C$.\\
	\emph{Case 2.1:} $S\cap C$ is non-empty.\\
	Then, by Claim~\ref{claim-info-xcaps}, $|S|=q=1$. Therefore, $|X_i|=1$ for every $i\in \{1,2,\ldots,r\}$. Let $X_i=\{x_i\}$ for $i=1,2,\ldots,r$. Since $M$ is non-empty, there exists exactly one $r$-tuple $w\in M$ such that $w=(x_1,x_{2},\ldots,x_{r})$. Hence, $M=\{w\}$ is an $r$-\textsc{Dimensional Matching} of $(X_1,X_2,\ldots,X_r,M)$.\\
	\emph{Case 2.2:} $S\cap C$ is empty.\\
	So, $S\subseteq I$. W.L.O.G., assume that $S=\{y_{w_1},y_{w_2},\ldots,y_{w_q}\}$. Then, we show that $L=\{w_1,w_2,\ldots,$ $w_q\}$ is an $r$-\textsc{Dimensional Matching} of $(X_1,X_2,\ldots,X_r,M)$. On the contrary, assume that there exist $j,p\in \{1,2,\ldots,q\}$ and  $i\in \{1,2,\ldots,r\}$ such that $x_i= u_i$ for $w_j=(x_1,x_2,\ldots,x_r)$ and $w_p=(u_1,u_2,\ldots,u_r)$. Then, by the construction of $G$, $z_{(x_i,i)}= z_{(u_i,i)}\in N_G(y_{w_j})\cap N_G(y_{w_p})$, a contradiction to $y_{w_j},y_{w_p}\in S$ and $S$ is an open packing in $G$. So, $L=\{w_1,w_2,\ldots,w_q\}$ is an $r$-\textsc{Dimensional Matching} of $(X_1,X_2,\ldots,X_r,M)$. 
\end{proof}
\begin{theorem}
	For $r\geq 3$, \openpack\ is NP-complete on $I_r$-split graphs.
\end{theorem}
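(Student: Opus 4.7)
The plan is to establish the theorem directly from Construction~\ref{ir-op}. The membership of \openpack\ in NP on $I_r$-split graphs is immediate, since given a candidate set $S$ one can verify in polynomial time (using Lemma~\ref{thm-openpacking-testing}) whether $S$ is an open packing of size at least $q$; this applies uniformly across all subclasses of simple graphs, so no extra work is needed here.

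For NP-hardness, I would reduce from $r$-\textsc{Dimensional Matching}, which is classically known to be NP-complete for every $r\geq 3$ (cf.~\cite{karp}). Given an instance $(X_1,X_2,\ldots,X_r,M)$ with $|X_i|=q$, apply Construction~\ref{ir-op} to obtain an $I_r$-split graph $G(C\cup I, E)$. The construction explicitly yields $|V(G)|=rq+|M|$ and $|E(G)|=r|M|+\binom{rq}{2}$, so $G$ is produced in time polynomial in the size of the input. Earlier in the excerpt it was already verified that $G$ is indeed a split graph with $C$ a clique and $I$ an independent set, and that $\deg_G(y_w)=r$ for every $y_w\in I$, so $G$ lies in the class of $I_r$-split graphs as required.

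Correctness of the reduction is exactly the guarantee of Construction~\ref{ir-op}, which has already been proved (using the auxiliary Claim~\ref{claim-info-xcaps} that exploits $r\geq 3$ to force any open packing meeting $C$ to be a singleton): $(X_1,\ldots,X_r,M)$ admits an $r$-\textsc{Dimensional Matching} if and only if $G$ has an open packing of size $q$. Combining membership in NP with this polynomial-time many-one reduction from an NP-complete problem yields the theorem.

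There is no real obstacle: all the technical content — the split/regularity structure of $G$, the polynomial bound on $|V(G)|$ and $|E(G)|$, and the biconditional between matchings and open packings — has been discharged in the construction and its guarantee proof above. The only point worth flagging is why the hypothesis $r\geq 3$ is needed: it enters both through the NP-completeness of $r$-\textsc{Dimensional Matching} (polynomial for $r=2$ by bipartite matching) and through Claim~\ref{claim-info-xcaps}, where $r\geq 3$ is used to guarantee that a vertex $y_w\in I$ has at least two neighbours distinct from any chosen $u\in S\cap C$, forcing the common-neighbour contradiction. Thus the proof reduces to a one-line invocation of the construction, its guarantee, and the classical hardness of $r$-\textsc{Dimensional Matching}.
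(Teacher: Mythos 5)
Your proposal is correct and follows essentially the same route as the paper: NP membership via polynomial-time verification of the open packing property, plus NP-hardness by invoking Construction~\ref{ir-op}, its already-proved guarantee, and the NP-completeness of $r$-\textsc{Dimensional Matching} for $r\geq 3$. Your added remark pinpointing where the hypothesis $r\geq 3$ enters (both in the source problem's hardness and in Claim~\ref{claim-info-xcaps}) is a useful clarification but does not change the argument.
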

\begin{proof}
	Given an $I_r$-split graph $G$ and a vertex subset $S$ of $G$, it can be tested in linear time whether $S$ is an open packing in $G$ or not using Lemma~\ref{thm-openpacking-testing}. Hence, \openpack\ is in the class NP on $I_r$-split graphs. Further, Construction~\ref{ir-op} and the fact that $r$-\textsc{Dimensional Matching} is NP-complete for $r\geq 3$~\cite{karp} implies that \openpack\ is NP-complete on $I_r$-split graphs for $r\geq 3$.
\end{proof}
\noindent The following construction is used to show that \openpack\ is polynomial time solvable on a superclass of $I_1$-split graphs and $I_2$-split graphs. Thus, completing our dichotomy result on $I_r$-split graphs. 
\begin{construct}
	\emph{Input:} A split graph $G(C\cup I,E)$ with $1\leq deg_G(v)\leq 2$ for every $v\in I$.\\
	\emph{Output:} A graph $G'(V',E')$.\\
	\emph{Guarantee:} For $k\in \mathbb{N}$, $G$ has an open packing $S$ such that $S\subseteq I$ and $|S|=k$ if and only if $G'$ has a matching of size $k$.\\
	\emph{Procedure:}\\
	\begin{tabular}{l >{\RaggedRight}p{13.25cm}}
		Step 1:& For every vertex $u\in C$ of $G$, create a vertex $u$ in $G'$.\\[2pt]
		Step 2:& For every vertex $v\in I$ with $deg_G(v)=2$ and $N_G(v)=\{u,u'\}$, create an edge $e_v$ with $e_v=uu'$ in $G'$.\\[2pt]
		Step 3:& For every vertex $v\in I$ with $deg_G(v)=1$ and $N_G(v)=\{u\}$, create a vertex $v$ and an edge $e_v$ with $e_v=uv$ in $G'$.
	\end{tabular}
	\label{op-match-2}
\end{construct}
\noindent Let $J=\{v\in I\,:\,deg_G(v)=1\}$. Then, the vertex set and the edge set of $G'$ are $V'=C\cup J$ and $E'=\{e_v\,:\,v\in I\}$. Note that $|V'|\leq n$ and $|E'|\leq n$. Since $G'$ is constructed based on $N_G(v)$ for every $v\in I$, it can be constructed in $O(n+m)$ time. Also, note that the graph $G'$ may have parallel edges. Next, we prove the guarantee of the above construction.
\begin{figure}
	\centering
	\begin{tikzpicture}
		\draw (0,0)node(x2)[block1,label=left:$x_2$]{}--(1,-1.5)node(x1)[block1,label=left:$x_1$]{}--(1,3)node(x4)[block1,label=left:$x_4$]{}--(0,1.5)node(x3)[block1,label=left:$x_3$]{}--(x2)--(x4);
		\draw (x3)--(x1);
		\draw (x1)--(3,-1.5)node(u)[block1,label=right:$u$]{}--(x2);
		\draw (x2)--(3,0.75)node(w)[block1,label=right:$w$]{}--(x3);
		\draw (x3)--(3,1.875)node(y)[block1,label=right:$y$]{};
		\draw (x4)--(3,3)node(z)[block1,label=right:$z$]{};
		\draw (x1)--(3,-.375)node(v)[block1,label=right:$v$]{}--(x2);
		\draw (0.25,.75)ellipse[x radius=1.5, y radius=3];
		\draw (3.25,.75)ellipse[x radius=.75, y radius=3];
		\draw (0.25,-2.75)node{$C$};
		\draw (3.25,-2.75)node{$I$};
		\draw (2,-3.5)node{(a) $ G(C\cup I,E)$};
		\draw (8,.5)node(x2)[block1,label=left:$x_2$]{}..controls(7.5,-.75)and(7.5,-.25)..(8,-1.5)node(x1)[block1,label=left:$x_1$]{};
		\draw (x2)..controls(8.5,-.75)and(8.5,-0.25)..(x1);
		\draw (x2)--(8,2)node(x3)[block1,label=left:$x_3$]{}--(9.5,2)node(y)[block1,label=right:$y$]{};
		\draw (8,3)node(x4)[block1,label=left:$x_4$]{}--(9.5,3)node(z)[block1,label=right:$z$]{};
		\draw (7.25,-.5)node{$e_u$};
		\draw (8.75,-.5)node{$e_v$};
		\draw (7.75,1.25)node{$e_w$};
		\draw (8.75,2.25)node{$e_y$};
		\draw (8.75,3.25)node{$e_z$};
		\draw (8.75,-3.5)node{(b) $ G'(V',E')$};
	\end{tikzpicture}
	\captionsetup{format=hang}
	\caption{(a) A split graph $G(C\cup I,E)$ with $1\leq deg_G(v)\leq 2$ for every $v\in I$ and (b) the graph $G'(V',E')$ corresponding to $G$ constructed by Construction~\ref{op-match-2}.}
\end{figure}
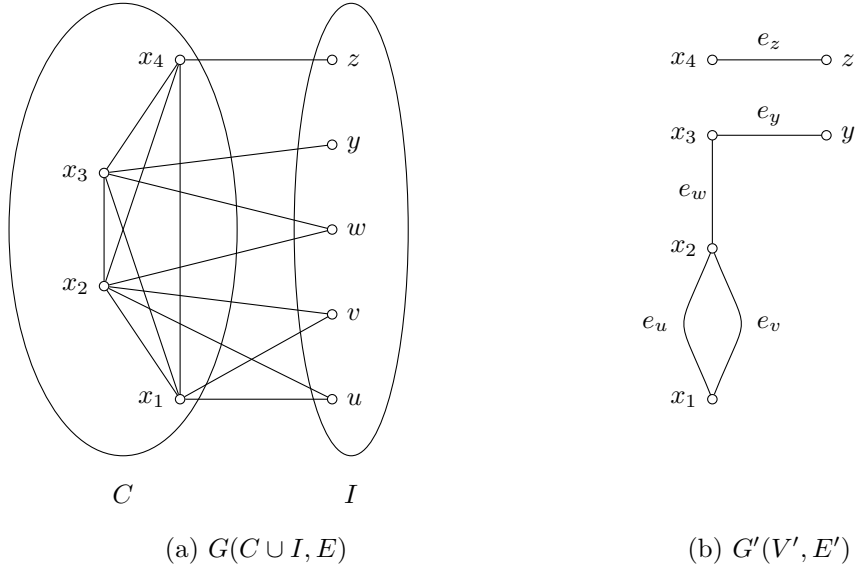
\begin{proof}[Proof of Guarantee of Construction~\ref{op-match-2}] We prove the guarantee of the construction in two cases.\\
	\emph{Case 1:} $G$ has an open packing $S$ such that $|S|=k$ and $S\subseteq I$.\\
	Let $F_S=\{e_v\in E'\,:\,v\in S\}$. We show that $F_S$ is a matching in $G'$. On the contrary, assume that $F_S$ is not a matching in $G'$. Then, there exist two distinct edges $e_v,e_y\in F_S$ such that $V_{G'}(e_v)\cap V_{G'}(e_y)\neq \emptyset$. Let $u\in V_{G'}(e_v)\cap V_{G'}(e_y)$, i.e., $e_v,e_y\in E_{G'}(u)$. Then, by the construction of $G'$, $u\in C\cup J$ and $v,y\in I$. Firstly, we show that $u\notin J$. If not, $u\in J$. Then, $deg_G(u)=1$, and so, by Step 3 of Construction~\ref{op-match-2}, $E_{G'}(u)=\{e_u\}$. This implies that $e_v=e_u=e_y$, a contradiction to $v\neq y$. So, $u\notin J$. Hence, $u\in C$, and so by the construction of $G'$, $uv,uy\in E(G)$. This implies that $u\in N_G(v)\cap N_G(y)$, a contradiction to $v,y\in S$ and $S$ is an open packing in $G$. So, $F_S$ is a matching of size $k$ in $G'$.\\
	\emph{Case 2:} $F$ is a matching in $G'$ of size $k$.\\
	Let $S_F=\{v\in I\,:\,y_v\in F\}$. In this case, we prove that $S_F$ is an open packing in $G$. On the contrary, assume that $S_F$ is not an open packing in $G$. Then, there exist vertices $v,y\in S_F\subseteq I$ such that $N_G(v)\cap N_G(y)\neq \emptyset$. Let $u\in (N_G(v)\cap N_G(y))\subseteq  C$. This implies that $u\in V(G')=C\cup J$. Also, by the construction of $G'$, $u\in V_{G'}(e_v)\cap V_{G'}(e_y)$, a contradiction to $e_v,e_y\in F$ and $F$ is a matching in $G'$. Hence, $S_F$ is an open packing in $G$ such that $S\subseteq I$ and $|S|=k$. 
\end{proof}
\noindent The following lemma and Construction~\ref{op-match-2} help us compute a maximum open packing in $I_1$-split graphs and $I_2$-split graphs.
\begin{lemma}[\cite{Micali80}]
	A maximum matching in a graph with $n$ vertices and $m$ edges can be found in $O(m\sqrt{n})$ time.\label{micali}\end{lemma}
\begin{theorem}
	A maximum open packing in a split graph $G(C\cup I,E)$ with $1\leq deg_G(v)\leq 2$ for every $v\in I$ can be found in $O(\max\{n+m,n^{1.5}\})$ time.
	\label{thm-poly-i1,i2}
\end{theorem}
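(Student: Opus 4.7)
The plan is to reduce the problem of computing $\rho^o(G)$ to a maximum matching computation in the auxiliary graph $G'$ produced by Construction~\ref{op-match-2}, and to handle the case $\rho^o(G)\leq 2$ separately since the guarantee of that construction is stated only for $k\geq 3$.

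First I would build $G'$ from $G$ using Construction~\ref{op-match-2}, which takes $O(n)$ time and yields a (multi)graph with at most $n$ vertices and at most $n$ edges. I would then compute a maximum matching $F^*$ in $G'$ using Edmonds' blossom algorithm, which runs in $O(n^3)$ time on a graph with $O(n)$ vertices. Parallel edges in $G'$ pose no issue, since at most one edge of any parallel class can lie in a matching.

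Next I split on $|F^*|$. If $|F^*|\geq 3$, then by Case 2 of the guarantee of Construction~\ref{op-match-2}, the set $S_{F^*}=\{v\in I:e_v\in F^*\}$ is an open packing in $G$ of size $|F^*|$; and this is optimal, for if there were an open packing $S$ in $G$ with $|S|>|F^*|\geq 3$, then by Case 1 of the guarantee the set $\{e_v:v\in S\}$ would be a matching in $G'$ of size strictly exceeding $|F^*|$, contradicting maximality. Hence $\rho^o(G)=|F^*|$ and $S_{F^*}$ is returned. If instead $|F^*|\leq 2$, then the contrapositive of Case 1 of the guarantee gives $\rho^o(G)\leq 2$; in this case I enumerate all $O(n^2)$ subsets of $V(G)$ of size at most two and test each one in $O(n)$ time via Lemma~\ref{thm-openpacking-testing}, returning a largest open packing. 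This phase costs $O(n^3)$ time, matching the overall bound.

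There is no real obstacle here: every ingredient is already in place. The only subtlety is the small case analysis forced by Construction~\ref{op-match-2}'s guarantee being stated only for $k\geq 3$, which is cleanly dispatched by the brute-force enumeration of subsets of size at most two. The final running time is dominated by the $O(n^3)$ matching computation and the $O(n^3)$ boundary check, yielding the claimed $O(n^3)$ algorithm.
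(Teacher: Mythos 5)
Your proposal is correct and follows essentially the same route as the paper: build $G'$ via Construction~\ref{op-match-2}, compute a maximum matching to handle the case $\rho^o(G)\geq 3$, and brute-force all subsets of size at most two otherwise. The only cosmetic difference is that the paper invokes the $O(m\sqrt{n})$ matching algorithm of Micali and Vazirani where you use Edmonds' $O(n^3)$ blossom algorithm; both fit within the claimed $O(n^3)$ bound.
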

\begin{proof}
	Let $G(C\cup I, E)$ be a split graph with $deg_G(v)\in \{1,2\}$ for every $v\in I$. Also, let $F$ be a maximum matching of the graph $G'$ constructed in Construction~\ref{op-match-2}, i.e., $\alpha'(G')=|F|$. Then, we find the open packing number of $G$ using $F$ in four parts.\\
	\emph{Part 1:} If $\alpha'(G')\geq 3$, then $S_F=\{v\in I\,:\, e_v\in F\}$ is a maximum open packing in $G$.\\
	Note that $S_F$ is an open packing in $G$ by Case 2 in the proof of Guarantee of Construction~\ref{op-match-2}. So, $\rho^o(G)\geq |S_F|=\alpha'(G)\geq 3$. Hence, by Claim~\ref{claim-scapc}, every maximum open packing in $G$ is a subset of $I$. Let $S$ be a maximum open packing in $G$. Then, by Case 1 in the proof of Guarantee of Construction~\ref{op-match-2}, $F_S=\{e_v\in E'\,:\,v\in S\}$ is a matching in $G'$. So, $\alpha'(G')\geq |F_S|=|S|=\rho^o(G)\geq |S_F|=\alpha'(G')$. Hence, $\rho^o(G)=\alpha'(G')$, and $S_F$ is a maximum matching in $G'$.\\
	\emph{Part 2:} If $\alpha'(G')=2$, then $S_F=\{v\in I\,:\, e_v\in F\}$ is a maximum open packing in $G$.\\
	If $\rho^o(G)\geq 3$, then by Claim~\ref{claim-scapc} and Construction~\ref{op-match-2}, $\alpha'(G')\geq 3$, a contradiction to the assumption that $\alpha'(G')=2$. So, $\rho^o(G)\leq 2$. By Case 2 in the proof of Guarantee of Construction~\ref{op-match-2}, $S_F$ is an open packing of size two in $G$. Hence, $\rho^o(G)=2$.\\
	\emph{Part 3:} If $\alpha'(G')=1$ and there exists a vertex of degree one in $G$, then $\rho^o(G)=2$.\\
	Since $\alpha'(G')=1$, it is evident, by Claim~\ref{claim-scapc} and Construction~\ref{op-match-2}, that $\rho^o(G)\leq 2$. Let $v\in V(G)$ such that $deg_G(v)=1$. Then, $N_G[v]$ is an open packing of size two in $G$. Hence, $\rho^o(G)=2$.\\
	\emph{Part 4:} If $\alpha'(G')=1$ and every vertex in $G$ is of degree at least two, then $\rho^o(G)=1$.\\
	On the contrary, assume that $G$ has an open packing $S$ of cardinality at least two. Since $\alpha'(G')=1$, there does not exist an open packing $S$ in $G$ such that $|S|\geq 2$ and $S\subseteq I$ by Construction~\ref{op-match-2}. So, $S\cap C\neq \emptyset$. Let $u\in S\cap C$. Then, by Claim~\ref{claim-scapc}, $S\subseteq N_G[u]$. Let $v\in S\setminus\{u\}$. Then, by the assumption on Part 4 that every vertex in $G$ is of degree at least two, for any choice of $v$ in $V(G)\setminus \{u\}=(C\cup I)\setminus\{u\}$, $u$ and $v$ must have a common neighbour in $G$, a contradiction to $S$ being an open packing in $G$ with $u,v\in S$. So, $\rho^o(G)=1$.\\
	Parts 1-4 complete the arguments for finding a maximum open packing in $G$. Next, we look at the complexity of computing it. Note that
	\begin{itemize}
		\item[(a)] the graph $G'$ can be constructed in $O(n+m)$ time,
		\item[(b)] a maximum matching in $G'$ can be found in $O(|E(G')|\sqrt{V(G')})=O(n^{1.5})$ time using Lemma~\ref{micali}, and 
		\item[(c)] degree of every vertex in $G$ can be found in $O(n+m)$ time.
	\end{itemize} 
	\noindent By (a)-(c) and Parts 1-4, a maximum open packing in $G$ can be found in $O(\max\{n+m,n^{1.5}\})$ time.
\end{proof}
\section{Conclusion}
In this article, we have completed the study on the complexity (P vs NPC) of \openpack\ on $H$-free graphs for every graph $H$ with at least three vertices by proving that \openpack\ is (i) NP-complete on $K_{1,3}$-free graphs and (ii) polynomial time solvable on subclasses of $(P_4\cup rK_1)$-free graphs for every $r\geq 1$. Also, we proved that for a connected $(P_4\cup rK_1)$-free graph $G$, $\gamma_t(G)\leq 2r+2$ and $\rho^o(G)\leq 2r+1$. Since $\rho^o(G)\leq \gamma_t(G)$, it would be interesting to study (i) the bound on the difference between $\gamma_t(G)$ and $\rho^o(G)$ and (ii) the lower bounds for open packing number. Further, we proved that \openpack\ is (i) NP-complete on $K_{1,4}$-free split graphs and (ii) polynomial time solvable on $K_{1,3}$-free split graphs. We also showed that \openpack\ is (i) NP-complete on $I_r$-split graphs for $r\geq 3$ and (ii) polynomial time solvable on $I_r$-split graphs for $r\leq 2$. Also, we completed the complexity (P vs NPC) of \tdset\ on $I_r$-split graphs by proving that it is NP-complete when $r\geq 2$ and polynomial time solvable when $r=1$. Note that $I_r$-split graphs is a split graph with regularity (equal number of edges incident) on the vertices of the independent set part of the clique-independence partition of split graphs. So, it would be interesting to know the complexity (P vs NPC) of \tdset\ and \openpack\ in split graphs with regularity imposed on (i) the clique part and (ii) both the clique and the independent set parts of the clique-independence partition of split graphs, where the regularity on clique part shall be different from the regularity on independent set part.
%\nolinenumbers

\bibliographystyle{splncs04}
\bibliography{openpackingarxiv}

\newpage
\appendix
%\pagenumbering{gobble}
%\setcounter{page}{0}
%\pagenumbering{roman}
%\setcounter{page}{1}

\begin{subappendices}
 \section{Open Packing}\label{app-sec:op-test}
In this section, we give an algorithm to test whether a subset $S$ of a graph $G$ is an open packing in $G$ or not.
\begin{algorithm}
	\DontPrintSemicolon
	\SetAlgoLined
	%\KwResult{For a tree $T$, $\chi_{cd}(T) = \omega_{s}(T)$}
	\SetKwInOut{Input}{Input}\SetKwInOut{Output}{Output}
	\Input{A Graph $G(V,E)$ and a vertex subset $S$ of $G$.}
	\Output{ $S$ is/isn't an open packing in $G$}
	
	\BlankLine
	
	\textbf{Initialization:} $U = S$, $ W= \emptyset$, an empty array $Q$, $j=1$ and every vertex in $V(G)$ is unmarked.\;
	{
		\While{$U\neq \emptyset$}
		{
			Choose a vertex $u\in U$\;
			Set $W=N_G(u)$\;
			\While{$W\neq \emptyset$}
			{
				Choose a vertex $v\in W$\;
				Set $Q[j]=v$ \quad  $\backslash\backslash$ $Q$ is used only in the analysis of the algorithm\;
				Set $j=j+1$\;
				\uIf{$v$ is marked}{return `$S$ is not an open packing in $G$' and terminate\;}
				\Else{Mark $v$\;
					Set $W=W\setminus\{v\}$\;}
			}
			Set $U=U\setminus\{u\}$\;
		}	
		Return `$S$ is an open packing in $G$' and terminate\; 
	}
	\caption{Open Packing Testing Algorithm for Graphs}
	\label{testing-algo-for-op}
\end{algorithm} 
\begin{observation}
	Given a graph $G$ and a vertex subset $S$ of $G$, Algorithm~\ref{testing-algo-for-op} determines whether $S$ is an open packing in $G$ or not in $O(n)$ time. 
	\label{openpacking-testing}
\end{observation}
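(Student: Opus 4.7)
The plan is to establish correctness and the $O(n)$ running-time bound separately, keeping the algorithm's invariants in mind.

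For correctness, I would maintain the following invariant throughout the outer while-loop: at the start of each iteration, the set of marked vertices equals $\bigcup_{u' \in S \setminus U} N_G(u')$, i.e., the union of open neighborhoods of vertices in $S$ already processed. Assuming this invariant, when we pick a new $u \in U$ and examine a neighbor $v \in N_G(u)$, the vertex $v$ is marked if and only if $v \in N_G(u')$ for some previously processed $u' \in S \setminus U$ with $u' \neq u$; equivalently, $v \in N_G(u) \cap N_G(u')$, which witnesses $S$ not being an open packing. Conversely, if the algorithm terminates with the ``yes'' branch, no such collision ever occurred, so $N_G(u) \cap N_G(u') = \emptyset$ for every distinct $u, u' \in S$, i.e., $S$ is an open packing. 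The invariant itself is trivially preserved: when an unmarked $v$ is found we mark it and remove it from $W$, and $W$ is initialized to $N_G(u)$ at the start of each outer iteration, so after the inner loop completes, exactly $N_G(u)$ gets added to the marked set.

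For the running time, the central observation is that a vertex of $G$ is marked at most once across the entire execution, because any attempt to mark an already-marked vertex causes immediate termination. Two regimes then arise. If $S$ is an open packing, the neighborhoods $\{N_G(u) : u \in S\}$ are pairwise disjoint by definition, so the array $Q$ (which records every vertex chosen from some $W$) lists each vertex of $\bigcup_{u \in S} N_G(u)$ at most once; thus the total number of inner-loop iterations is at most $|\bigcup_{u \in S} N_G(u)| \leq n$. If $S$ is not an open packing, then the algorithm halts the first time it tries to mark an already-marked vertex, so the total number of inner-loop iterations is at most (number of successful marks) $+\, 1 \leq n+1$. In both cases the outer-loop overhead is $O(|S|) = O(n)$, each inner step is $O(1)$ under the standard assumption of adjacency-list access and a constant-time marked/unmarked bit per vertex, and initialization is $O(n)$. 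Summing gives $O(n)$ total time.

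There is no serious obstacle here; the subtlety worth stating carefully is only the amortized argument that the at-most-once-marked property bounds the total inner work by $n+1$, even though we never explicitly bound the outer loop's work by the degrees of vertices in $S$. Once this is pointed out, both correctness and the runtime follow directly from the invariant above.
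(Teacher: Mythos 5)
Your proposal is correct and follows essentially the same approach as the paper's proof: the marked set tracks the union of neighbourhoods of already-processed vertices of $S$, a collision at a marked vertex is exactly a witness of a common neighbour, and the total work is bounded by $O(n)$ because each vertex is marked at most once and the algorithm halts at the first repeated visit. Your loop-invariant phrasing merely packages the paper's two-case correctness analysis (open packing vs.\ not) more compactly, and your amortized count of inner-loop iterations matches the paper's bound on the length of the visited list $Q$.
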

\noindent Lemma~\ref{thm-openpacking-testing} follows from Observation~\ref{openpacking-testing}.
\begin{proof}[Proof of Observation~\ref{openpacking-testing}] We start by giving some simple clarifications to get a clear understanding of our proof. We use the adjacency list of the graph $G$ as the data structure. We say that a vertex $u\in V(G)$ is visited by the algorithm if the vertex is chosen by at least one of the while loops during some stage of the algorithm. We call the while loop that is performed based on whether $U$ is empty or not the external while loop. We refer to the iterations performed by the external while loop as the iterations of the algorithm. The while loops that are performed based on the set $W$ are called the internal while loops. 
	
	\noindent We begin with the proof for the termination of the algorithm. Note that if the algorithm visits some vertex that is already marked, then the algorithm will terminate. Otherwise, the algorithm visits no marked vertex in any iteration of the algorithm. However, since $W$ is a finite set for every iteration of the external while loop, the internal while loop iterates only a finite number of times. Hence, every iteration of the external while loop will terminate. Also, since the size of $U$ is finite, the number of iterations performed by the algorithm is finite, and hence, the algorithm would terminate once $U$ becomes empty.
	
	\noindent Next, we prove the correctness of Algorithm~\ref{testing-algo-for-op}. Let $S=\{u_1,u_2,\ldots,u_k\}$. W.L.O.G., assume that the algorithm chooses the vertices in $S$ in the increasing order of their suffices. We prove the correctness of the algorithm in two cases based on whether the input vertex subset $S$ is an open packing in $G$ or not.\\
	\emph{Case 1:} The input vertex subset $S$ is an open packing in $G$.\\
	Since every vertex in $V(G)$ is unmarked during the initialization process, no vertex in $N_G(u_1)$ might be marked while examining them. For $2\leq i\leq k$, the vertices of $G$ that are marked during the previous iterations of the algorithm are in $\bigcup\limits_{s=1}^{i-1}N_G(u_s)$. Since $S$ is an open packing, $N_G(u_i)\cap N_G(u_s)=\emptyset$ for every $s< i$. These statements imply that no vertex in $N_G(u_i)$ is marked before the $i^{th}$ iteration. So, the algorithm will never encounter a marked vertex, and hence the algorithm will return `$S$ is an open packing in $G$'. Thus, the algorithm correctly determines the yes instances of the open packing sets of a graph.\\
	\emph{Case 2:} The input vertex subset $S$ is not an open packing in $G$.\\
	Since $S$ is not an open packing in $G$, there exist $i,s$ such that $1\leq i<s\leq k$ and $N_G(u_i)\cap N_G(u_s)\neq \emptyset$. W.L.O.G., assume that $u_s$ is the least indexed vertex such that there exists a vertex $u_i$ with $i<s$ and $N_G(u_i)\cap N_G(u_s)\neq \emptyset$. Note that $S'=\{u_1,\ldots,u_{s-1}\}$ is an open packing in $G$ by the choice of $s$. Hence, by the arguments in Case 1, the algorithm does not end before examining $N_G(u_s)$. Let $N_G(u_i)\cap N_G(u_s)=\{w_1,w_2,\ldots,w_p\}$ for some $p\leq n$. Note that every vertex in $N_G(u_i)$ is marked during $i^{th}$ iteration of the external while loop. So, the algorithm will encounter a marked vertex ($w_t$ for some $1\leq t\leq p$) while exploring $N_G(u_s)$ during the $s^{th}$ iteration. Hence, the algorithm will return `$S$ is not an open packing in $G$'. Thus, the algorithm correctly determines the no instances of the open packing sets of a graph.\\
	Thus, Cases 1 and 2 prove that Algorithm~\ref{testing-algo-for-op} correctly determines whether the input vertex subset $S$ is an open packing in $G$ or not.\\
	\noindent  Next, we prove that the algorithm runs in $O(n)$ time. To prove this, we show that the size of the list of vertices visited by the algorithm is at most $2n+1$. Let $Q=(v_1,v_2,\ldots,v_r)$ for some $r\in \mathbb{N}$. Then, it is clear that $Q$ is the list of vertices visited by the internal while loop (i.e., every vertex $v_j$ (for $1\leq j\leq r$) is visited as a neighbor of $u_i$ for some $i\in \{1,2,\ldots,k\}$) before the termination of the algorithm. We claim that $v_{j}\neq v_{\ell}$ for every $j,\ell\in \{1,2,\ldots,r-1\}$ and $j\neq \ell$. On the contrary, assume that there exist  $j,\ell\in \{1,2,\ldots,r-1\}$ such that $j\neq \ell$ but $v_{j}= v_{\ell}=u$ for some $u\in V(G)$. W.L.O.G., assume that $\ell$ is the least integer in $\{1,2,\ldots,r-1\}$ such that there exists $j<\ell$ with $v_{j}= v_{\ell}$. Then, the vertex $u(=v_{j})$ is marked by the algorithm when the internal while loop visits the $j^{th}$ vertex in $Q$. Also, by the assumption on $\ell$, every vertex in $(v_1,\ldots,v_{\ell-1})$ is distinct, and hence the algorithm does not terminate without visiting $\ell^{th}$ vertex of $Q$. But, when the $\ell^{th}$ vertex $u=v_\ell$ of the array $Q$ is visited by the internal while loop, the algorithm would terminate as the vertex is already marked while visiting the $j^{th}$ vertex ($u$) in $Q$, a contradiction to the assumption that  $r>\ell$ and the internal while loop visits $r$ vertices in $Q$ before the algorithm terminates. So, $v_{j}\neq v_{\ell}$ for every $j,\ell\in \{1,2,\ldots,r-1\}$ and $j\neq \ell$. This implies that the cardinality of the set $V'=\{v_1,v_2,\ldots,v_{r-1}\}$ is $r-1$. Since $V'\subseteq V(G)$, $r-1=|V'|\leq n$. Hence, $r\leq n+1$, i.e., the size of the list of vertices visited by the internal while loop is at most $n+1$. Note that for $Q=(v_1,v_2,\ldots,v_r)$ and $S=\{u_1,u_2,\ldots,u_k\}$, the list of vertices visited by the algorithm is a permutation of some subarray of $(v_1,v_2,\ldots,v_r,u_1,u_2,\ldots,u_k)$. We say subarray because the algorithm may terminate before visiting some vertices in $S$. Hence, the size of the list of vertices visited by the algorithm is at most $|S|+|Q|\leq 2n+1$. Also, note that the edges of $G$ are visited only during the internal while loop (for neighborhood testing). If $E'=(e_1,e_2,\ldots,e_q)$ is the list of edges visited by the algorithm with the suffices denoting the order in which the edges are visited, then by the construction of $Q$, it is clear that for every $e_i$ ($1\leq i \leq q$) visited by the algorithm, exactly one vertex ($v_i$) in $Q$ was visited. Since the size of $Q$ is at most $n+1$,  $q\leq n+1$. Hence, the algorithm runs in $3n+2=O(n)$ time.
\end{proof}
\section{Proofs related to Section~\ref{sec:sup-rk1}}\label{app-sec:claim_Gr}
 \begin{claim-alt}
	The graph $G_r$ defined in Remark~\ref{rem-tight-p4-rk1} is a $(P_4\cup rK_1)$-free graph.\\
	Recall that $V(G_r)=(\cup_{i=1}^r\{x_i,y_i,z_i\})\cup \{u,v\}$ and $E(G_r)=(\cup_{i=1}^r\{x_iy_i,y_iz_i,z_iu\})\cup \{uv\}$. On the contrary to claim, assume that there exists a $(P_4\cup rK_1)$, say $H$, in $G_r$. Let $\{a,b,c,d\}\in V(H)$ such that $G_r[{a,b,c,d}]\cong P_4$. Note that every $P_4$ in $G_r$ consists of the vertex $u$. So, $u\in \{a,b,c,d\}$. Similarly, observe that every $P_4$ in $G$ consists of exactly one vertex labelled $y_i$ for some $i=1,2,\ldots,r$. So, $y_i\in \{a,b,c,d\}$ for some $i=1,2,\ldots, r$. W.L.O.G., assume that $i=1$. Then, $\{a,b,c,d\}\in \{\{v,u,z_1,y_1\},\{z_j,u,z_1,y_1\},\{u,z_1,y_1,x_1\}\}$. Since $H\cong (P_4\cup rK_1)$ and $\{a,b,c,d\}\subseteq V(H)$ induces the $P_4$, $G_r[V(H)\setminus \{a,b,c,d\}]\cong rK_1$ and no vertex in $(V(H)\setminus \{a,b,c,d\})$ is adjacent to a vertex in $\{a,b,c,d\}$. So, for every possibility of $\{a,b,c,d\}$, $(V(H)\setminus \{a,b,c,d\})\subseteq \{x_2,y_2,\ldots, x_r,y_r\}$. But the independence number of $G_r[\{x_2,y_2,\ldots, x_r,y_r\}]$ is $(r-1)$, a contradiction to $(V(H)\setminus \{a,b,c,d\})\subseteq \{x_2,y_2,\ldots, x_r,y_r\}$ and $(V(H)\setminus \{a,b,c,d\})\cong rK_1$, i.e., the independence number of an induced subgraph of $G_r[\{x_2,y_2,\ldots, x_r,y_r\}]$ is at least $r$.
	\label{claim-gr-of-pr-rk1}
 \end{claim-alt}
 \begin{customlem}{\ref{p3rk1-op-bound}}
 		For $r\geq 1$, if $G$ is a connected ($P_3\cup rK_1$)-free graph, then $\rho^o(G)\leq 2r$.
 \end{customlem}
 \vspace{-3mm}
 \begin{proof}
 	We use induction on $r$ to prove this lemma as well. We begin with the proof that $\rho^o(G)\leq 2$ for every connected $(P_3\cup K_1)$-free graph $G$. Note that, by Lemma~\ref{p3rk1-td-bound}, $\rho^o(G)\leq \gamma_t(G)\leq 3$, for every connected $(P_3\cup K_1)$-free graph $G$. So, to prove the claim, it is enough to disprove the case $\rho^o(G)=3$. On the contrary, assume that there exists a $(P_3\cup K_1)$-free graph $G$ with $\rho^o(G)=3$. Since the open packing number of a connected $P_3$-free graph is at most two by Remark~\ref{rem-p3-free} and $\rho^o(G)=3$, $G$ must contain $P_3$ as an induced subgraph. Let $D=\{x_1,x_2,x_3\}$ be a subset of $V(G)$ that induces a $P_3$ in $G$ with $x_1x_2,x_2x_3\in E(G)$. Then, by the arguments in the proof of Lemma~\ref{p3rk1-td-bound}, $D$ is a total dominating set in $G$, i.e., $V(G)=\bigcup\limits_{i=1}^3N(x_i)$. Then, for every open packing $S$ in $G$, $S=\bigcup\limits_{i=1}^3(S\cap N(x_i))$. We know that $|S\cap N(x_i)|\leq 1$ for $i=1,2,3$. This, together with the fact that $S=\bigcup\limits_{i=1}^3(S\cap N(x_i))$, implies that for an open packing $S$ in $G$ of cardinality three, $|S\cap N(x_i)|=1$ for $i=1,2,3$. Let $ S\cap N(x_i)=\{z_i\}$. Then, $S=\{z_1,z_2,z_3\}$. So, $z_i\neq z_j$ for $i,j\in \{1,2,3\}$ and $i\neq j$ (else $|S|\leq 2$). Note that $S\cap N(x_1)=\{z_1\}$. So, $x_1z_2,x_1z_3\notin E(G)$. Using similar arguments, we can show that $x_iz_j\notin E(G)$ for $i,j\in \{1,2,3\}$ and $i\neq j$. Further, since $x_2x_3\in E(G)$ and $x_3z_1\notin E(G)$, $z_1\neq x_2 $. Similarly, $z_3\neq x_2$. Also, note that if $z_1z_3\in E(G)$, then $z_1z_2,z_2z_3\notin E(G)$ since $G[S]$ is $\{P_3,K_3\}$-free by Observation~\ref{obs-induced-s}. This, together with the fact that $z_3x_3\in E(G)$, implies that $z_2\neq x_3$. Then, $\{z_1,z_3,x_3\}\cup \{z_2\}$ induces a ($P_3\cup K_1$) in $G$, a contradiction. So, $z_1z_3\notin E(G)$. But, then $\{z_1,x_1,x_2\}\cup \{z_3\}$ induces a ($P_3\cup K_1$), a contradiction. So, $\rho^o(G)\leq 2$. Next, for every $r\geq 2$, we show that $\rho^o(G)\leq 2r$ for every connected $(P_3\cup rK_1)$-free graph $G$.\\
 	\textit{Induction Assumption}: Assume that for $r\geq 2$, $\rho^o(G')\leq 2(r-1)$ for every connected $(P_3\cup (r-1)K_1)$-free graph $G'$.\\
 	Let $G$ be a connected $(P_3\cup rK_1)$-free graph. If $G$ does not contain any vertex subset that induces a $(P_3\cup (r-1)K_1)$, then $G$ is $(P_3\cup (r-1)K_1)$-free, and so by induction assumption $\rho^o(G)\leq 2(r-1)< 2r$. So, assume that $G$ has a vertex subset $D=\{x_1,x_2,x_3,y_1,y_2,\ldots,y_{r-1}\}$ that induces a $(P_3\cup (r-1)K_1)$ in $G$ with $x_ix_{i+1}\in E(G)$ for $i=1,2$. Since $G$ is $(P_3\cup rK_1)$-free, every vertex in $V(G)\setminus D$ is adjacent to some vertex in $D$, i.e., $D$ is a dominating set in $G$. So, $V(G)=N[x_1]\cup N[x_2]\cup N[x_3]\cup N[y_1]\cup N[y_2]\cup \ldots \cup N[y_{r-1}]$. This implies that, for any open packing $S$ in $G$, $S=\bigcup\limits_{u\in D}(S\cap N[u])$. Note that $|S\cap N[u]|\leq |S\cap N(u)|+|S\cap \{u\}|\leq 2$. Also, since $x_1,x_2,x_3\in N(x_1)\cup N(x_2)$, $\bigcup\limits_{i=1}^3N[x_i]=\bigcup\limits_{i=1}^3N(x_i)$. So, $|S\cap (\bigcup\limits_{i=1}^3 N[x_i])|= |S\cap (\bigcup\limits_{i=1}^3 N(x_i))|\leq 3$. Therefore, $|S|=|S\cap V(G)|=|S\cap (N[x_1]\cup N[x_2]\cup N[x_3]\cup N[y_1]\cup N[y_2]\cup \ldots \cup N[y_{r-1}])|\leq 3+2(r-1)=2r+1$. Thus, to prove the claim, it is enough to contradict the possibility of $|S|=2r+1$. Assume that there exists an open packing $S$ of size $2r+1$ in $G$. Then, by the above arguments, we can conclude that $|S\cap N[y_j]|=2$ for $j=1,2\ldots,r-1$ and $|S\cap (\bigcup\limits_{i=1}^3 N(x_i))|= 3$. Since $|S\cap N(y_j)|\leq 1$ and $|S\cap N[y_j]|=2$, $y_j\in S$ and $S\cap N(y_j)\neq \emptyset$. Let $S\cap N[y_j]=\{w_j,y_j\}$ for $j=1,2,\ldots,r-1$ and let $S\cap (\bigcup\limits_{i=1}^3 N(x_i))=\{z_1,z_2,z_3\}$ with $x_iz_i\in E(G)$ (then, since $|S\cap N(x_i)|\leq 1$, we have $S\cap N(x_i)=\{z_i\}$ for every $i=1,2,3$). So, $S=\left(\bigcup\limits_{i=1}^3\{z_i\}\right)\cup \left(\bigcup\limits_{j=1}^{r-1}\{y_j\}\right)\cup \left(\bigcup\limits_{j=1}^{r-1}\{w_j\}\right)$. This, together with the fact that $|S|=2r+1$, implies that every vertex in the $(2r+1)$-tuple $(z_1,z_2,z_3,y_1,y_2,\ldots,y_{r-1},w_1,w_2,\ldots,w_{r-1})$ is distinct. Further, since (i) $S\cap N(x_i)=\{z_i\}$ and (ii) $S\cap N(y_i)=\{w_j\}$, we have $z_su\notin E(G)$ for $s\in \{1,2,3\}$ and $u\in D\setminus\{x_s\}$. We complete our claim in two cases based on $z_1z_3\in E(G)$ or not.\\
 	\emph{Case 1:} $z_1z_3\in E(G)$.\\
 	Then, by Observation~\ref{obs-induced-s}, $z_1z_2,z_2z_3\notin E(G)$. This implies that $z_2\neq x_3$. So, $\{x_1,z_1,z_3\}\cup \{z_2,y_1,y_2,\ldots,y_{r-1}\}$ induces a $(P_3\cup rK_1)$ in $G$, a contradiction.\\
 	\emph{Case 2:} $z_1z_3\notin E(G)$.\\
 	Then, $\{z_1,x_1,x_2\}\cup \{z_3,y_1,y_2,\ldots,y_{r-1}\}$ induces a $(P_3\cup rK_1)$ in $G$, a contradiction.\\
 	In both cases, we arrive at a contradiction. Therefore, $\rho^o(G)\leq 2r$.
 \end{proof}
 
\section{Proof Related to Section~\ref{sec:split}}\label{app:sec-split}
\begin{customclaim}{\ref{claim-s-ind-iff-sub}}
	A set $S(\subseteq V(G))$ is an independent set in the input graph $G$ of Construction~\ref{k-1,4-split-op-construct} if and only if $S$ is an open packing in the output graph $G'$.\\ Suppose that $S\subseteq V(G)$ is an independent set in $G$. On the contrary to Claim~\ref{claim-s-ind-iff-sub}, assume that $S$ is not an open packing in $G'$. Then, there exist vertices $u,u'\in S$ and $w\in V(G')$ such that $w\in N_{G'}(u)\cap N_{G'}(u')$. Since $u,u'\in V(G)\subseteq I$, $w\in C=E(G)\cup\{y,z\}$. By the construction of $G'$, $uy,uz\notin E(G')$ for any $u\in V(G)$. Thus, $w=e$ for some $e\in E(G)$. Then, by the construction of $G'$, $e=uu'\in E(G)$, a contradiction to $S$ being an independent set in $G$.\\
	Conversely, suppose that $S\subseteq V(G)$ is an open packing in $G'$. If $S$ is not an independent set in $G$, then there exist $u,u'\in S$ and an $e\in E(G)$ such that $uu'=e$. Then, by the construction of $G'$, $e\in N_{G'}(u)\cap N_{G'}(u')\neq \emptyset$, a contradiction.  
\end{customclaim}
\begin{theorem}
	\maxopenpack\ is hard to approximate within a factor of $N^{(\frac{1}{2}-\epsilon)}$ for any $\epsilon>0$ in $K_{1,4}$-free split graphs unless P = NP, where $N$ is the number of vertices in $K_{1,4}$-free split graphs.
	\label{k-1,4-inapprx}
\end{theorem}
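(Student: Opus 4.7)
The plan is to mirror the proof of Theorem~\ref{thm-k_1,3-inapprx} almost verbatim, substituting Construction~\ref{k-1,4-split-op-construct} for Construction~\ref{construct-op-id-k_1,3}. I would assume for contradiction the existence of a polynomial-time $N^{(1/2-\delta)}$-factor approximation algorithm $\mathcal{A}$ for \maxopenpack\ on $K_{1,4}$-free split graphs for some $\delta>0$, and use it to build an $n^{(1-\epsilon)}$-factor approximation for \maxindset\ on general graphs, contradicting Theorem~\ref{hastard-ind-inapprx}.

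Given a simple graph $G$ on $n$ vertices, I would first apply Construction~\ref{k-1,4-split-op-construct} to obtain the $K_{1,4}$-free split graph $G'$. By the guarantee of that construction, $\rho^o(G') = \alpha(G)+1$. Note $N:=|V(G')|=n+m+3\leq n^2$ for all $n\geq 4$. I would then run $\mathcal{A}$ on $G'$ to obtain an open packing $S$ of $G'$ satisfying $|S|>\rho^o(G')/N^{(1/2-\delta)}$. Using the argument of Case 2 in the proof of guarantee of Construction~\ref{k-1,4-split-op-construct}, the set $S':=S\setminus(E(G)\cup\{x,y,z\})$ is contained in $V(G)$, is an open packing of $G'$ with $|S'|\geq |S|-1$ (since $|S\cap(E(G)\cup\{x,y,z\})|\leq 1$ by Claim~\ref{s-cap-clique-claim}), and hence is an independent set in $G$ by Claim~\ref{claim-s-ind-iff-sub}. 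Substituting $N\leq n^2$ and $\rho^o(G')=\alpha(G)+1$ yields
\[
|S'|\;\geq\;|S|-1\;>\;\frac{\alpha(G)+1}{N^{(1/2-\delta)}}-1\;\geq\;\frac{\alpha(G)}{n^{(1-2\delta)}}-1.
\]

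The main obstacle is the additive $-1$ term, which does not appear in Theorem~\ref{thm-k_1,3-inapprx} because there the guarantee was exact ($\rho^o(G')=\alpha(G)$). I would absorb this loss by a standard self-improvement trick: apply the reduction not to $G$ but to the disjoint union $G^{(t)}$ of $t$ copies of $G$, where $t=\lceil n^{\delta}\rceil$. Then $\alpha(G^{(t)})=t\cdot\alpha(G)$, while the ambient size grows only polynomially, so the additive constant becomes negligible compared to the multiplicative gap. Choosing $\epsilon=2\delta$ (after shrinking $\delta$ slightly to absorb the blow-up from the amplification), the extracted independent set $S'$ in $G^{(t)}$, projected back to a single copy of $G$ by pigeonholing, would satisfy $|S'|>\alpha(G)/n^{(1-\epsilon)}$, producing the promised $n^{(1-\epsilon)}$-factor approximation for \maxindset\ and contradicting Theorem~\ref{hastard-ind-inapprx}. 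The remaining steps are purely mechanical arithmetic on the size bounds, parallel to the calculation already given in the proof of Theorem~\ref{thm-k_1,3-inapprx}.
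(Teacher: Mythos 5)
Your overall reduction is exactly the paper's: run the hypothetical approximation algorithm on the graph $G'$ of Construction~\ref{k-1,4-split-op-construct}, strip the at most one vertex of $S$ lying in $E(G)\cup\{x,y,z\}$ (Claim~\ref{s-cap-clique-claim}), and read off an independent set of $G$ of size $|S|-1$, contradicting Theorem~\ref{hastard-ind-inapprx}. You also correctly isolate the only new difficulty relative to Theorem~\ref{thm-k_1,3-inapprx}, namely the additive $-1$ coming from $\rho^o(G')=\alpha(G)+1$.

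However, the device you propose to absorb that $-1$ does not work as stated. With $t=\lceil n^{\delta}\rceil$ copies of $G$, the quantity you need to dominate the constant is $\rho^o(G'^{(t)})/N^{1/2-\delta}\approx t\alpha(G)/(tn)^{1-2\delta}$, and when $\alpha(G)=O(1)$ this is at most about $n^{2\delta+2\delta^{2}-1}$, which tends to $0$ for small $\delta$; so after dividing by $t$ in the pigeonhole step your lower bound $\alpha(G)/(tn)^{1-2\delta}-1/t$ can be negative and the argument yields nothing. Amplifying further (taking $t$ large enough that $t^{2\delta}\alpha(G)\geq 2n^{1-2\delta}$) forces $t\approx n^{1/(2\delta)-1}$, and then the ambient size $(tn)^{1-2\delta}$ blows up past $n$, so the projected guarantee is weaker than $n^{1-\epsilon}$ and no contradiction is obtained. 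The gap is repairable without amplification: either observe that when $\alpha(G)\leq n^{1-\epsilon}$ any single vertex already achieves the target ratio, so one may assume $\alpha(G)/n^{1-2\delta}\geq n^{\delta}\gg 1$ and the $-1$ is negligible; or, as the paper does, note that $|S|\geq 2$ may be assumed, whence $|S|-1\geq |S|/2$, converting the additive loss into a factor of $2$ that is then absorbed into $n^{\epsilon}$ for all $n\geq n_0$ with $n_0^{\epsilon}>2$. Either patch completes your proof; the amplification step as written does not.
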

\begin{proof}
	On the contrary, assume that \maxopenpack\ admits a $\displaystyle N^{\left(\frac{1}{2}-\epsilon\right)}$-factor approximation algorithm for some $\epsilon>0$ in $K_{1,4}$-free split graphs. Then, an open packing $S$ of the graph $G'$ constructed in Construction~\ref{k-1,4-split-op-construct} with $\displaystyle |S|> \frac{\rho^o(G)}{|V(G')|^{\left(\frac{1}{2}-\epsilon\right)}}$ can be found in polynomial time. Hence, an independent set $U$ of the input graph $G$ in Construction~\ref{k-1,4-split-op-construct} with $|U|=|S|-1$ can be found in polynomial time. Note that $N=|V(G')|=n+m+3\leq 2n+\frac{n^2}{2}\leq n^2$ for $n\geq 4$. Then,
	%\begin{linenomath*}
	\begin{align*}
		|S|&>\frac{\rho^o(G')}{N^{\left(\frac{1}{2}-\epsilon\right)}}\hspace{2cm}\\
		|S|&>\frac{\rho^o(G')}{(n+m+3)^{\left(\frac{1}{2}-\epsilon\right)}}\hspace{2cm}\\
		2(|S|-1)&> \frac{\rho^o(G')}{(n^2)^{\left(\frac{1}{2}-\epsilon\right)}} \hspace{5mm} \text{for }|S|\geq 2\text{ and }n\geq 4\\
		|S|-1&>\frac{1}{2}\cdot\frac{\alpha(G)+1}{n^{1-2\epsilon}}\\
		|U|&>\frac{1}{2}\cdot\frac{\alpha(G)}{n^{1-2\epsilon}}
	\end{align*}
	%\end{linenomath*}
	Since $\rho^o(G')=\alpha(G)+1>\alpha(G)$ by Guarantee of Construction~\ref{k-1,4-split-op-construct}.\\
	Fix $n_0$ such that $\displaystyle n_0^\epsilon>2\implies\frac{1}{n_0^\epsilon}<\frac{1}{2}$.
	Then, for $n\geq \max\{n_0,4\}$, we have
	%\begin{linenomath*}
	\begin{align*}
		|U|&>\frac{\alpha(G)}{n^\epsilon\cdot n^{1-2\epsilon}}\\
		|U|&>\frac{\alpha(G)}{n^{1-\epsilon}}
	\end{align*}
	%\end{linenomath*}
	The above inequality implies that \maxindset\ has a $(n^{1-\epsilon})$-factor approximation algorithm, which contradicts Theorem~\ref{hastard-ind-inapprx}.
\end{proof} 
\end{subappendices}
\end{document}